\documentclass[opre,a4paper,12pt]{article}[geometry]

\usepackage{amsmath,amssymb,amsfonts,amsthm}
\usepackage{adjustbox}
\usepackage{afterpage}
\usepackage{anysize}
\usepackage{array}
\usepackage{bibentry,natbib}
\usepackage{blindtext}
\usepackage{bm,bbm}
\usepackage{booktabs}
\usepackage{breakurl}
\usepackage{calrsfs}
\usepackage{cancel}
\usepackage{caption}
\usepackage{changepage}
\usepackage{dsfont}
\usepackage{enumitem}
\usepackage{epsfig,epstopdf}
\usepackage{etoolbox}
\usepackage{fancyhdr} 
\usepackage{float}
\usepackage{setspace}
\usepackage[norule]{footmisc}
\usepackage[colorlinks=true,linkcolor=black,urlcolor=black,citecolor=black]{hyperref}
\usepackage{footnotebackref}
\usepackage{framed}
\usepackage[margin=1in]{geometry}
\usepackage{graphicx}
\usepackage{indentfirst}
\usepackage{lipsum}
\usepackage{longtable}
\usepackage{lscape}
\usepackage{mathrsfs,mathtools,mathptmx}
\usepackage{makecell}
\usepackage{multicol,multirow}
\usepackage{natbib}
\usepackage{pgf,pgfplots}
\usepackage{pdflscape}
\usepackage{pdfpages}
\usepackage{placeins}
\usepackage{rotate,rotating,capt-of,varwidth}
\usepackage{setspace}
\usepackage{sgame,sgamevar}
\usepackage{subfigure, subcaption}
\usepackage{tabularx,tabulary}
\usepackage[para]{threeparttable}
\makeatletter
\patchcmd{\TPT@measurement}
  {\xdef\TPT@hsize}
  {\ifdim\hsize>\textwidth\hsize\textwidth\fi\xdef\TPT@hsize}
  {}{\PackageError{Revision tables}{Unable to constrain table-note width}{}}
\makeatother
\usepackage{tikz}
\usepackage{times}
\usepackage{titlesec}
\usepackage{todonotes}
\usepackage{ulem}
\usepackage{url}
\usepackage{verbatim}
\usepackage{xcolor}
\usepackage{yhmath}

\usepackage[titletoc]{appendix}

\makeatletter
\renewcommand{\thanks}[1]{%
  \footnotemark
  \protected@xdef\@thanks{\@thanks
    \protect\footnotetext[\the\c@footnote]{%
      \protect\setstretch{\setspace@singlespace}#1}%
  }%
}
\makeatother

\graphicspath{{./Graphs/}}
\usepackage{natbib}
\bibpunct[, ]{(}{)}{,}{a}{}{,}%
\newtheorem{thm}{\protect\theoremname}

\theoremstyle{definition}
\newtheorem{defn}[thm]{\protect\definitionname}
\theoremstyle{plain}
\newtheorem*{prop*}{\protect\propositionname}
\newtheorem{prop}{Proposition}
\newtheorem{result}{Result}

\newtheorem{pred}{Prediction}

\newtheorem{hypo}{Hypothesis}

\newcolumntype{L}[1]{>{\raggedright\let\newline\\\arraybackslash\hspace{0pt}}m{#1}}
\newcolumntype{C}[1]{>{\centering\let\newline\\\arraybackslash\hspace{0pt}}m{#1}}
\newcolumntype{R}[1]{>{\raggedleft\let\newline\\\arraybackslash\hspace{0pt}}m{#1}}
\newcolumntype{d}[1]{>{\rightdots{#1}}r<{\endrightdots}}

\providecommand{\definitionname}{Definition}
\providecommand{\propositionname}{Proposition}
\providecommand{\theoremname}{Theorem}
\providecommand{\conjecturename}{Conjecture}
\providecommand{\theoremname}{Theorem}

\usepackage{calrsfs}
\DeclareMathAlphabet{\pazocal}{OMS}{zplm}{m}{n}

\title{\Large 

Biased Agents, Extreme Beliefs: \\ Motivated Reasoning Under Competing Models\thanks{First draft: August 31, 2026. I am indebted to Timothy Cason, David Gill, Matthew Kovach, and Colin Sullivan for invaluable guidance. I thank Cuimin Ba, James Bland, John Conlon, Tony Fan, Stanton Hudja, Andrew Olsen, Yaroslav Rosokha, Peter Schwardmann, Elchin Suleymanov, Stephanie Wang, Zitian Wang, Daniel Woods, Sevgi Yuksel, Junya Zhou, and audiences at various seminars and conferences for helpful comments. The study was reviewed by Purdue University's Institutional Review Board (IRB-2024-1786). Financial support from the Krannert Graduate Institute at Purdue University and Institute for Humane Studies is acknowledged.}

}

\author{\large{Zhongheng Qiao}\thanks{Purdue University; qiao50@purdue.edu} \\ \normalsize{Purdue University}}

\author{
     Zhongheng Qiao\thanks{Purdue University, 403 Mitch Daniels Blvd, West Lafayette, IN 47907, qiao50@purdue.edu}
}

\date{This Version: September 17, 2026}

\begin{document}

\vspace{-1cm}

\hypersetup{pageanchor=false}
\maketitle
\thispagestyle{empty}

\begin{center}
\vspace{-1cm}
\href{https://zhonghengqiao.com/between_stories.pdf}{\textcolor{blue}{Click here for the latest version}}
\end{center}  

\begin{spacing}{1}
\begin{center}
\begin{abstract}
\noindent People often face environments where multiple models compete to explain the same observations. This paper examines how people update beliefs in such settings and how preferences over payoff-relevant states shape model selection and belief updating. This paper first develops a framework where preference-driven bias distorts the perceived model, affecting Bayesian and best-fit updating differently. In a laboratory experiment, most participants are classified as Bayesian updaters, who average across models, while a substantial minority are classified as best-fit updaters, who select the model that best fits the observed signal. Within-participant comparisons between the symmetric payoff and asymmetric payoff conditions indicate that asymmetric payoffs shift reported beliefs toward the preferred state, particularly among participants classified as best-fit updaters. Relative to symmetric payoffs, asymmetric payoffs increase the reported belief of the preferred state by about 8 percentage points among best-fit updaters, while the estimated effect among Bayesian updaters is close to zero. These findings help us better understand model-based learning and have implications for domains such as political polarization and financial investment, where competing narratives and strong preferences often coexist. \\

\vspace{0.25cm}
    \noindent\textbf{Keywords:} Belief Updating, Model Uncertainty, Motivated Reasoning, Narratives, Mental Models, Polarization, Experiments \\
    \vspace{0.25cm}
    \noindent\textbf{JEL Codes:} D01, D81, D83, D91\\
\end{abstract}
\end{center}
\end{spacing}



\clearpage
\setcounter{page}{1}
\hypersetup{pageanchor=true}
\newpage

\vspace{-0.5cm}

\section{Introduction}\label{sec:introduction}

Individuals routinely encounter complex learning problems in which the same information can be explained in different ways. Voters are exposed to competing stories about election candidates and economic policies. Investors receive conflicting evaluations of corporate performance. Jurors hear competing narratives about the same evidence. People can draw different conclusions from the same information \citep{hastorf1954they,lord1979biased,gaines2007same}, and their preferences or motivations may affect which interpretation they find convincing \citep{kunda1990case}. These examples illustrate two common features of the environments people face. First, multiple competing models can explain the same observed information and imply different inferences about the unobserved state of the world. Second, people may have intrinsic preferences over unobserved states of the world, such as a favored candidate being the legitimate election winner or a defendant being guilty.

These features raise a central question: how do individuals use competing models to interpret the same information? Under Bayesian updating, individuals update the weight assigned to each model and then average the model-specific posterior probabilities using those posterior model weights. An alternative is to select a single model according to a decision-relevant criterion, such as how well it fits the observed information \citep{schwartzstein2021using,aina2023tailored}. Motivated reasoning may work differently under these behavioral rules. Individuals who average across competing models may place more weight on a model that supports their preferred state. Individuals who select a single model may switch to another model, adopting a different explanation \citep{angrisani2024competing}. The same preferences therefore may have different effects on posterior beliefs for individuals following different behavioral rules.

This paper examines how motivated reasoning affects individuals' selection among models and belief updating in a controlled multi-model environment. Here, a model is a data-generating process that maps unobserved states of the world into distributions over observable signals.\footnote{Throughout the paper, ``models'' denote these data-generating processes. ``Narratives'' and ``stories'' refer to causal interpretations linking observed information to unobserved states.} Specifically, this paper focuses on three interrelated questions: (1) To what extent do individuals' preferences over states affect model selection and belief updating? (2) What behavioral rules do individuals follow to select models and update beliefs in such environments? (3) Do motivated-reasoning effects differ across individuals who follow different behavioral rules?

To answer these questions, I develop a theoretical framework and design a laboratory experiment. In the framework, agents want to learn an unobserved state from information that can be explained by competing models. They have priors over states and models, and each state is associated with a payoff. Preference-driven distortion operates through agents' mental representation of the information structure. States associated with higher payoffs receive greater weight in this mental representation. The distortion affects both how agents interpret the observed information under each model and how well each model appears to fit that information. I apply the same distortion under the two behavioral rules. For agents following the Bayesian behavioral rule (Bayesian agents), it affects model weights and the posteriors implied by each model. For agents following the Best-fit behavioral rule (Best-fit agents), it affects the posteriors implied by each model and may also reverse model selection. An agent may switch to a model under which the same information implies a higher probability for the preferred state. Such a switch can generate a large change in posterior beliefs. The framework characterizes the distortion required for these switches and compares the belief changes across behavioral rules. 

The experiment uses a \(2 \times 2 \times 2\) belief-updating task: two states, two signals, and two models. In the classic ball-and-urn setup, participants observe two carts, A and B; two urns in each cart, X and Y; and purple and orange balls in each urn. The computer selects X or Y according to the given state prior and independently selects either cart with equal probability. Participants observe the color of a ball drawn from the selected urn and report their belief that Urn X was selected. The labels X and Y represent the states, ball colors represent signals, and the two carts represent models. Each participant completes a symmetric payoff (\textit{SP}) block and an asymmetric payoff (\textit{AP}) block of belief updating tasks in random order. In the symmetric payoff block, participants receive the same state-dependent payment whether Urn X or Urn Y is selected. This block provides a benchmark for their updating behavior without a preference for either state. In the asymmetric payoff block, the payment is higher when Urn X is selected. Participants therefore have a reason to prefer Urn X, while the priors and information structures remain unchanged. Each block contains the same nine decision problems, with each problem repeated three times. The decision problems vary in state priors and model pairs. I use reported beliefs from the symmetric payoff block to distinguish the behavioral rules participants follow. I then compare reported beliefs across blocks, controlling for the decision problem and observed signal. This within-subject design allows me to examine how preferences affect belief updating for participants following different behavioral rules. The experiment also measures participants' probabilistic reasoning ability, overconfidence, and risk attitudes, and elicits their opinions about competing economic narratives. The theoretical framework and experimental design motivate three hypotheses. First, asymmetric payoffs shift reported beliefs toward the preferred state. Second, this effect is larger on average among Best-fit participants than among Bayesian participants. Third, the average effect among Best-fit participants increases as the objective fit gap, the difference in how well the two models fit the observed signal, narrows.

The laboratory results show that asymmetric payoffs shift beliefs toward the preferred state, with the largest effects among Best-fit participants. I first apply a finite mixture model to classify participants using only their reported beliefs in the symmetric payoff block. More than 60\% of participants are classified as Bayesian, and around 17\% are classified as Best-fit. The remainder are classified as Non-movers. Thus, most participants weight competing models, while a substantial minority rely on a single model that fits the signal best, which provides evidence on the Best-fit behavioral rule studied in recent theoretical and experimental work \citep{aina2023tailored,aina2025weighting}. Asymmetric payoffs increase the reported beliefs of the preferred state by 1.7 percentage points overall. The effect is 8.4 percentage points among Best-fit participants and 0.5 percentage points among Bayesian participants. Best-fit participants respond significantly more strongly to asymmetric payoffs than Bayesian participants. This difference remains robust under alternative sets of behavioral types. It also holds when I allow different types to have different levels of noise. In that specification, about one-third of participants are classified as Best-fit, and asymmetric payoffs still have a significantly larger effect among Best-fit participants than among Bayesian participants.

These results support Hypotheses~1 and~2. The estimates for Hypothesis~3 have the predicted sign but are not statistically significant. These findings suggest that the behavioral rules individuals follow matter for motivated reasoning. The same preferences may affect reported beliefs differently depending on how individuals use competing models to interpret information. Best-fit participants also report more extreme beliefs in the symmetric payoff block. This difference persists in the decision problem where both behavioral rules predict the same beliefs. Finally, an unincentivized survey suggests a connection between participants' updating behavior in the laboratory and their views about the world. Participants who hold more extreme views on economic narratives are more likely to be classified as Best-fit.

The remainder of this paper is organized as follows. Section~\ref{sec:literature} reviews the related literature and positions this paper's contributions. Section~\ref{sec:theory} presents the theoretical framework, characterizing model selection and belief updating under preference-driven distortion. Section~\ref{sec:experiment} presents the experimental design and procedures. Section~\ref{sec:results} presents the main results and further discussion. Section~\ref{sec:conclusion} concludes. 

\section{Related Literature} \label{sec:literature}

Learning often takes place when several models can explain the same observed information. In such settings, agents need to draw inferences within each model and apply a rule for using the models. This paper studies how preferences over payoff-relevant states affect both parts of this problem.\footnote{\citet{kunda1990case} and \citet{benabou2016mindful} provide broad discussions of motivated beliefs. \citet{battigalli2022belief} review belief-dependent motivations and psychological games.} The framework has three central features. First, the same information can have opposite implications for the preferred state under different models. Second, a preference-driven distortion changes model-specific posterior probabilities and how agents perceive each model. Third, the effect of this distortion depends on the behavioral rule agents follow to weight or select models.

This paper first contributes to the literature on motivated reasoning. The same information may imply different posterior probabilities under different models, and whether the information supports the preferred state depends on how agents weight or select the models. Preferences can therefore affect model-specific posterior probabilities, model weights, or model selection, which creates an additional channel for motivated reasoning. In most existing studies, information is unambiguously good or bad news, which leaves no room to change its valence. The closest empirical work on motivated reasoning falls into two broad groups. The first group studies motivated reasoning about self-relevant beliefs.\footnote{For example, this literature studies intelligence and perceived ability (e.g., \citealp{eil2011good, mobius2022managing}), task performance (e.g., \citealp{ertac2011does, coutts2019good, drobner2024motivated}), self-serving fairness (e.g., \citealp{babcock1995biased}), persuasion and self-persuasion (e.g., \citealp{schwardmann2019deception, schwardmann2022self}), and motivated or biased memory (e.g., \citealp{zimmermann2020dynamics, huffman2022persistent, conlon2026memory}).} \citet{coutts2024blame} show that aggregate feedback jointly determined by own and teammate performance leads participants to distort beliefs about both dimensions, which supports self-serving beliefs about their own ability. \citet{bolte2024motivated} find that participants judge identical ego-favorable reports as more likely to represent independent information than identical ego-unfavorable reports. They find no significant asymmetry in posterior beliefs about own test performance. Like my experiment, these designs allow participants to interpret the same information in different ways. However, whether the information is good or bad news remains fixed in these settings.


The second group studies motivated reasoning about uncertain future outcomes or other world states that are not self-relevant. \citet{mayraz2011wishful} shows that randomly assigned stakes shift beliefs toward payoff-favorable outcomes. The shift is larger when subjective uncertainty is greater. \citet{charness2017confirmation} also induce preferences over states through payoffs within a known signal structure. Participants with a preferred state update closer to Bayes overall. \citet{barron2021belief} varies state-contingent payoffs within a known information structure and finds approximately Bayesian updating on average, without a significant good-news asymmetry.\footnote{Evidence for asymmetric belief updating is mixed across settings, including self-relevant ones. Some experiments find no asymmetry, stronger responses to bad news, or effects that depend on the setting \citep{ertac2011does, coutts2019good, thaler2026good}. \citet{drobner2022motivated} finds optimistic updating when participants expect no immediate resolution and neutral updating under immediate resolution. \citet{engelmann2024anticipatory} find wishful thinking in the loss domain but not in the gain domain, and stronger when the evidence is more ambiguous.} Like these studies, I use state-dependent payoffs to create a preferred state and extend this setting to competing models, so preferences can also affect how agents weight or select models.\footnote{Related work studies politically motivated beliefs about news, events, or facts \citep{kimball2024happiness, thaler2024fake, barron2025motivated}. Another strand studies how motivated beliefs are exchanged or supplied in social settings \citep{oprea2022social, thaler2021supply} and how senders react to wishful thinkers \citep{filiz2026information}. Other work studies motivated reasoning in asset valuation or project continuation \citep{wang2026motivated, qiao2025sunk}.}

Recent work studies motivated reasoning when the source or accuracy of information is uncertain. \citet{thaler2024fake} identifies motivated reasoning through trust in information sources. After reporting a guess on politicized questions or on their own performance, subjects receive a message on whether the correct answer lies above or below that guess. The message may come from one of two sources, one always correct and one always incorrect. The experiment shows that subjects overtrust messages that favor their own political party or performance. \citet{avtar2026attribution} study how motivated reasoning affects beliefs about the source of ego-relevant information. Subjects report the probability that a message came from the more informative of two sources. They find that motivation shifts source beliefs in an ego-favorable direction, especially when knowing the source matters more for inferring the state. \citet{hu2025motivated} instead study ego-relevant beliefs about the state when feedback accuracy is known, compound-uncertain with stated probabilities, or ambiguous. They find that asymmetric updating is strongest with known accuracy and absent under ambiguity. I use state-dependent payoffs in an abstract learning environment to isolate preferences from self-image. In this learning environment, competing models provide conflicting inferences about the uncertain states, so the valence of information depends on how subjects use the models. In addition, even when the model is held fixed, motivated reasoning may affect the model-specific posteriors. This paper therefore examines how motivated reasoning affects belief updating through both channels, and how its effect differs across the behavioral rules subjects follow.

This paper also develops a theoretical framework linking motivated reasoning to model uncertainty. \citet{kovach2020twisting} axiomatizes wishful thinking by allowing an agent's endowed act to shift beliefs toward states in which that act yields higher utility. Building on this idea, I apply a preference-driven distortion separately within each competing model. The distortion raises the perceived likelihood of states that yield higher utility.\footnote{\citet{ba2024over} develop and test a two-stage model. Agents first form a mental representation of the information structure by overweighting states most representative of the observed signal. They then process this representation imprecisely. The present paper also uses a distorted mental representation of the information structure that overweights the preferred states.} Within each model, it changes model-specific posterior beliefs and subjective fit. \citet{chambers2023coherent} characterize distortions that produce the same posterior whether they are applied before or after conditioning on a signal. The distortion in the present framework belongs to this class. \citet{liu2026confirmation} studies two sources of bias within a known information structure. Beliefs shift toward states that initially appear more likely and toward states that yield higher utility. I focus on the second effect and study how it changes inference across competing models.\footnote{Another strand of the literature on non-Bayesian updating studies related distortions that depend on the agent's current beliefs over states (e.g., \citealp{rabin1999first, fryer2019updating}; see \citealp{benjamin2019errors} for a systematic review). In this paper, the distortion depends on the payoff from each state.} A complementary literature gives beliefs or expectations a direct role in utility \citep{brunnermeier2005optimal, battigalli2009dynamic, battigalli2019incorporating}. \citet{brunnermeier2005optimal} let subjective expectations of future utility flows enter current felicity. Optimal beliefs balance anticipatory gains from optimism against losses caused by distorted decisions.\footnote{\citet{battigalli2009dynamic} allow utility to depend on updated beliefs and beliefs about other players' beliefs. \citet{battigalli2019incorporating} embed belief-dependent motivations into dynamic games.} In the present framework, utility depends only on the realized state. State-dependent payoffs determine the preference-driven distortion of model fits and posterior beliefs.

This paper also contributes to the literature on model selection. It develops a common framework in which preference-driven distortion operates under the Bayesian and best-fit behavioral rules. The framework predicts that the distortion affects agents following these rules differently. The closest theoretical work studies fit-based model selection.\footnote{The best-fit, or maximum likelihood, rule is also used in other settings. \citet{gilboa1993updating} axiomatize the maximum likelihood update rule for multiple priors, in which the agent keeps the priors that make the observed event most likely. Later work extends the rule (e.g., \citealp{epstein2007learning, kovach2024ambiguity, suleymanov2025robust}). In this paper, the priors over states and models are explicitly given to the participants.} \citet{schwartzstein2021using} study persuasion when a receiver selects the model that makes the observed data most likely, given the receiver's prior over states. \citet{aina2023tailored} builds on this fit-based rule in a setting where a persuader supplies multiple models before the signal is realized. The author also notes Bayesian model averaging as an alternative, under which posterior model probabilities weight model-specific posteriors.\footnote{Related theory also focuses on criteria other than fit \citep{yang2023criterion}, strategic supply of models \citep{jain2023informing}, and model sharing \citep{schwartzstein2022shared}. Other theory studies belief updating on different grounds, such as minimal movement from priors \citep{dominiak2023inertial, dominiak2025inertial}.} I adopt the best-fit behavioral rule from this work alongside the Bayesian rule and connect both rules to motivated reasoning.\footnote{A separate literature studies learning when the true model may be absent from the learner's set \citep{bohren2021learning, montiel2022competing, fudenberg2023misspecifications, ba2026robust}.}

Among experiments with competing models, \citet{aina2025weighting} provide the closest benchmark in a setting without induced preferences over states: subjects update beliefs when the signal may come from two conflicting models. Subjects observe both models' predictions, which separates model weighting from errors in probabilistic reasoning. They find that the most frequent updating rule gives full weight to the model that best fits the observed data. I study a similar updating problem and ask how motivated reasoning affects participants who follow different behavioral rules. The symmetric payoff (\textit{SP}) block in my paper provides a comparable environment, except that participants do not observe these model-specific predictions. Based on beliefs from the \textit{SP} block, a nontrivial proportion of participants are classified as Best-fit updaters. However, the most common behavioral rule in the \textit{SP} block is the Bayesian rule, which differs from their finding. The asymmetric payoff (\textit{AP}) block adds the motivated reasoning component, which is the margin this paper studies.

Related experiments explore how people use other forms of models. \citet{musolff2025model} study uncertainty between two models with varying complexity. They find that greater complexity increases neglect of model uncertainty in participants' actions, while participants' elicited beliefs still reflect model uncertainty. \citet{ambuehl2025choice} focus on how people choose between conflicting causal models. They find that subjects tend to focus on worst-case outcomes and choose cautiously when they cannot identify the correct model. \citet{farina2026communicating} focus on strategic choice of data-generating processes and find that receivers often neglect how an informed sender's private type shapes the sender's unobserved choice of data-generating process. Other experiments examine different forms of model uncertainty. \citet{epstein2024hard} focus on settings where signals have multiple interpretations. They compare belief updating from noisy signals with updating from ambiguous signals, and find that ambiguous signals increase deviations from Bayes' rule. \citet{liang2025learning} studies compound and ambiguous uncertainty about source accuracy. Participants respond less to reports from uncertain sources than to reports with known accuracy, especially after good news.\footnote{A broader literature studies belief updating under ambiguity and compound uncertainty. For example: \citet{ngangoue2021learning, shishkin2023ambiguous} study belief updating under ambiguity, \citet{moreno2016learning,bland2021learning} compare belief updating under compound risk with ambiguity, and \citet{kovach2026learning} study belief updating with qualitative AI recommendations.}  The present paper introduces an additional component, \textit{motivated reasoning}, that shapes model selection and belief updating. It documents that motivated reasoning has substantially different impacts on people who follow different behavioral rules, and the effects are especially large for best-fit agents.

More broadly, the paper also connects to research on mental models and narratives.\footnote{Broad discussions and reviews include \citet{shiller2017narrative, roos2024narratives, barron2024narrativereview}. Related experiments study responses to supplied causal models \citep{charles2025causal,ambuehl2025choice, he2026consensus}, mental-model formation from data \citep{esponda2024mental, frechette2024extracting, kendall2024complexity, fan2024forming}, and narrative construction and persuasion \citep{barron2024narrative, andre2026narratives, liu2026counteracting}. Related field evidence documents the adoption of competing narratives \citep{angrisani2024competing}.} \citet{eliaz2020model} represent a narrative as a causal model that maps actions into consequences. Agents adopt the narrative that maximizes anticipatory utility among those that correctly predict the observed consequences. \citet{eliaz2026news} study news media that supply information and narratives to maximize consumers' anticipatory utility. \citet{fan2026narratives} show that constructing or receiving a well-fitting narrative makes beliefs more extreme and volatile than learning directly from the same data. In the present paper, Bayesian model weights respond continuously to the distortion. Best-fit selection can switch the chosen model and move posterior beliefs discretely. The distinction between updating rules identifies an individual-level channel for discrete changes in posterior beliefs. Its implications for extremeness depend on the model-specific posteriors and the information structure.\footnote{Related work also links political values or rationalization to belief divergence \citep{barron2025motivated,yaouanq2025rationalizations}.}

\section{Theoretical Framework}\label{sec:theory}

This section proposes a framework that links motivated reasoning to belief updating with competing models. A preference-driven distortion changes the perceived information structure and therefore affects posterior beliefs. The impact of this distortion depends on the behavioral rule people follow. Best-fit agents choose the model that best fits the observed information, whereas Bayesian agents update model weights across competing models.

\subsection{Environment}\label{sec:environment}

An agent wants to learn an unknown state \(\omega\) from the finite set \(\Omega \equiv \{\omega_1,\ldots,\omega_N\}\), where \(N>1\). The state prior is \(\mu_0\in\Delta(\Omega)\), and \(\mu_0(\omega)\) denotes the prior probability of state \(\omega\).

The finite signal space is \(S\equiv\{s_1,\ldots,s_K\}\), where \(K>1\). A model \(m\) maps unobserved states of the world into distributions over observable signals, which is often called an information structure in the belief-updating literature. \(\pi_m(s\mid\omega)\) denotes the probability of signal \(s\) in state \(\omega\) under model \(m\). For every \(m\) and \(\omega\), \(\sum_{s\in S}\pi_m(s\mid\omega)=1\). Let \(\mathbb{M}\) denote the set of all such maps from \(\Omega\) to \(\Delta(S)\).

The agent is exposed to a finite set of competing models \(M\subseteq\mathbb{M}\) and has a model prior \(\rho_0\in\Delta(M)\). The state and model are independent under the prior: \(\Pr(\omega,m)=\mu_0(\omega)\rho_0(m)\).\footnote{This assumption makes the state prior common across models, as in \citet{aina2023tailored}.}

Conditional on model \(m\) and observed signal \(s\), Bayes' rule gives the model-specific posterior \(\mu_m(\omega\mid s)\). The objective fit \(P_m(s)\) is the likelihood of observing signal \(s\) under model \(m\), given the state prior:
\begin{subequations}\label{eq:model-posterior-fit}
\begin{align}
    \mu_m(\omega\mid s)
    &=
    \frac{\mu_0(\omega)\pi_m(s\mid\omega)}
    {\sum_{\omega'\in\Omega}\mu_0(\omega')\pi_m(s\mid\omega')},
    \label{eq:model-specific-posterior}\\
    P_m(s)
    &=
    \sum_{\omega\in\Omega}\mu_0(\omega)\pi_m(s\mid\omega).
    \label{eq:objective-fit}
\end{align}
\end{subequations}

This section focuses on two behavioral rules for updating beliefs with competing models: the best-fit rule and the Bayesian rule. Best-fit agents select the model with the highest objective fit and update only under that model \citep{schwartzstein2021using,aina2023tailored}. \(m_s^*\) denotes the model with the highest objective fit to signal \(s\), and \(\mu_F(\omega\mid s)\) denotes the posterior belief of a best-fit agent:
\begin{equation}\label{eq:best-fit-rule}
    m_s^*\in\arg\max_{m\in M}P_m(s),
    \qquad
    \mu_F(\omega\mid s)=\mu_{m_s^*}(\omega\mid s).
\end{equation}

Bayesian agents retain uncertainty over the competing models. After observing \(s\), they update each model's weight according to its prior probability and objective fit. \(\rho(m\mid s)\) denotes the weight assigned to model \(m\) after observing signal \(s\):
\begin{equation}\label{eq:bayesian-weight}
    \rho(m\mid s)=
    \frac{\rho_0(m)P_m(s)}{\sum_{m'\in M}\rho_0(m')P_{m'}(s)}.
\end{equation}
Bayesian agents then average the model-specific posterior probabilities using these updated model weights:
\begin{equation}\label{eq:bayesian-belief}
    \mu_B(\omega\mid s)=\sum_{m\in M}\rho(m\mid s)\mu_m(\omega\mid s).
\end{equation}
Here, \(\mu_B(\omega\mid s)\) denotes the posterior belief of a Bayesian agent.

I impose the following regularity conditions. The state and model priors have full support. Every conditional signal distribution also has full support:
\[
    \mu_0(\omega)>0,\qquad
    \rho_0(m)>0,\qquad
    \pi_m(s\mid\omega)>0
\]
for every \(\omega\in\Omega\), \(m\in M\), and \(s\in S\). These conditions ensure that all posterior probabilities are strictly positive and that the ratios used below are well-defined.

Models in \(M\) are distinct, so any two models differ for at least one state-signal pair. A fixed tie-breaking rule resolves all fit ties and applies uniformly across signals and distortions.

\subsection{Preferences over States}\label{sec:distortion}

To introduce preference-driven distortions, suppose each state is associated with a payoff to the agent. Let \(U(\omega)\) denote the agent's utility from the payoff associated with state \(\omega\). Without loss of generality,
I can order the states so that
\(
    U(\omega_1)\geq U(\omega_2)\geq\cdots\geq U(\omega_N).
\)

Motivated reasoning may shift beliefs toward states that give the agent higher utility. Following \citet{kovach2020twisting}, I use a strictly increasing function \(f:\mathbb{R}\rightarrow\mathbb{R}_{++}\) to model preference-driven distortion. The term \(f(U(\omega))\) determines the relative distortion associated with each state. Because \(f\) is strictly increasing, states with higher utility receive greater relative weight. For a given model \(m\) and observed signal \(s\), the distorted model-specific posterior probability of state \(\omega\), \(\hat{\mu}_m(\omega\mid s)\), is given by:
\begin{equation}\label{eq:subjective-posterior}
    \hat{\mu}_m(\omega\mid s)
    =\frac{\mu_0(\omega)\pi_m(s\mid\omega)f(U(\omega))}
    {\sum_{\omega'\in\Omega}\mu_0(\omega')\pi_m(s\mid\omega')f(U(\omega'))}
    =\frac{\mu_m(\omega\mid s)f(U(\omega))}
    {\sum_{\omega'\in\Omega}\mu_m(\omega'\mid s)f(U(\omega'))}.
\end{equation}

\subsubsection{Interpretation of the Distortion}

I interpret the distortion as modifying the agent's mental representation of the information structure. For each model \(m\), the distorted mental representation is defined by:
\begin{equation}\label{eq:distorted-information-structure}
    \hat\pi_m(s\mid\omega)=\pi_m(s\mid\omega)f(U(\omega)).
\end{equation}
Substituting \(\hat\pi_m\) for \(\pi_m\) in Bayes' rule yields the distorted model-specific posterior in Equation~\eqref{eq:subjective-posterior}. This posterior is well-defined, although \(\hat\pi_m\) need not be a conditional probability distribution.\footnote{\citet{ba2024over} use the term ``pseudo-information structure'' for a similar distorted mental representation. Their model overweights signal likelihoods in states that are more representative of the observed signal.} \(\hat\pi_m\) is a conditional probability distribution only when \(f(U(\omega))=1\) for every state.


Preference-driven distortion also changes the agent's perception of how well each model fits the observed signal. For model \(m\) and signal \(s\), subjective fit, \(\hat P_m(s)\), aggregates the distorted mental representation across states using the state prior:
\begin{equation}\label{eq:subj-fit}
\begin{aligned}
    \hat P_m(s)
    &=
    \sum_{\omega\in\Omega}
        \mu_0(\omega)\hat\pi_m(s\mid\omega)\\
    &=
    \sum_{\omega\in\Omega}
        \mu_0(\omega)\pi_m(s\mid\omega)f(U(\omega))\\
    &=
    P_m(s)
    \sum_{\omega\in\Omega}
        \mu_m(\omega\mid s)f(U(\omega)).
\end{aligned}
\end{equation}
Equation~\eqref{eq:subj-fit} shows that subjective fit is the product of objective fit and the posterior expectation of \(f(U)\) under model \(m\). Therefore, preference-driven distortion can favor models under which the observed signal indicates a higher probability on states the agent prefers. Subjective fits determine which model best-fit agents select. Together with model priors, they determine model weights Bayesian agents assign across models.\footnote{Alternatively, the preference-driven distortion can be incorporated into the prior over states while leaving the information structure unchanged. Replacing the prior term \(\mu_0(\omega)\) in Bayes' rule with \(\mu_0(\omega)f(U(\omega))\) yields the same distorted posterior belief as Equation~\eqref{eq:subjective-posterior}. \citet{liu2024essays} discusses the equivalence between placing a state-dependent multiplicative distortion in the prior and in the signal likelihoods. In my setting, the two interpretations generate the same model selections for Best-fit agents and the same distorted model weights for Bayesian agents.}



\subsubsection{Belief Updating under Preference-Driven Distortion}

Preference-driven distortion can affect beliefs in two ways. It changes posterior probabilities within each model and changes how the agent chooses or weights competing models.

Within any model \(m\), Equation~\eqref{eq:subjective-posterior} implies the following relationship between objective and distorted posterior odds. For any two states \(\omega,\omega'\in\Omega\),
\begin{equation}\label{eq:rel-subj-belief}
    \frac{\hat{\mu}_m(\omega\mid s)}
    {\hat{\mu}_m(\omega'\mid s)}
    =
    \frac{\mu_m(\omega\mid s)}
    {\mu_m(\omega'\mid s)}
    \frac{f(U(\omega))}
    {f(U(\omega'))}.
\end{equation}
If \(U(\omega)>U(\omega')\), then
\(f(U(\omega))/f(U(\omega'))>1\). The distorted posterior odds of \(\omega\) relative to \(\omega'\) therefore exceed the posterior odds without distortion. If \(f(U(\omega))\) is constant across states, preference-driven distortion leaves every model-specific posterior unchanged.

Subjective fit, defined in Equation~\eqref{eq:subj-fit}, enters the two behavioral rules as follows. A best-fit agent selects the model with the highest subjective fit and updates her posterior belief under that model. \(\hat m_s^*\) denotes the model with the highest subjective fit to signal \(s\), and \(\hat\mu_F(\omega\mid s)\) denotes the distorted posterior belief of a best-fit agent:
\begin{equation}\label{eq:distorted-best-fit}
    \hat m_s^*
    \in
    \arg\max_{m\in M}\hat P_m(s),
    \qquad
    \hat\mu_F(\omega\mid s)
    =
    \hat\mu_{\hat m_s^*}(\omega\mid s).
\end{equation}
A Bayesian agent uses subjective fits and model priors to update model weights. The agent then averages the distorted model-specific posteriors using these weights. \(\hat\rho(m\mid s)\) denotes the distorted weight assigned to model \(m\) after observing signal \(s\), and \(\hat\mu_B(\omega\mid s)\) denotes the distorted posterior belief of a Bayesian agent:
\begin{equation}\label{eq:distorted-weight}
    \hat\rho(m\mid s)
    =
    \frac{\rho_0(m)\hat P_m(s)}
    {\sum_{m'\in M}\rho_0(m')\hat P_{m'}(s)},
    \qquad
    \hat\mu_B(\omega\mid s)
    =
    \sum_{m\in M}
    \hat\rho(m\mid s)\hat\mu_m(\omega\mid s).
\end{equation}

\subsection{Competing Models}

This section first examines how preference-driven distortion affects model selection and then examines its effects on posterior beliefs. I begin with a general result that allows any finite number of states, signals, and models. After that, the full-switch result requires two signals and two models, while the sensitivity result focuses on the \(2\times2\times2\) environment that matches my experiment. Finally, I derive implications for posterior beliefs.

\subsubsection{Definitions}

I begin by defining a condition that compares how well two models fit the same signal.
\begin{defn}[Relative Representativeness of Two Models]
Fix two distinct models \(m,m'\in M\) and a signal \(s\in S\). Model \(m\) is more representative than \(m'\) for \(s\) if
\begin{enumerate}
    \item \(\pi_m(s\mid\omega)\geq\pi_{m'}(s\mid\omega)\) for every \(\omega\in\Omega\); and
    \item \(\pi_m(s\mid\omega)>\pi_{m'}(s\mid\omega)\) for some \(\omega\in\Omega\).
\end{enumerate}
\end{defn}

This condition means that, in every state, signal \(s\) is at least as likely under model \(m\) as under model \(m'\). As a result, \(m\) always has higher objective fit and subjective fit than \(m'\). Proposition~\ref{prop:fit-representative} formalizes this implication.

The following definition classifies how preference-driven distortion changes the model selected by a best-fit agent.

\begin{defn}[Model Switch]
For best-fit agents, preference-driven distortion produces no switch if
\(\hat m_s^*=m_s^*\) for every \(s\in S\). It produces a full switch if
\(\hat m_s^*\neq m_s^*\) for every \(s\in S\). It produces a partial switch if the selected model changes for some signals and remains unchanged for others.
\end{defn}

\subsubsection{Predictions}

I first link relative representativeness to subjective fit in a setting with any finite number of states, signals, and models.

\begin{prop}[Model Fit]\label{prop:fit-representative}
Suppose \(m\) is more representative than \(m'\) for signal \(s\). Then the following statements hold:
\begin{enumerate}
    \item \(\hat P_m(s)>\hat P_{m'}(s)\);
    \item
    \(
        \frac{\hat\rho(m\mid s)}
        {\hat\rho(m'\mid s)}
        >
        \frac{\rho_0(m)}
        {\rho_0(m')};
    \)
    \item \(m'\notin\arg\max_{m''\in M}\hat P_{m''}(s)\). If \(M=\{m,m'\}\), \(\hat m_s^*=m\).
\end{enumerate}
\end{prop}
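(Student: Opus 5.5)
The argument works directly from the second line of Equation~\eqref{eq:subj-fit}, which writes subjective fit as a state-by-state sum:
\[
\hat P_m(s)=\sum_{\omega\in\Omega}\mu_0(\omega)\pi_m(s\mid\omega)f(U(\omega)).
\]
Every weight \(\mu_0(\omega)f(U(\omega))\) is strictly positive. This follows from the full-support regularity condition on the state prior and from \(f\) mapping into \(\mathbb{R}_{++}\).

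\textbf{Part 1.} I take the difference
\[
\hat P_m(s)-\hat P_{m'}(s)=\sum_{\omega}\mu_0(\omega)f(U(\omega))\bigl[\pi_m(s\mid\omega)-\pi_{m'}(s\mid\omega)\bigr].
\]
By condition 1 of relative representativeness, every bracket is nonnegative. By condition 2, at least one bracket is strictly positive. Since that bracket is multiplied by a strictly positive weight, the sum is strictly positive, which gives part 1.

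The only point needing care is that positivity of the weights is what turns weak dominance into strict dominance. I would emphasize that no monotonicity of \(f\) is used here, only its positivity.

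\textbf{Part 2.} I use the definition of the distorted weights in Equation~\eqref{eq:distorted-weight}. The common denominator cancels, leaving
\[
\frac{\hat\rho(m\mid s)}{\hat\rho(m'\mid s)}=\frac{\rho_0(m)}{\rho_0(m')}\cdot\frac{\hat P_m(s)}{\hat P_{m'}(s)}.
\]
This ratio is well defined because \(\rho_0\) has full support and \(\hat P_{m'}(s)>0\). Part 1 gives \(\hat P_m(s)/\hat P_{m'}(s)>1\). Multiplying by the positive prior ratio yields the strict inequality.

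\textbf{Part 3.} Part 1 shows \(\hat P_{m'}(s)<\hat P_m(s)\le\max_{m''\in M}\hat P_{m''}(s)\). Hence \(m'\) cannot attain the maximum, so \(m'\notin\arg\max\). If \(M=\{m,m'\}\), the argmax is therefore the singleton \(\{m\}\). No tie arises, so the tie-breaking rule is irrelevant and \(\hat m_s^*=m\) by Equation~\eqref{eq:distorted-best-fit}.

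I would close with a remark. Setting \(f\) constant recovers the same conclusions for objective fit \(P_m\) and the undistorted weights \(\rho\). This confirms the claim in the text that \(m\) dominates under both objective and subjective fit.
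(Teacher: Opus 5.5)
Your proposal is correct and follows the paper's own proof: Part 1 writes the subjective-fit difference as a sum of terms with strictly positive weights \(\mu_0(\omega)f(U(\omega))\), so weak dominance of the likelihoods with one strict inequality gives a strictly positive sum. Part 2 cancels the common denominator in the distorted weights, and Part 3 follows directly from Part 1.
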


\begin{proof}
See Appendix~\ref{app:proof-fit-representative}.
\end{proof}

For Bayesian agents, part 2 implies that the posterior model odds of \(m\) relative to \(m'\) under preference-driven distortion are greater than the prior model odds. For Best-fit agents, part 3 implies that \(m'\) cannot be selected while \(m\) is available. When these are the only two models, the agent selects \(m\).

I next consider the case with two signals, allowing any finite number of states and models. In this case, differences in objective and subjective fit between any two models \(m,m'\in M\) reverse across signals:
\begin{equation}\label{eq:fit-reversal}
    P_m(s_2)-P_{m'}(s_2)
    =
    -\big[P_m(s_1)-P_{m'}(s_1)\big],
    \qquad
    \hat P_m(s_2)-\hat P_{m'}(s_2)
    =
    -\big[\hat P_m(s_1)-\hat P_{m'}(s_1)\big].
\end{equation}
Appendix~\ref{app:fit-identities} formally derives these results, which lead to the following prediction for two models and any finite state space.

\begin{pred}[Full Switch]\label{pred:full-switch}
    Suppose \(S=\{s_1,s_2\}\), \(M=\{m,m'\}\), and the models have distinct model fits. For a best-fit agent, preference-driven distortion produces either a full switch or no switch.
\end{pred}

\begin{proof}
See Appendix~\ref{app:proof-full-switch}.
\end{proof}

If the two models have equal model fit, the selected model is determined by the tie-breaking rule, and a partial switch may occur.

I next characterize the distortion required for a switch in the experimental \(2\times2\times2\) environment. Let
\(
    \Omega=\{\omega_1,\omega_2\},
    \,
    S=\{s_1,s_2\},
    \,
    M=\{m,m'\}.
\)
Suppose \(U(\omega_1)>U(\omega_2)\), so that \(\omega_1\) is the preferred state.

For signal \(s_1\), suppose neither model is more representative than the other. Suppose further that
\[
    P_m(s_1)>P_{m'}(s_1)
    \qquad\text{and}\qquad
    \mu_m(\omega_1\mid s_1)
    <
    \mu_{m'}(\omega_1\mid s_1).
\]
Thus, model \(m\) has higher objective fit for \(s_1\), while model \(m'\) yields a higher model-specific posterior probability on the preferred state.

Define
\begin{equation}\label{eq:switch-gaps}
\begin{aligned}
     A&=
    \mu_0(\omega_1)
    \big[
        \pi_{m'}(s_1\mid\omega_1)
        -
        \pi_m(s_1\mid\omega_1)
    \big],
    \\
    B&=
    \mu_0(\omega_2)
    \big[
        \pi_m(s_1\mid\omega_2)
        -
        \pi_{m'}(s_1\mid\omega_2)
    \big].
\end{aligned}
\end{equation}

The term \(A\) captures how much more likely signal \(s_1\) is under \(m'\) than under \(m\) when \(\omega_1\) is true, accounting for the prior probability \(\mu_0(\omega_1)\). Similarly, \(B\) captures how much more likely \(s_1\) is under \(m\) than under \(m'\) when \(\omega_2\) is true, accounting for \(\mu_0(\omega_2)\). Because
\(
    P_m(s_1)-P_{m'}(s_1)=B-A,
\)
we have \(B>A>0\).

Define
\[
    R=\frac{A}{B}\in(0,1),
    \qquad
    t=\frac{f(U(\omega_1))}{f(U(\omega_2))}>1.
\]
The ratio \(R\) compares the two differences in the objective fit calculation. Because \(R<1\), a value closer to one means that \(A\) and \(B\) more nearly offset each other. The ratio \(t\) compares the preference-driven distortion across the two states. A larger \(t\) places greater relative weight on the preferred state.

\begin{prop}[Sensitivity of Model Switch]\label{prop:sensitivity}
Given \(B>A>0\), preference-driven distortion produces a full switch if and only if
\[
    t>\frac{1}{R}.
\]
\end{prop}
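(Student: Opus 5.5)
The plan is to compute the subjective fit gap between the two models at signal \(s_1\) directly. Then Prediction~\ref{pred:full-switch} reduces the full-switch question to a single sign condition. Begin by noting which model is chosen absent distortion. Since \(P_m(s_1)-P_{m'}(s_1)=B-A>0\), we have \(m_{s_1}^*=m\). By the fit-reversal identity~\eqref{eq:fit-reversal}, \(m_{s_2}^*=m'\). The objective fits are distinct because \(B\neq A\), so Prediction~\ref{pred:full-switch} applies. Preference-driven distortion therefore produces either a full switch or no switch. It suffices to characterize when \(\hat m_{s_1}^*\neq m\), i.e.\ when \(\hat P_{m'}(s_1)>\hat P_m(s_1)\). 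I will also check that strictness is preserved on the boundary, so that the fixed tie-breaking rule does not interfere with the ``if and only if''.

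Next I expand the subjective fits using the second line of Equation~\eqref{eq:subj-fit}:
\[
\hat P_m(s_1)-\hat P_{m'}(s_1)
=\sum_{\omega\in\{\omega_1,\omega_2\}}\mu_0(\omega)\big[\pi_m(s_1\mid\omega)-\pi_{m'}(s_1\mid\omega)\big]f(U(\omega))
= -A\,f(U(\omega_1))+B\,f(U(\omega_2)).
\]
Dividing by \(f(U(\omega_2))>0\) gives
\[
\frac{\hat P_m(s_1)-\hat P_{m'}(s_1)}{f(U(\omega_2))}=B-tA .
\]
Because \(A,B>0\), this quantity is negative if and only if \(t>B/A=1/R\). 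In that case \(m'\) strictly has the higher subjective fit at \(s_1\), so \(\hat m_{s_1}^*=m'\neq m_{s_1}^*\). The subjective fit reversal in~\eqref{eq:fit-reversal} then gives \(\hat m_{s_2}^*=m\neq m_{s_2}^*\), which is a full switch.

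For the converse, if \(t<1/R\) the gap is strictly positive, so \(\hat m_{s_1}^*=m\) and there is no switch. The step I expect to be the main obstacle is the knife-edge case \(t=1/R\). There the subjective fits tie at both signals, and the outcome depends on the tie-breaking rule. The objective fits are not tied, so the objective choice does not invoke tie-breaking. Under a rule that picked \(m'\) at \(s_1\) and \(m\) at \(s_2\), a tie could in principle produce a switch.

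I would handle this by using the tie-breaking rule's uniformity across signals. A fixed rule selects the same model at both signals whenever both are tied, say \(\bar m\). Since \(m_{s_1}^*=m\) and \(m_{s_2}^*=m'\) differ, \(\bar m\) agrees with the undistorted choice at one signal. This gives at most a partial switch, so the outcome is not a full switch, and the equivalence holds exactly with the strict inequality \(t>1/R\). The conditions that neither model is more representative and that \(\mu_m(\omega_1\mid s_1)<\mu_{m'}(\omega_1\mid s_1)\) enter only through \(B>A>0\), which is already assumed.
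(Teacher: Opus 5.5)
Your proof is correct and follows the paper's own argument. Both write the subjective fit gap at $s_1$ as $f(U(\omega_2))(B-tA)$, use the fit-reversal identity together with Prediction~\ref{pred:full-switch} to carry the comparison to $s_2$, and show that at $t=1/R$ the uniform tie-breaking rule gives only a partial switch; the one difference is that the paper also derives $B>A>0$ from the no-representativeness and posterior-ordering assumptions, while you take it as given, which the statement permits.
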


\begin{proof}
See Appendix~\ref{app:proof-sensitivity}.
\end{proof}

Preference-driven distortion produces a full switch when $tA>B$. I call $1/R=B/A$ the switching threshold. When \(R\) is close to one, \(A\) nearly offsets \(B\), so the models have similar objective fits. A relatively small distortion toward \(\omega_1\) can then make \(m'\) have higher subjective fit than \(m\). When \(R\) is smaller, \(A\) offsets less of \(B\), which implies that a larger distortion is required to reverse the order of subjective fit.

I next examine how preference-driven distortion affects posterior beliefs. I return to the finite state, signal, and model spaces defined above and fix a signal \(s\in S\). For Bayesian agents, preference-driven distortion changes both model-specific posterior probabilities and model weights. Relative model weights shift toward models under which the observed signal supports states the agent prefers. Appendix~\ref{app:fit-identities} provides the pairwise relationship between posterior model odds before and after preference-driven distortion.

For every state \(\omega\in\Omega\), the distorted Bayesian posterior belief satisfies
\begin{equation}\label{eq:bayesian-distortion}
    \hat\mu_B(\omega\mid s)
    =
    \frac{\mu_B(\omega\mid s)f(U(\omega))}
    {\sum_{\omega'\in\Omega}
    \mu_B(\omega'\mid s)f(U(\omega'))}.
\end{equation}
Equation~\eqref{eq:bayesian-distortion} combines the changes in model-specific posteriors and model weights into a direct distortion of the Bayesian agent's posterior.

Consider the experimental \(2\times2\times2\) environment. Let
\(
    \Omega=\{\omega_1,\omega_2\},
    \,
    S=\{s_1,s_2\},
    \,
    M=\{m,m'\}.
\)
Suppose \(U(\omega_1)>U(\omega_2)\), so that \(\omega_1\) is the preferred state. 

\begin{prop}\label{prop:belief-direction}
For every signal \(s\in S\),
\[
    \hat\mu_B(\omega_1\mid s)
    >
    \mu_B(\omega_1\mid s),
    \qquad
    \hat\mu_F(\omega_1\mid s)
    >
    \mu_F(\omega_1\mid s).
\]
Moreover, the model selected under preference-driven distortion satisfies
\[
    \mu_{\hat m_s^*}(\omega_1\mid s)
    \geq
    \mu_{m_s^*}(\omega_1\mid s).
\]
\end{prop}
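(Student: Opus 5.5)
The plan is to treat the two behavioral rules separately, using that with two states the posterior is pinned down by the odds of \(\omega_1\) against \(\omega_2\). Fix \(s\) and write \(t=f(U(\omega_1))/f(U(\omega_2))>1\).

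\textbf{Bayesian part.} I would start from Equation~\eqref{eq:bayesian-distortion}, which is stated above. I would also verify it quickly: substituting \(\hat\rho(m\mid s)\propto\rho_0(m)P_m(s)\sum_{\omega'}\mu_m(\omega'\mid s)f(U(\omega'))\) and \(\hat\mu_m\) from Equation~\eqref{eq:subjective-posterior}, the model-specific normalizers cancel. This gives \(\hat\mu_B(\omega\mid s)\propto\sum_m\rho_0(m)P_m(s)\mu_m(\omega\mid s)f(U(\omega))\propto\mu_B(\omega\mid s)f(U(\omega))\). In the two-state case the distorted odds then equal \(t\) times the undistorted odds:
\[
\frac{\hat\mu_B(\omega_1\mid s)}{\hat\mu_B(\omega_2\mid s)}=t\,\frac{\mu_B(\omega_1\mid s)}{\mu_B(\omega_2\mid s)}.
\]
Full support makes both odds finite and positive. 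Since \(t>1\), the odds rise strictly, and so does \(\hat\mu_B(\omega_1\mid s)\).

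\textbf{Model-selection inequality.} I would prove the last claim next, because the best-fit inequality relies on it. If \(\hat m_s^*=m_s^*\), it holds with equality. Otherwise the selected model switches, say from \(m\) to \(m'\), where \(m=m_s^*\). By the definitions of the two argmaxes, \(P_m(s)\ge P_{m'}(s)\) and \(\hat P_{m'}(s)\ge\hat P_m(s)\), with tie-breaking handled by the fixed rule. Write \(g_k(s)=\mu_k(\omega_1\mid s)t+(1-\mu_k(\omega_1\mid s))\), so that by Equation~\eqref{eq:subj-fit} \(\hat P_k(s)\propto P_k(s)\,g_k(s)\), up to the common factor \(f(U(\omega_2))\). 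The two inequalities then give \(g_{m'}(s)\ge g_m(s)\). Because \(g\) is strictly increasing in \(\mu_k(\omega_1\mid s)\) when \(t>1\), this yields \(\mu_{m'}(\omega_1\mid s)\ge\mu_m(\omega_1\mid s)\).

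\textbf{Main obstacle: ties.} Ties are where I expect the difficulty. If \(P_m(s)=P_{m'}(s)\) and the fixed tie-break selected \(m\) before distortion, then a switch under the same tie-break needs either a strict subjective inequality or a tie resolved the other way. A tie resolved the other way is impossible, since the rule is the same. A strict subjective inequality gives a strict inequality for \(g\), which suffices. An alternative route is to work directly with \(A\) and \(B\) as in Proposition~\ref{prop:sensitivity}: a switch requires \(\pi_{m'}(s\mid\omega_1)>\pi_m(s\mid\omega_1)\) and \(\pi_m(s\mid\omega_2)>\pi_{m'}(s\mid\omega_2)\), which directly orders the posteriors.

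\textbf{Best-fit part.} Here I would chain two inequalities:
\[
\hat\mu_F(\omega_1\mid s)=\hat\mu_{\hat m_s^*}(\omega_1\mid s)>\mu_{\hat m_s^*}(\omega_1\mid s)\ge\mu_{m_s^*}(\omega_1\mid s)=\mu_F(\omega_1\mid s).
\]
The strict step is Equation~\eqref{eq:rel-subj-belief} applied within the single model \(\hat m_s^*\): its odds are multiplied by \(t>1\), and the posterior is interior by full support. The weak step is the model-selection inequality just established.
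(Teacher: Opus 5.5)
Your proposal is correct and takes essentially the same route as the paper. Both derive Equation~\eqref{eq:bayesian-distortion} and use the fact that the two-state odds are multiplied by $t>1$; the paper writes this as $h(t,x)>x$. Both then combine the two argmax inequalities with $\hat P_m(s)=P_m(s)f(U(\omega_2))[1+(t-1)\mu_m(\omega_1\mid s)]$ to order the selected models' posteriors, which the paper phrases as a contradiction, and chain the within-model strict increase with that weak inequality. Your separate worry about ties is unnecessary: the argument uses only the weak inequalities implied by the argmax definitions, so it already covers the tie cases.
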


\begin{proof}
See Appendix~\ref{app:proof-belief-direction}.
\end{proof}

For Bayesian agents, Equation~\eqref{eq:bayesian-distortion} shows that preference-driven distortion shifts their posterior belief toward the preferred state. For best-fit agents, the distortion raises the posterior probability on the preferred state within every model, regardless of model switch. If a switch occurs, the new model yields at least as high a model-specific posterior probability on the preferred state as the original model. The within-model change and the change in model selection therefore move the distorted best-fit posterior belief in the same direction. A switch produces an additional discrete increase in the posterior belief.




I next examine how posterior beliefs change as preference-driven distortion toward the preferred state becomes stronger. Let
\(
    \Omega=\{\omega_1,\omega_2\}
    \, \  \text{and}\, \ 
    U(\omega_1)>U(\omega_2).
\)
Write
\[
    t=
    \frac{f(U(\omega_1))}
    {f(U(\omega_2))} > 1.
\]
A larger \(t\) places greater relative weight on the preferred state \(\omega_1\).

Let \(\hat m_s^*(t)\) denote the model selected by a best-fit agent after signal \(s\) when the relative distortion is \(t\). The term
\(
    \mu_{\hat m_s^*(t)}(\omega_1\mid s)
\)
denotes the model-specific posterior probability on \(\omega_1\) under that selected model when there is no distortion. The term \(\hat\mu_F(\omega_1\mid s)\) denotes the distorted posterior belief for best-fit agents.

\begin{prop}\label{prop:stronger-distortion}
For every signal \(s\in S\):
\begin{enumerate}
    \item The distorted Bayesian posterior belief
    \(\hat\mu_B(\omega_1\mid s)\) is continuous and strictly increasing in \(t\) on \((1,\infty)\).

    \item The model-specific posterior probability on \(\omega_1\) under the selected model,
    \(\mu_{\hat m_s^*(t)}(\omega_1\mid s)\), is non-decreasing in \(t\) on \((1,\infty)\). It remains constant between switching thresholds.

    \item The distorted best-fit posterior belief
    \(\hat\mu_F(\omega_1\mid s)\) is strictly increasing in \(t\) on \((1,\infty)\). It is continuous except at finitely many switching thresholds. A switch to a model with a strictly higher model-specific posterior probability on the preferred state produces an upward jump.
\end{enumerate}
\end{prop}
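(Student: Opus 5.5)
The argument rests on the two-state odds form of the distortion. Write \(p=\mu_B(\omega_1\mid s)\in(0,1)\). Equation~\eqref{eq:bayesian-distortion} with \(\Omega=\{\omega_1,\omega_2\}\), after dividing numerator and denominator by \(f(U(\omega_2))\), gives
\[
\hat\mu_B(\omega_1\mid s)=\frac{tp}{tp+(1-p)}.
\]
Because \(p\) does not depend on \(t\), this is a rational function of \(t\) with positive denominator, so it is continuous. Its derivative is \(p(1-p)/(tp+1-p)^2>0\), which gives part 1. In the same way, for any fixed model \(m\), Equation~\eqref{eq:subjective-posterior} gives \(\hat\mu_m(\omega_1\mid s)=tq_m/(tq_m+1-q_m)\) with \(q_m=\mu_m(\omega_1\mid s)\). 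Each such function is continuous and strictly increasing in \(t\), and it is strictly increasing in \(q_m\) for fixed \(t\).

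For part 2, I would use Equation~\eqref{eq:subj-fit}. Dividing by \(f(U(\omega_2))\), which does not change any argmax, turns the subjective fit into an affine function of \(t\):
\[
\tilde P_m(s;t)=\mu_0(\omega_1)\pi_m(s\mid\omega_1)\,t+\mu_0(\omega_2)\pi_m(s\mid\omega_2).
\]
Its slope is \(a_m=\mu_0(\omega_1)\pi_m(s\mid\omega_1)\). Since \(q_m=a_m/P_m(s)\), a model with a larger slope has a larger posterior on \(\omega_1\) whenever the two fits are comparable. The selected model is the argmax of finitely many lines, that is, the upper envelope, with a fixed tie-breaking rule. Two distinct lines cross at most once, so the selected model changes only at finitely many crossing points; these are the switching thresholds. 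The selection is piecewise constant between them, and hence so is \(\mu_{\hat m_s^*(t)}(\omega_1\mid s)\).

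Monotonicity follows from a revealed-preference argument. Take \(t'>t\), let \(m\) be selected at \(t\), and let \(m'\) be selected at \(t'\). Then \(\tilde P_{m'}(t')\ge\tilde P_m(t')\) and \(\tilde P_m(t)\ge\tilde P_{m'}(t)\). Subtracting gives \((a_{m'}-a_m)(t'-t)\ge0\), so \(a_{m'}\ge a_m\).

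It then remains to convert \(a_{m'}\ge a_m\) into \(q_{m'}\ge q_m\). I expect this to be the main obstacle, because \(q=a/P\) also depends on the fit. I would argue it from the optimality inequalities directly. Write \(b_m=\mu_0(\omega_2)\pi_m(s\mid\omega_2)\). From \(\tilde P_{m'}(t')\ge\tilde P_m(t')\) together with \(a_{m'}\ge a_m\), I would show that \(b_{m'}/a_{m'}\le b_m/a_m\), using also that at \(t\) the model \(m\) is weakly better; this inequality is exactly \(q_{m'}\ge q_m\). If the direct manipulation fails, I would fall back on the argument behind Proposition~\ref{prop:belief-direction}, applied between the two distortion levels \(t\) and \(t'\) rather than between \(1\) and \(t\), since its last claim has exactly this form.

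For part 3, on each interval between switching thresholds the selected model is fixed, so \(\hat\mu_F(\omega_1\mid s)=\hat\mu_m(\omega_1\mid s)\) for that model \(m\), which is continuous and strictly increasing. Discontinuities can therefore occur only at the finitely many thresholds. At a threshold \(t_0\) where the selection moves from \(m\) to \(m'\), the one-sided limits are \(t_0q_m/(t_0q_m+1-q_m)\) and \(t_0q_{m'}/(t_0q_{m'}+1-q_{m'})\). By part 2, \(q_{m'}\ge q_m\), so the jump is upward, and it is strictly positive if \(q_{m'}>q_m\). Strict monotonicity across the whole of \((1,\infty)\) follows by chaining: \(\hat\mu_F\) is strictly increasing within each piece and never jumps down at a threshold. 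The value at a threshold itself, whichever model the tie-breaking rule picks, lies between the two one-sided limits, so it does not break monotonicity.
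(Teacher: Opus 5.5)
Your proof is correct. Part 1 and the finiteness and continuity claims are handled as in the paper: normalize $f(U(\omega_2))=1$, use $h(t,x)=tx/(tx+1-x)$, and note that affine fit functions cross at most once.

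The real difference is the monotonicity step in part 2. The paper keeps the multiplicative form $P_m(s)[1+(t-1)x_m]$ and multiplies the two optimality inequalities at $t_1<t_2$. This gives $\frac{1+(t_2-1)x_b}{1+(t_1-1)x_b}\ge\frac{1+(t_2-1)x_a}{1+(t_1-1)x_a}$. Since that ratio is strictly increasing in $x$, the posterior ordering $x_b\ge x_a$ follows in one step. This is essentially your fallback: the argument of Proposition~\ref{prop:belief-direction} rerun between $t$ and $t'$.

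Your additive slope--intercept route needs the extra conversion you flagged. That step is immediate, but it comes from the optimality inequality at the lower level $t$, not the one at $t'$. It gives $b_m-b_{m'}\ge(a_{m'}-a_m)t\ge0$, so $a_{m'}\ge a_m$ and $b_{m'}\le b_m$. Hence $b_{m'}/a_{m'}\le b_m/a_m$, i.e.\ $q_{m'}\ge q_m$.

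What your route buys is a more primitive conclusion. As $t$ rises, the selection can only move to a model under which $s$ is weakly more likely in $\omega_1$ and weakly less likely in $\omega_2$. This is the same likelihood sign pattern that drives Proposition~\ref{prop:sensitivity}. Moreover, $(a_{m'},b_{m'})\neq(a_m,b_m)$ whenever the two fit lines are distinct. So the same inequalities give $q_{m'}>q_m$ for every actual switch. The paper's route is shorter and stays in posterior space.

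In part 3, your chaining across pieces works, with the value at a threshold squeezed between the one-sided limits. However, your part-2 argument already holds for arbitrary $t<t'$. So the paper's direct comparison $h(t_2,x_b)\ge h(t_2,x_a)>h(t_1,x_a)$ gives strict monotonicity without any case analysis at thresholds.
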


\begin{proof}
See Appendix~\ref{app:proof-stronger-distortion}.
\end{proof}

For Bayesian agents, Equation~\eqref{eq:bayesian-distortion} shows that a higher \(t\) continuously raises the distorted Bayesian posterior belief on the preferred state. For Best-fit agents, a higher \(t\) makes the likelihood of \(s\) in \(\omega_1\) more influential in subjective fit. The selected model therefore remains unchanged or switches to a model with an equal or higher model-specific posterior probability on \(\omega_1\). This explains why \(\mu_{\hat m_s^*(t)}(\omega_1\mid s)\) is non-decreasing and may remain constant over an interval.

Even when the selected model remains unchanged, a higher \(t\) raises the distorted model-specific posterior probability on \(\omega_1\) within that model. The distorted best-fit posterior belief is therefore strictly increasing. A switch produces an upward jump when the new model has a strictly higher model-specific posterior probability on the preferred state.

Thus, both \(\hat\mu_B(\omega_1\mid s)\) and \(\hat\mu_F(\omega_1\mid s)\) strictly increase with \(t\). The distorted Bayesian posterior belief changes continuously. The distorted best-fit posterior belief is continuous between switching thresholds and may jump upward when the selected model changes.

\section{Experimental Design} \label{sec:experiment}

To answer the research questions, this paper implements a lab experiment based on the classical ball-and-urn setup to simulate a multi-model learning environment. As formally defined in Section~\ref{sec:environment}, a \emph{model} is a data-generating process that maps unobserved states of the world into distributions over observable signals, which is often called an information structure in the belief-updating literature. Participants are explicitly informed about the priors over states and over models. This feature provides a clear Bayesian benchmark for belief updating and helps distinguish heterogeneous behavioral rules for model selection and belief updating in this environment. The experiment employs a within-subject design with respect to participants' preferences over states: Each participant completes one block of belief-updating tasks with equal state-dependent payoffs and another block with unequal state-dependent payoffs. This design allows the experiment to identify whether, and to what extent, preferences over unobserved states affect model selection and belief updating.

The experiment consists of five parts. In Part 1, participants complete two blocks of belief-updating tasks in random order: the \textit{Symmetric Payoff Belief Updating Task (SP)} and the \textit{Asymmetric Payoff Belief Updating Task (AP)}.\footnote{The order of the \textit{SP} and \textit{AP} blocks was randomized at the individual level. The instructions inform participants that Part 1 consists of two blocks of 27 scenarios each. The details for Block 2 are not disclosed before participants complete Block 1.} Each block contains 27 belief-updating scenarios. Earnings are summed across all scenarios. In Part 2, the experiment elicits participants' risk attitudes using the bomb risk task \citep{crosetto2013bomb}. In Part 3, participants complete eight cognitive tasks related to information processing and statistical reasoning, presented in random order. Seven of these tasks are adapted from \citet{enke2023confidence}, while the eighth task presents two identical carts and measures participants' Bayesian updating under the experiment's environment.\footnote{\label{fn:cognitive-tasks}The seven tasks from \citet{enke2023confidence} include base-rate neglect, correlation neglect, balls-and-urns updating, gambler's fallacy, sample-size neglect, regression to the mean, and the acquiring-company problem. \citet{cason2025innovation} also use a separate urn task to measure participants' ability to update beliefs in an environment similar to the main experiment.} Each correct response adds \$0.40 to participants' final earnings. After completing these tasks, participants are asked to predict their task score from 0 to 8 and whether it exceeds the session average score. Each correct prediction earns an additional \$0.40. In Part 4, participants are presented with six pairs of conflicting economic narratives and report their views on each pair using an unincentivized 11-point Likert scale. Finally, Part 5 collects demographic information. Appendix~\ref{app:instructions} shows the instructions and the main screens participants saw.

\subsection{Belief Updating Task} \label{sec:updating-task}

Blocks \textit{SP} and \textit{AP} each consist of 27 scenarios. I use the direct method to elicit posterior beliefs. In each scenario, participants observe one randomly drawn signal. They then report their posterior belief using the displayed priors, the two competing information structures, and the observed signal. Signals are drawn independently across scenarios. A decision problem represents a unique combination of the prior over states and the information structures provided by the pair of competing models. The same nine decision problems appear in both blocks, and participants face each decision problem \(3\) times within a block.\footnote{For each decision problem, three repetitions allow me to observe most participants' beliefs under both signal realizations within a block.} The order of the 27 scenarios within each block is randomized separately for each participant.

\begin{figure}[h!]
    \centering
    \includegraphics[width=0.75\linewidth]{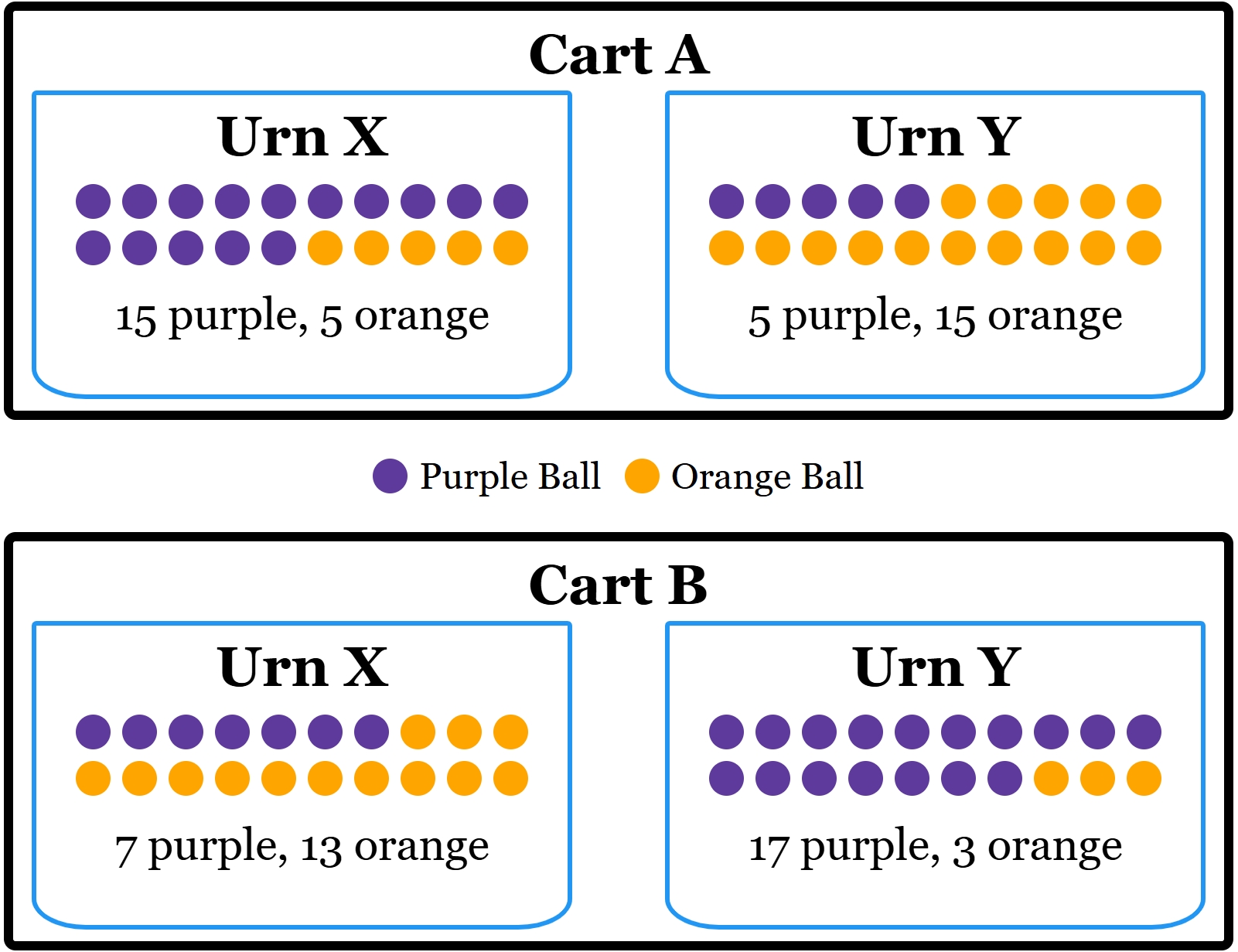}
    \caption{Ball Composition from One Scenario}
    \label{fig:ball_composition}
\end{figure}

In each scenario, participants see \textbf{2} carts (Cart A and Cart B) and \textbf{4} urns (two urns per cart), as Figure \ref{fig:ball_composition} shows.\footnote{The two information structures are randomly assigned to the cart labels and corresponding display positions, with equal probability. For example, the cart shown as Cart A in Figure \ref{fig:ball_composition} can instead be labeled Cart B and displayed at the bottom.} Each cart contains two urns: one labeled X and the other labeled Y. Each urn contains \textbf{20} balls, consisting of a mixture of purple and orange balls. Participants are informed that the computer randomly selects one urn using a two-step process. (1) The computer rolls a fair 10-sided die numbered 1 to 10 to determine whether Urn X or Urn Y is selected. The die ranges selecting Urns X and Y vary across decision problems and are displayed on the screen. For a prior probability of 0.50 on Urn X, rolls 1--5 select Urn X and rolls 6--10 select Urn Y. (2) Independently of the die roll, the computer flips a fair coin to determine which cart is selected. Heads selects Cart A, and tails selects Cart B, each with probability 0.50. Participants do not observe which urn is selected. One ball is randomly drawn from the selected urn, and participants observe only its color (purple or orange). After seeing the ball's color, participants report a probability from 0 to 100 using a slider. This number represents their guess of the probability, expressed as a percentage, that the computer selected Urn X from either cart.

In Part 1 of the experiment, participants are paid based on their decisions from all 54 scenarios across both blocks. Participants' earnings in each scenario come from two sources: the urn payment and the guess prize. The urn payment depends on the urn selected by the computer. In the \textit{SP} block, participants receive 5 tokens regardless of which urn is selected. In the \textit{AP} block, participants receive 25 tokens if Urn X is selected and 5 tokens if Urn Y is selected. Tokens are converted to dollars at a rate of 150 tokens per dollar. For the guess prize, I use the binarized scoring rule \citep{savage1971elicitation, hossain2013binarized} to elicit participants' beliefs. Participants' guesses determine the probability of winning the guess prize of 80 tokens in the corresponding scenario.\footnote{Following the discussion of belief elicitation methods by \citet{danz2022belief, danz2024evaluating, healy2025belief}, I do not directly provide participants with the details of the binarized scoring rule on the interface. However, they can access the rule and a calculator by clicking the corresponding buttons on the interface. The rule description follows \citet{wilson2018paired}, while the interface of the belief-updating task is adapted from \citet{bland2024rounding}.} After each scenario, participants receive no feedback about the true state, their urn payment, or whether they won the guess prize. After completing the experiment, they can view a summary table covering all 54 scenarios.

In this environment, the selected urn's label (X or Y) corresponds to the true state of the world, while the selected cart corresponds to the true model (signal-generating process). The observed ball's color corresponds to the observed data. The urn payment induces participants' preferences over the two states. It depends only on the selected urn, regardless of the selected cart or observed ball color. The urn payment changes payoffs across states without changing the objective information structure. Under Bayesian updating, it does not affect posterior beliefs. Under risk neutrality, it also does not affect the optimal report under the binarized scoring rule.\footnote{With a concave utility function, the state-dependent payment may create hedging incentives. In the \textit{AP} block, a risk-averse participant may report a lower belief in Urn X than they hold. A lower report raises the chance of winning the guess prize when Urn Y is selected, which offsets the lower urn payment in that case. This would move reported beliefs away from the preferred state, which is the opposite direction of motivated reasoning, so it would tend to make the estimated effects of motivated reasoning smaller. I elicit participants' risk attitudes and control for them in regression analysis to further address this concern.} This experimental design allows me to examine whether these induced preferences affect participants' model selection and belief updating processes.

\subsection{Treatment Overview}

The primary objective in designing the experiment is to create a paradigm that allows observation of participants' posterior beliefs in a multi-model environment and to manipulate participants' preferences over states within that environment. In order to identify the behavioral rules participants follow in the experiment, certain parameters are varied across scenarios. The treatment structure is as follows.

\textbf{Within-subject design.} Participants complete 27 scenarios in each of two blocks; block order and scenario order within each block are randomized. The tasks vary across scenarios and blocks along the following dimensions:
\begin{itemize}
    \item \textit{Participants' preference over payoff-relevant states:} In the \textit{SP} block, participants are indifferent between Urns X and Y; in the \textit{AP} block, participants prefer Urn X over Urn Y.
    \item \textit{Prior belief over states:} This parameter may vary across scenarios, allowing observation of different updating behaviors.
    \item \textit{Pair of models provided:} Model pairs vary across decision problems. In 8 out of 9 decision problems, participants are presented with competing models that offer conflicting interpretations of the same observed signal.
\end{itemize}



The same decision problem can generate different signal realizations in different scenarios. Table \ref{tab:dp-list} provides a summary of the 9 decision problems. Let \(s_p\) and \(s_o\) denote the purple and orange signals, respectively.

In Decision Problem 4, after either signal, the objectively best-fit model also produces the larger posterior probability on Urn X. In Decision Problem 5, Cart A and Cart B have the exact same ball composition, reducing the problem to a classical single-model belief-updating task. This decision problem provides a single-model benchmark within the experiment.



\begin{table}
    \caption{Decision Problems}
    \label{tab:dp-list}
        \resizebox{\textwidth}{!}{%
        \renewcommand{\arraystretch}{1.5}
        \small
        \begin{tabular}{
          lccccccccc
        }
            \toprule
            \multirow{2}{*}{No.} & \multirow{2}{*}{State prior \(\mu_0(X)\)} & \multicolumn{2}{c}{Model A} & \multicolumn{2}{c}{Model B} & \multicolumn{2}{c}{Objective Fit for \(s_p\)} & \multicolumn{2}{c}{Best-fit Agents} \\
            \cmidrule(r){3-4} \cmidrule(r){5-6} \cmidrule(r){7-8} \cmidrule(r){9-10}
            & & \(\pi_A(s_p\mid X)\) & \(\pi_A(s_p\mid Y)\) & \(\pi_B(s_p\mid X)\) & \(\pi_B(s_p\mid Y)\) & \(P_A(s_p)\) & \(P_B(s_p)\) & \(\mu_F(X \mid s_p)\) & \(\mu_F(X \mid s_o)\) \\
            \midrule
            1 & 0.50 & 0.70 & 0.25 & 0.35 & 0.85 & 0.475 & 0.600 & 0.292 & 0.286 \\
            2 & 0.50 & 0.70 & 0.25 & 0.40 & 0.95 & 0.475 & 0.675 & 0.296 & 0.286 \\
            3 & 0.50 & 0.65 & 0.15 & 0.40 & 0.95 & 0.400 & 0.675 & 0.296 & 0.292 \\
            4 & 0.50 & 0.85 & 0.35 & 0.25 & 0.70 & 0.600 & 0.475 & 0.708 & 0.714 \\
            5 & 0.50 & 0.75 & 0.20 & 0.75 & 0.20 & 0.475 & 0.475 & 0.789 & 0.238 \\
            6 & 0.30 & 0.80 & 0.35 & 0.30 & 0.75 & 0.485 & 0.615 & 0.146 & 0.117 \\
            7 & 0.30 & 0.75 & 0.25 & 0.35 & 0.85 & 0.400 & 0.700 & 0.150 & 0.125 \\
            8 & 0.70 & 0.60 & 0.15 & 0.40 & 0.85 & 0.465 & 0.535 & 0.523 & 0.523 \\
            9 & 0.70 & 0.55 & 0.05 & 0.45 & 0.95 & 0.400 & 0.600 & 0.525 & 0.525 \\
            \bottomrule
        \end{tabular}
        }
\begin{minipage}{\textwidth}\footnotesize
\vspace{0.5em}
\textit{Notes:} Urns X and Y are the two states, and $s_p$ and $s_o$ denote a purple and an orange ball. $\pi_m(s_p\mid \omega)$ is the probability of drawing a purple ball from Urn $\omega$ under Model $m$, so $\pi_m(s_o\mid \omega)=1-\pi_m(s_p\mid \omega)$. $P_m(s_p)$ is the objective fit of Model $m$ after a purple ball, defined in Equation~\eqref{eq:objective-fit}. After an orange ball, $P_m(s_o)=1-P_m(s_p)$. $\mu_F(X\mid s)$ is the Best-fit benchmark, the posterior probability of Urn X under the model with the higher objective fit, defined in Equation~\eqref{eq:best-fit-rule}. In Decision Problem 5, the two models are identical.
\end{minipage}
\end{table}

\subsection{Hypotheses}

The experiment allows me to examine the behavioral rules participants follow and how asymmetric payoffs affect belief updating.

The Bayesian and best-fit rules in Section~\ref{sec:environment} generate benchmark posterior beliefs for each decision problem and signal realization. I use participants' 27 reported beliefs from the \textit{SP} block to classify them into types based on the behavioral rule they are likely to follow. I then compare reported beliefs between the \textit{SP} and \textit{AP} blocks, controlling for the decision problem and signal realization. I test the following hypotheses using these within-subject comparisons, with participant types determined from \textit{SP} block behavior.

Proposition~\ref{prop:belief-direction} predicts that preference-driven distortion raises the posterior probability of the preferred state under both behavioral rules. In the \textit{AP} block, the urn payment makes Urn X the preferred state. This yields the first hypothesis:

\begin{hypo}[General Motivated Reasoning Effect]
    For identical decision problems, participants will report higher posterior beliefs that Urn X was selected in the \textit{AP} block compared to the \textit{SP} block, conditional on the same signal realization.
\end{hypo}

As described in Section~\ref{sec:environment}, Bayesian agents average model-specific posterior probabilities using updated model weights. Best-fit agents select a single model and update their beliefs using that model. Preference-driven distortion changes model-specific posterior beliefs under both rules. Proposition~\ref{prop:stronger-distortion} shows that the beliefs of Bayesian agents change continuously with the strength of the distortion. The beliefs of best-fit agents change continuously when they do not switch models and can increase discretely when they switch models.

Prediction~\ref{pred:full-switch} and Proposition~\ref{prop:sensitivity} show that sufficiently strong preference-driven distortion can produce a full switch in Decision Problems 1--3 and 6--9. Without distortion, best-fit agents select the model with the lower posterior probability of Urn X after either signal, as Table~\ref{tab:dp-list} shows. Following a full switch, they select the model with the higher posterior probability after each signal. The beliefs of Bayesian agents lie between the two model-specific posterior probabilities, both before and after distortion. For the same degree of distortion, a full switch therefore produces a larger increase in the beliefs of best-fit agents than in those of Bayesian agents.\footnote{When there is no model switch, best-fit agents are affected by preference-driven distortion in a moderate and continuous way, as the distortion affects the model-specific posterior probabilities. However, in that case, the sizes of the belief changes for best-fit agents and Bayesian agents are not obvious.} This comparison motivates the following hypothesis about average motivated reasoning effects across agent types:

\begin{hypo}[Heterogeneity in Motivated Reasoning Effect]
    Asymmetric payoffs have a larger average effect on reported beliefs among participants classified as best-fit agents than among those classified as Bayesian agents.
\end{hypo}

Proposition~\ref{prop:sensitivity} characterizes the distortion required for a full switch. Define the objective fit gap as \(\lvert P_A(s)-P_B(s)\rvert\), the difference between the objective fits of the two models after signal \(s\) (Equation~\eqref{eq:objective-fit}). A full switch is possible in Decision Problems 1--3 and 6--9. Among these problems, those with the same state prior (Decision Problems 1--3, 6--7, and 8--9) have lower switching thresholds when their objective fit gaps are smaller.\footnote{The switching threshold is \(B/A=1+(B-A)/A\), with \(A\) and \(B\) defined in Equation~\eqref{eq:switch-gaps}. Its value depends on both the objective fit gap and \(A\).} For a given distribution of preference-driven distortion, lower thresholds can increase the fraction of best-fit agents who switch models. These switches move agents toward the model that implies a higher posterior probability of selecting Urn X. More frequent switches can therefore amplify the average upward shift in reported beliefs. This switching channel motivates the following empirical hypothesis about average belief effects across decision problems:

\begin{hypo}[Sensitivity in Motivated Reasoning]
    For participants classified as best-fit agents, the average effect of motivated reasoning on reported beliefs increases as the objective fit gap narrows.
\end{hypo}

\subsection{Experimental Procedure}\label{sec:procedure}

I collected experimental data from 193 student participants at the Vernon Smith Experimental Economics Laboratory (VSEEL) at Purdue University in March 2025. The experiment was pre-registered at the AEA RCT Registry in January 2025 (\href{https://doi.org/10.1257/rct.15136}{https://doi.org/10.1257/rct.15136}).\footnote{I pre-registered a target sample of 200 participants. The data were collected across 7 in-person sessions on March 5, March 6, and March 13, 2025. Due to fluctuations in the show-up rate, I eventually collected data from 193 participants.} I drew participants from the VSEEL subject pool (administered using ORSEE; \citealp{greiner2015subject}). The sample is balanced by gender, with 49.7\% female participants. The experiment was programmed in oTree \citep{chen2016otree}. Upon arrival, participants were assigned random participant IDs, and they electronically provided consent to participate. I provided the instructions to each participant on the computer screen, together with printed instructions. The experimenter read the instructions aloud at the front of the room. After that, participants completed ten incentivized quiz questions and earned \$0.50 for each question answered correctly. Participants received immediate feedback to clarify any misunderstandings they might have had, regardless of whether they answered the questions correctly.

\begin{figure}[h!]
    \centering
    \includegraphics[width=0.9\linewidth]{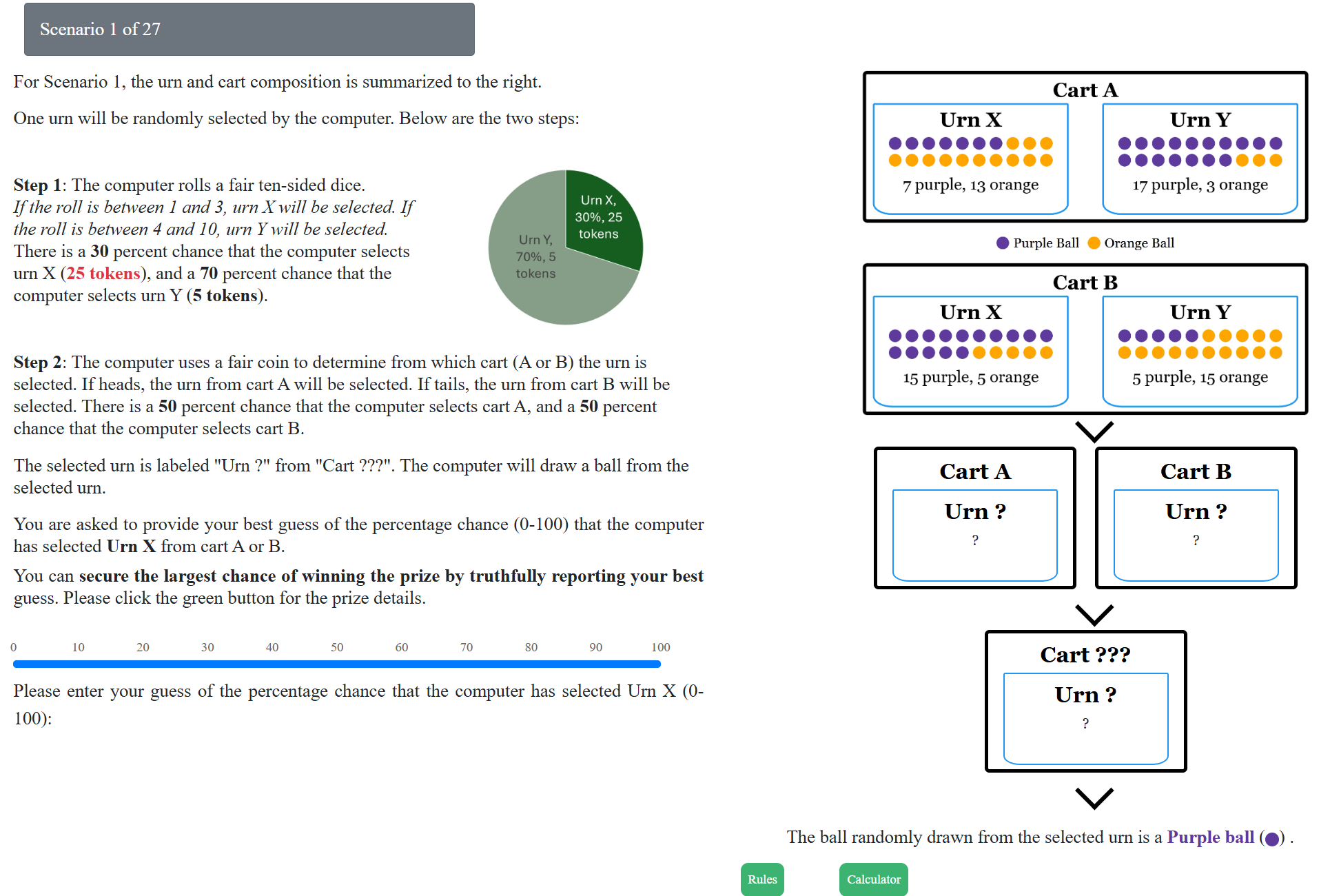}
    \caption{Screenshot of an Example Scenario}
    \label{fig:example_task}
\end{figure}

After the comprehension quiz, the main part of the experiment began. Each participant completed two blocks of belief-updating tasks and made one decision for each scenario. Figure \ref{fig:example_task} presents a screenshot of an example scenario.\footnote{Figures~\ref{fig:instruction_page_8} -- \ref{fig:instruction_page_12} in Appendix~\ref{app:instructions} show the full sequence of screens in a scenario.} The experiment lasted approximately 70 minutes, with an average total payment of \$22.15. Table~\ref{tab:demo} in Appendix~\ref{app:sample_characteristics} provides a summary of the demographic statistics.

The experiment also measures participants' risk attitude, probabilistic reasoning, overconfidence, and opinions on economic narratives in Parts 2--4. In Part 2, I use the bomb risk task from \citet{crosetto2013bomb}. Participants see 100 boxes, and one box is collected every second after they press Start. Each collected box is worth 15 tokens, but one randomly located box contains a bomb that destroys all earnings from the task. Participants decide when to stop collecting, and I use the number of boxes collected to measure risk attitude. A risk-neutral participant collects 50 boxes, and collecting more boxes indicates greater risk tolerance. In Part 3, I measure participants' probabilistic reasoning with eight cognitive tasks (see footnote~\ref{fn:cognitive-tasks} for the list of tasks). Participants have 10 minutes to complete this part. The cognitive score is the number of correct responses across the eight tasks. After these tasks, participants predict how many of the eight tasks they answered correctly. Overconfidence is an indicator equal to one if this predicted score exceeds the actual score. In Part 4, participants rate each of six pairs of conflicting economic narratives on a scale from zero to ten. Narrative extremeness is the mean absolute distance of the six ratings from the midpoint of five. It ranges from zero to five, and higher values indicate more extreme stated opinions.

\FloatBarrier

\section{Results}\label{sec:results}

This section first describes participants' reported beliefs and classifies them using their decisions from the \textit{Symmetric Payoff (SP)} block. It then examines the effect of motivated reasoning on reported beliefs.

\FloatBarrier

\subsection{Description of Behavioral Patterns}\label{sec:descriptive}

Each participant reports beliefs in 27 scenarios in the \textit{SP} block. For each scenario, I derive benchmark beliefs under the Bayesian and Best-fit behavioral rules from the decision problem and signal realization. Table~\ref{tab:unbiased_round_match} shows the shares of reported beliefs within two or five percentage points of these benchmarks. Nearly 29\% of symmetric payoff reports lie within two percentage points of the Bayesian benchmark, compared with 5.5\% for the Best-fit benchmark. With a five-percentage-point window, these shares rise to 47.7\% and 12.2\%, respectively.\footnote{A reported belief can lie within five percentage points of both benchmarks when they are sufficiently close. In Decision Problem 5, for example, both behavioral rules yield the same benchmark.} See Appendix~\ref{sec:robust} for details on the follow-up study supplying model-specific posteriors and its comparison with the main study.

\vspace{0.5em}

\begin{table}[htbp]
\caption{Classification of Round-level Beliefs in the \textit{SP} Block}
\label{tab:unbiased_round_match}
\centering
\begin{threeparttable}
\begin{tabular}{lcccc}
\toprule
\multirow{2}{*}{Type of Guess} & \multicolumn{2}{c}{Within 2 p.p.} & \multicolumn{2}{c}{Within 5 p.p.} \\
\cmidrule{2-3} \cmidrule{4-5}
 & \% & 95\%-CI & \% & 95\%-CI \\
\midrule
Bayesian & 28.98 & [25.29, 32.67] & 47.75 & [43.21, 52.28] \\
Best-fit & 5.51 & [4.47,  6.54] & 12.15 & [10.35, 13.94] \\
\bottomrule
\end{tabular}

\begin{tablenotes}\footnotesize
\item \textit{Notes:} The table reports the percentage of beliefs within the indicated distance of each benchmark. The sample contains 5,211 reports from 193 participants in the \textit{SP} block. The 95\% confidence intervals are based on standard errors calculated from participant-level shares.
\end{tablenotes}
\end{threeparttable}
\end{table}

\begin{figure}[htbp]
\centering
\includegraphics[width=0.9\linewidth]{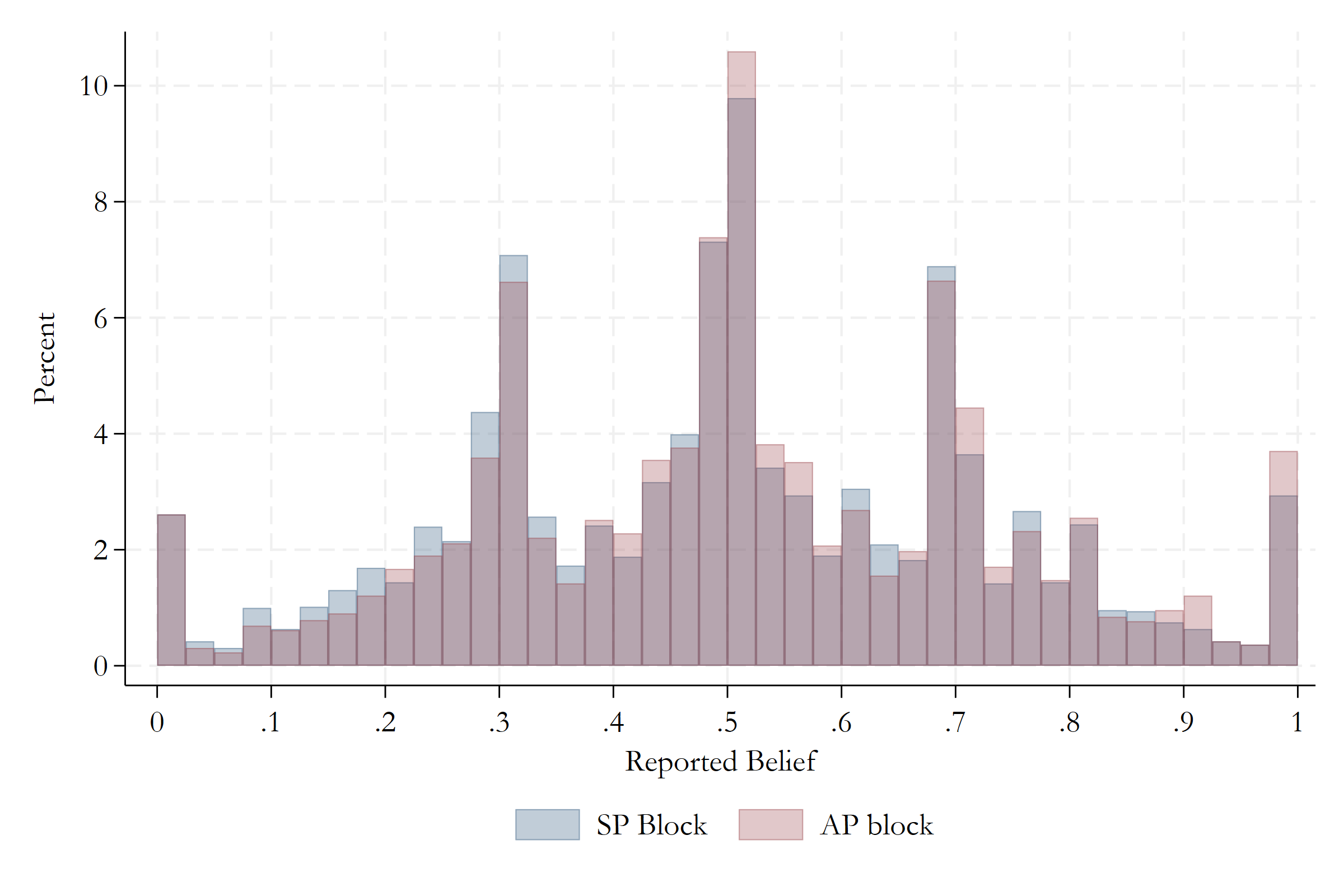}
\begin{minipage}{0.9\linewidth}\footnotesize
\textit{Notes:} The histograms pool all 10,422 scenarios from 193 participants. The horizontal axis shows the reported beliefs of Urn X. Blue bars represent the \textit{SP} block and red bars represent the \textit{AP} block.
\end{minipage}
\caption{Distribution of Reported Beliefs Across All Scenarios}
\label{fig:belief_overall}
\end{figure}

Figure~\ref{fig:belief_overall} displays the distribution of reported beliefs in the \textit{SP} and \textit{AP} blocks. Reported beliefs concentrate around the state priors of 0.3, 0.5, and 0.7 listed in Table~\ref{tab:dp-list}. A nontrivial share of beliefs also occurs at 0 and 1. The distributions overlap a lot, with the \textit{AP} distribution shifted slightly to the right.

Figure~\ref{fig:dp1_descriptive} plots participants' reported beliefs in Decision Problem 1. Each point represents a participant's vector of beliefs following the two signal realizations.\footnote{Participants face this decision problem three times in each block. The figure includes only participants who observe both signal realizations from Decision Problem 1 in both blocks (N = 120). For the signal realization observed twice within a block, the figure uses the first of the participant's two reported beliefs.} The yellow diamond represents the vector of Bayesian benchmark beliefs, and the green diamond represents the vector of Best-fit benchmark beliefs. Following \citet{aina2023tailored}, a vector of posterior beliefs is Bayes-consistent if the state prior is a strict convex combination of the posteriors across signals.\footnote{With a state prior of 0.5, this requires one posterior probability of Urn X to lie above 0.5 and the other to lie below 0.5, or both to equal 0.5.} The shaded areas mark the region of vectors of beliefs that are not Bayes-consistent. The left panel plots each participant's vector of beliefs in the \textit{SP} block. Many points cluster near the Bayesian benchmark, while a smaller group clusters near the Best-fit benchmark in the shaded area. Similar to \citet{aina2025weighting}, about half of the vectors are Bayes-inconsistent.

\begin{figure}[htbp]
\centering
\includegraphics[width=0.9\linewidth]{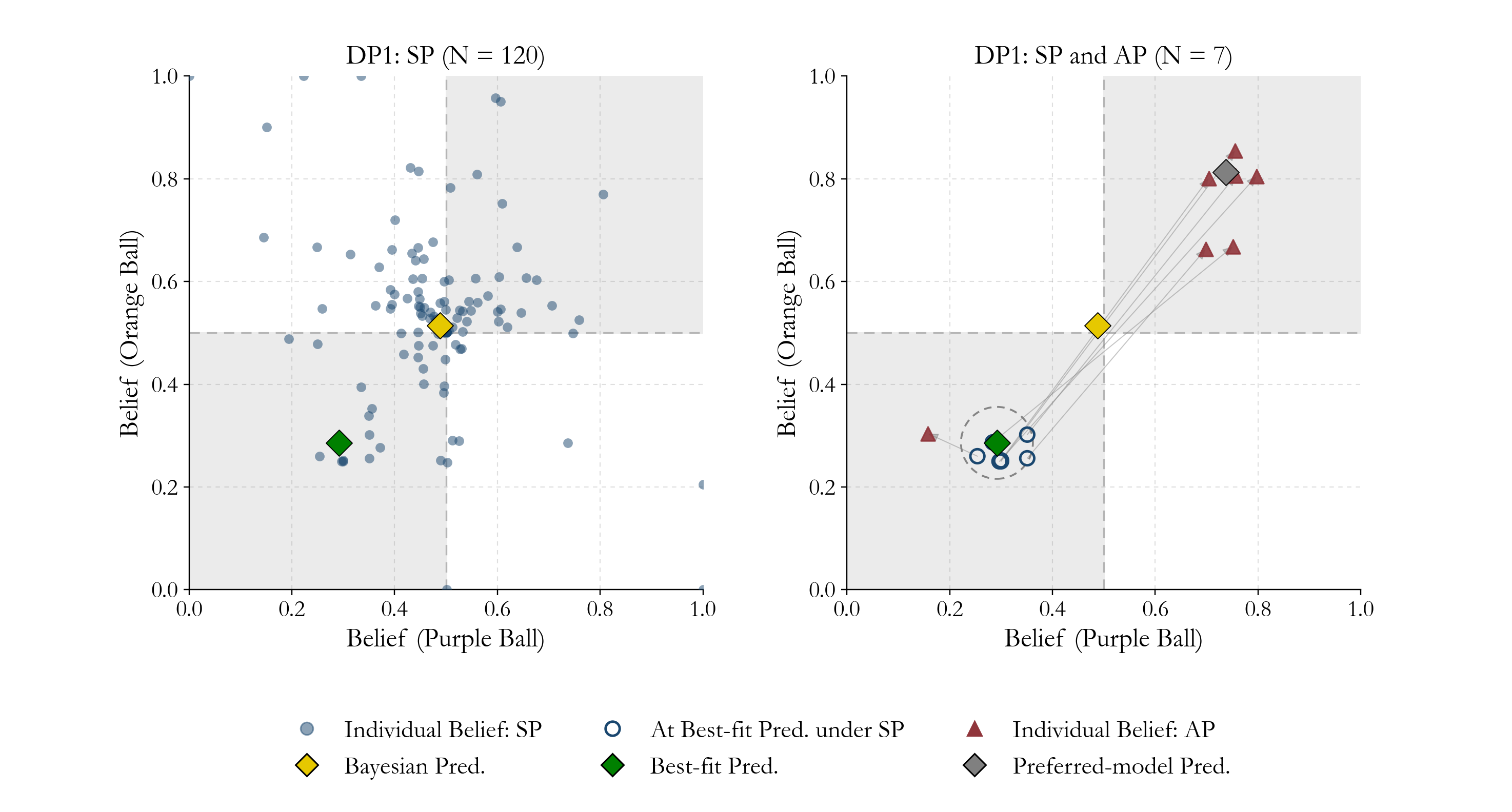}
\begin{minipage}{0.9\linewidth}\footnotesize
\textit{Notes:} Both panels use Decision Problem 1 and include only participants who observe both signal realizations in both blocks. Each belief is the participant's first reported belief after the given signal realization within a block. The horizontal axis shows beliefs after observing a purple ball, and the vertical axis shows beliefs after observing an orange ball. The left panel plots each participant's pair of beliefs in the \textit{SP} block. The right panel includes the participants whose \textit{SP} pair lies within a Euclidean distance of $0.05\sqrt{2}$ of the Best-fit benchmark, shown by the dashed circle. For each of these participants, a hollow circle marks the \textit{SP} pair, a triangle marks the \textit{AP} pair, and an arrow connects the two. The yellow and green diamonds represent the Bayesian and Best-fit benchmarks. The grey diamond represents the benchmark of using the model that implies a higher probability of Urn X. Shaded areas mark the Bayes-inconsistent region.
\end{minipage}
\caption{Vector of Reported Beliefs in Decision Problem 1}
\label{fig:dp1_descriptive}
\end{figure}

The right panel of Figure~\ref{fig:dp1_descriptive} provides a simple illustration of how participants near the Best-fit benchmark respond to asymmetric payoffs. It plots the seven participants whose \textit{SP} beliefs in Decision Problem 1 lie inside the dashed circle around the Best-fit benchmark. In the \textit{AP} block, six of them exhibit a large, discrete jump in their belief vectors, and their vectors are close to the grey diamond, which represents the beliefs implied by using the model with the higher posterior probability of Urn X after each signal realization. None of the seven moves to the Bayesian benchmark. This descriptive pattern is consistent with the full switch in Prediction~\ref{pred:full-switch}, in which a best-fit agent changes the selected model after both signal realizations. Section~\ref{sec:preference_effects} estimates the effects of asymmetric payoffs by behavioral
type.

\FloatBarrier

\subsection{Structural Classification of Agent Types}\label{sec:structural_classification}

To better understand these descriptive patterns and formally test the theoretical hypotheses, I use a finite mixture model to classify participants by their behavioral rules. This classification allows me to examine how asymmetric payoffs affect beliefs across behavioral types.

Classifying participants into distinct behavioral types presents several challenges. First, the behavioral rules participants follow are not directly observed. Second, different rules can yield similar or identical predictions in some decision problems. Third, computational and reporting errors can cause observed beliefs to deviate from a rule's predictions, even when a participant intends to follow that rule \citep{alos2023part}.

The structural classification therefore models errors explicitly. Without errors, each rule predicts a single posterior belief in each scenario. Reported beliefs rarely match these predictions exactly. Only 47.7\% of reports in the \textit{SP} block lie within five percentage points of the Bayesian benchmark, and 12.1\% lie within five percentage points of the Best-fit benchmark (Table~\ref{tab:unbiased_round_match}). A classification without errors would therefore leave many participants inconsistent with every behavioral rule. As in \citet{el1995people}, allowing for errors gives each behavioral rule a positive likelihood for every observed behavior. For example, a Best-fit updater may misidentify the best-fitting model because of a mistake in calculating model fit. This mistake matters most when the two models fit the observed signal similarly, because a small error in fit calculation can switch the selected model and move the reported belief to the other model's posterior. Participants may also report beliefs that differ from their intended responses.\footnote{When participants see the same decision problem and signal more than once in the \textit{SP} block, their reports differ by 9.3 percentage points on average, and only 14\% of repeated reports match exactly (Appendix~\ref{app:repeatability}).} Allowing for these errors means that the same reported belief can be produced by multiple behavioral rules, with different likelihoods. The model evaluates each participant's beliefs across all 27 scenarios in the \textit{SP} block to help distinguish behavioral rules from pure errors. As a result, a single report that deviates from a type's prediction carries limited weight in the classification.\footnote{The classification excludes all reports from the \textit{AP} block, because deriving benchmark beliefs for this block would require additional assumptions about participants' preference distortion term \(f(U(\omega))\).}

In the finite mixture model, each participant belongs to one of three latent types specified \textit{ex ante}: Bayesian, Best-fit, and Non-mover.\footnote{The Bayesian and Best-fit types are directly informed by the theory. The main specification includes a separate Non-mover type because Bayesian benchmarks lie close to the state prior in some decision problems. This type helps distinguish Bayesian updating from non-moving behaviors. See Appendix~\ref{app:robustness_types} for details on the estimation of the alternative two-, four-, and five-type classifications.} Bayesian agents average model-specific posteriors using their model weights; Best-fit agents select the model with the highest fit; and Non-movers retain the state prior. Following \citet{el1995people}, the model assumes that each participant follows the same underlying rule across scenarios.

The model allows two layers of mean-zero normal errors, with standard deviations shared across types. The first layer is an error in calculating model fit. In the \textit{SP} block, a participant perceives the objective fit of each model, defined in Equation~\eqref{eq:objective-fit}, with a normal error. Bayesian agents weight the model-specific posteriors using these perceived fits, and Best-fit agents select the model with the higher perceived fit. The second layer is a reporting error. It is added to the belief implied by the participant's rule, so the reported belief can deviate from that belief.\footnote{With the reporting error, the belief implied by the rule and the error can fall below zero or above one, but participants can only report beliefs between zero and one. Therefore, the likelihood uses normal densities for reports strictly between zero and one. For reports at zero and one, it uses the normal probabilities below zero and above one, respectively.} Non-movers have only reporting errors, as they only focus on the state prior. 

For example, consider a purple signal in Decision Problem 1 (see Table~\ref{tab:dp-list}). The objective fits of Models A and B are 0.475 and 0.600, and their model-specific posterior probabilities of Urn X are 0.737 and 0.292. Without errors, a Best-fit agent selects Model B and reports 0.292, and a Bayesian agent reports 0.488. A fit-calculation error can reverse the order of the two perceived fits. A Best-fit agent then selects Model A, and the belief implied by the rule becomes 0.737. Such a reversal is more likely when the two objective fits are close. A reporting error can move the reported belief away from the belief implied by the rule, for example from 0.292 to 0.302.

I estimate population shares and the two error standard deviations, one for each layer, by maximum likelihood, using all 5,211 reports from the \textit{SP} block. Population shares are the estimated fractions of each type in the population. Table~\ref{tab:parameter_estimates} presents the estimates, and Appendix~\ref{app:structural_details} describes the likelihood and classification procedure.



\begin{table}[htbp]
\caption{Three-Type Parameter Estimates}
\label{tab:parameter_estimates}
\centering
\begin{threeparttable}
\begin{tabular}{lcccc}
\toprule
\textbf{Parameter} && \textbf{Estimate} && \textbf{Bootstrap SE} \\
\hline
\multicolumn{2}{l}{\textit{Population shares}} \\
\(p_1\) (Bayesian) && 0.683 && 0.037 \\
\(p_2\) (Best-fit) && 0.174 && 0.026 \\
\(p_3\) (Non-mover) && 0.144 && 0.029 \\
\addlinespace
\multicolumn{2}{l}{\textit{First-layer (fit-calculation) error SD}} \\
\(\sigma_{1}\) (Bayesian and Best-fit) && 0.143 && 0.011 \\
\(\sigma_{1,\text{Non-mover}}\) && \multicolumn{1}{c}{---} \\
\addlinespace
\multicolumn{2}{l}{\textit{Second-layer (reporting) error SD}} \\
\(\sigma_{2}\) (all types) && 0.132 && 0.006 \\
\bottomrule
\end{tabular}

\begin{tablenotes}\footnotesize
\item \textit{Notes:} The table reports maximum likelihood estimates of population shares and error standard deviations shared across types. The sample contains 5,211 beliefs from 193 participants in the \textit{SP} block, with 27 beliefs per participant. Population shares are the estimated fractions of each type in the population. Standard errors are calculated from 500 bootstrap samples. See Appendix~\ref{app:structural_details} for details.
\end{tablenotes}
\end{threeparttable}
\end{table}

 Using Decision Problem 1 again as an example, the objective fits of the two models differ by 0.125 after either signal. At the estimated \(\sigma_1\) of 0.143, the model implies that a Best-fit agent selects the model with the lower objective fit after about 27\% of signals, and that a Bayesian agent places a higher weight on the model with the lower objective fit after about 27\% of signals in this problem.

\FloatBarrier

For each participant, I use the estimated population shares as prior probabilities of belonging to each behavioral type. I then update these priors using the likelihood of the participant's 27 reports in the \textit{SP} block under each type. This likelihood measures how well the type's predictions and estimated error standard deviations account for the participant's beliefs in the \textit{SP} block. A higher likelihood increases a type's posterior probability relative to other types, holding the prior probabilities fixed. Appendix~\ref{app:structural_details} explains how these likelihoods are calculated. I assign each participant to the type with the highest posterior probability and report the classification in Table~\ref{tab:individual_estimates}. Of 193 participants, 132 are classified as Bayesian, 33 are classified as Best-fit, and 28 are classified as Non-movers.\footnote{For 167 of the 193 participants, the assigned type has a posterior probability above 90\%. For 192 of the 193 participants, the assigned type has a posterior probability above 50\%. For one participant, the assigned type has a posterior probability of 0.42. Among the 167 participants whose assigned type has a posterior probability above 90\%, 31 (19\%) are classified as Best-fit and 118 (71\%) as Bayesian.}

\begin{table}[htbp]
\caption{Participant Classification}
\label{tab:individual_estimates}
\centering
\begin{threeparttable}
\begin{tabular}{lcccc}
\toprule
Type & Assigned & Above 50\% & Above 70\% & Above 90\% \\
\midrule
Bayesian & 132 & 132 & 129 & 118 \\
Best-fit & 33 & 32 & 32 & 31 \\
Non-mover & 28 & 28 & 25 & 18 \\
Unclassified & - & 1 & 7 & 26 \\
\bottomrule
\end{tabular}

\begin{tablenotes}\footnotesize
\item \textit{Notes:} The table reports the number of participants classified as each behavioral type. The sample contains 193 participants. The Assigned column classifies each participant as the type with the highest posterior probability. The Above 50\%, Above 70\%, and Above 90\% columns count only participants whose highest posterior probability exceeds 50\%, 70\%, and 90\%, respectively.
\end{tablenotes}
\end{threeparttable}
\end{table}

\begin{samepage}
\begin{result}
Approximately two-thirds of participants are classified as Bayesian, while approximately one-sixth are classified as Best-fit.
\end{result}
\end{samepage}

The structural classification suggests that most participants weight information from competing models, while about one-sixth of participants rely on a single model when updating beliefs. This proportion relies on the assumption that all types share the same error standard deviations. This assumption is strong, because participants who follow different rules may also differ in how precisely they calculate model fits and report beliefs. I therefore also estimate a version that allows type-specific error standard deviations. This version classifies 63 participants (33\%) as Best-fit (see Appendix~\ref{app:type_specific_errors} for details). All 33 participants classified as Best-fit in the main specification remain Best-fit, and 26 of the 30 additional Best-fit participants are classified as Bayesian in the main specification.

More participants are classified as Best-fit because, when error standard deviations differ across types, the Best-fit type can fit noisy reports with larger errors. The estimated Best-fit type has a much larger fit-calculation error than the Bayesian type (Table~\ref{tab:ts_parameter_estimates} in Appendix~\ref{app:type_specific_errors}), so it often selects the model with the lower objective fit, whose posterior is closer to the Bayesian benchmark than to the Best-fit benchmark. It can therefore fit participants whose reports are more dispersed around the Bayesian benchmark. With shared error standard deviations, the classification relies less on how dispersed a participant's reports are, so I use shared error standard deviations in the main specification and report the type-specific version in Appendix~\ref{app:type_specific_errors}.

\begin{figure}[htbp]
\centering
\includegraphics[width=0.88\linewidth]{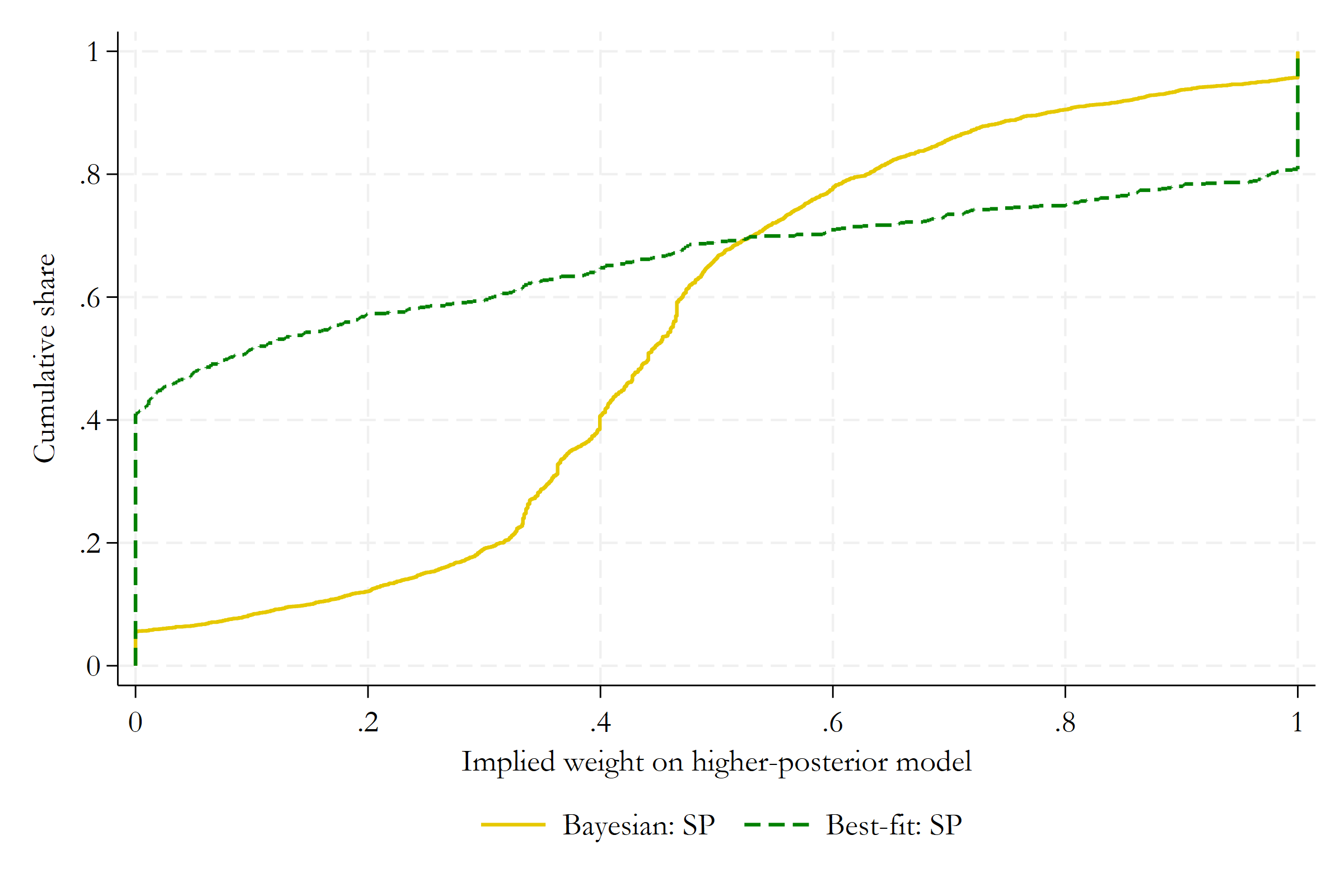}
\begin{minipage}{0.88\linewidth}\footnotesize
\textit{Notes:} The figure shows the cumulative distributions of implied weights on the higher model-specific posterior, separately for participants classified as Bayesian and Best-fit. The sample includes reports from Decision Problems 1--4 and 6--9 in the \textit{SP} block.
\end{minipage}
\caption{Report-level Implied Weights in the \textit{SP} Block}
\label{fig:weights_capped_up}
\end{figure}

Figure~\ref{fig:weights_capped_up} provides descriptive evidence on the classification results by showing how participants' beliefs relate to the two model-specific posteriors. For each report, I derive the implied weight on the higher model-specific posterior, which expresses the report as a weighted average of the two model-specific posteriors. In the example of Decision Problem 1 above, the two model-specific posteriors after a purple signal are 0.737 and 0.292. A reported belief of 0.5 is therefore a weighted average with a weight of \((0.5-0.292)/(0.737-0.292)\approx0.47\) on the higher posterior.\footnote{For beliefs below the lower model-specific posterior, the implied weight is set to zero. For beliefs above the higher model-specific posterior, it is set to one. In the \textit{SP} block, 17.3\% of reports in Decision Problems 1--4 and 6--9 lie outside the range of the two model-specific posteriors. In a related experiment, \citet{aina2025weighting} find that 12.1\% of guesses lie outside the range of the two model predictions.} Decision Problem 5 is excluded because the two model-specific posteriors coincide. The figure shows that implied weights for participants classified as Bayesian concentrate at intermediate values, while implied weights for Best-fit participants have larger masses at zero and one. These patterns are broadly consistent with the distinction between the two behavioral rules. See Appendix~\ref{app:weights} for report-level implied weights in the \textit{AP} block.

\FloatBarrier

\subsection{Effects of Motivated Reasoning}\label{sec:preference_effects}

This section compares beliefs between the \textit{SP} and \textit{AP} blocks to test the hypotheses about motivated reasoning. In the \textit{AP} block, participants receive a higher urn payment when Urn X is selected. Hypothesis 1 predicts higher beliefs of Urn X in the \textit{AP} block than in the \textit{SP} block. Hypothesis 2 predicts a larger average effect of motivated reasoning on reported beliefs among Best-fit participants than among Bayesian participants.

I first estimate the following specification:
\begin{equation}
\text{Belief}_{it}=\alpha+\beta_1\mathbbm{1}_{\text{Asym},it}
 +\boldsymbol{\beta}_2\mathbf{X}_i+\lambda_{d_{it}}+\eta_{s_{it}}+\tau_t+\epsilon_{it}.
\label{eqn:ols}
\end{equation}
Here, $\text{Belief}_{it}\in[0,1]$ is participant $i$'s reported belief of Urn X in round $t$. The indicator $\mathbbm{1}_{\text{Asym},it}$ equals one in the \textit{AP} block. The participant covariates $\mathbf{X}_i$ include gender, race, cognitive score, risk attitude, overconfidence, and narrative extremeness, defined in Section~\ref{sec:procedure}. I also include fixed effects for decision problem, signal color, and round. Standard errors are clustered by participant.\footnote{The results are robust to participant fixed effects. See Appendix~\ref{app:core_results} for details on these estimates and the first-block comparison.}

Figure~\ref{fig:mr_by_type} reports the estimated coefficient on the \textit{AP} block indicator for the pooled sample and separately for each behavioral type. Conditional on the specified participant covariates and fixed effects, this coefficient measures how asymmetric payoffs affect posterior beliefs of Urn X, and I refer to this coefficient as the effect of motivated reasoning. Consistent with Hypothesis 1, asymmetric payoffs increase the reported belief of Urn X by 1.7 percentage points overall ($p<0.001$). The estimated effect is 0.5 percentage points among Bayesian participants ($p=0.245$) and 8.4 percentage points among Best-fit participants ($p=0.001$). In addition, the estimated effect is significantly larger among Best-fit participants than among Bayesian participants ($p<0.001$), consistent with Hypothesis 2.\footnote{One may be concerned that participants who complete the \textit{AP} block first carry its effects into the \textit{SP} block, which would affect both their classification and the estimated effect of motivated reasoning. The order of the two blocks was randomized, and 91 participants completed the \textit{SP} block first. For these participants, types are classified only from beliefs reported before they faced asymmetric payoffs. Among them, the effect of motivated reasoning is 6.5 percentage points larger among Best-fit participants than among Bayesian participants ($p=0.014$).}

\begin{figure}[htbp]
\centering
\includegraphics[width=0.9\linewidth]{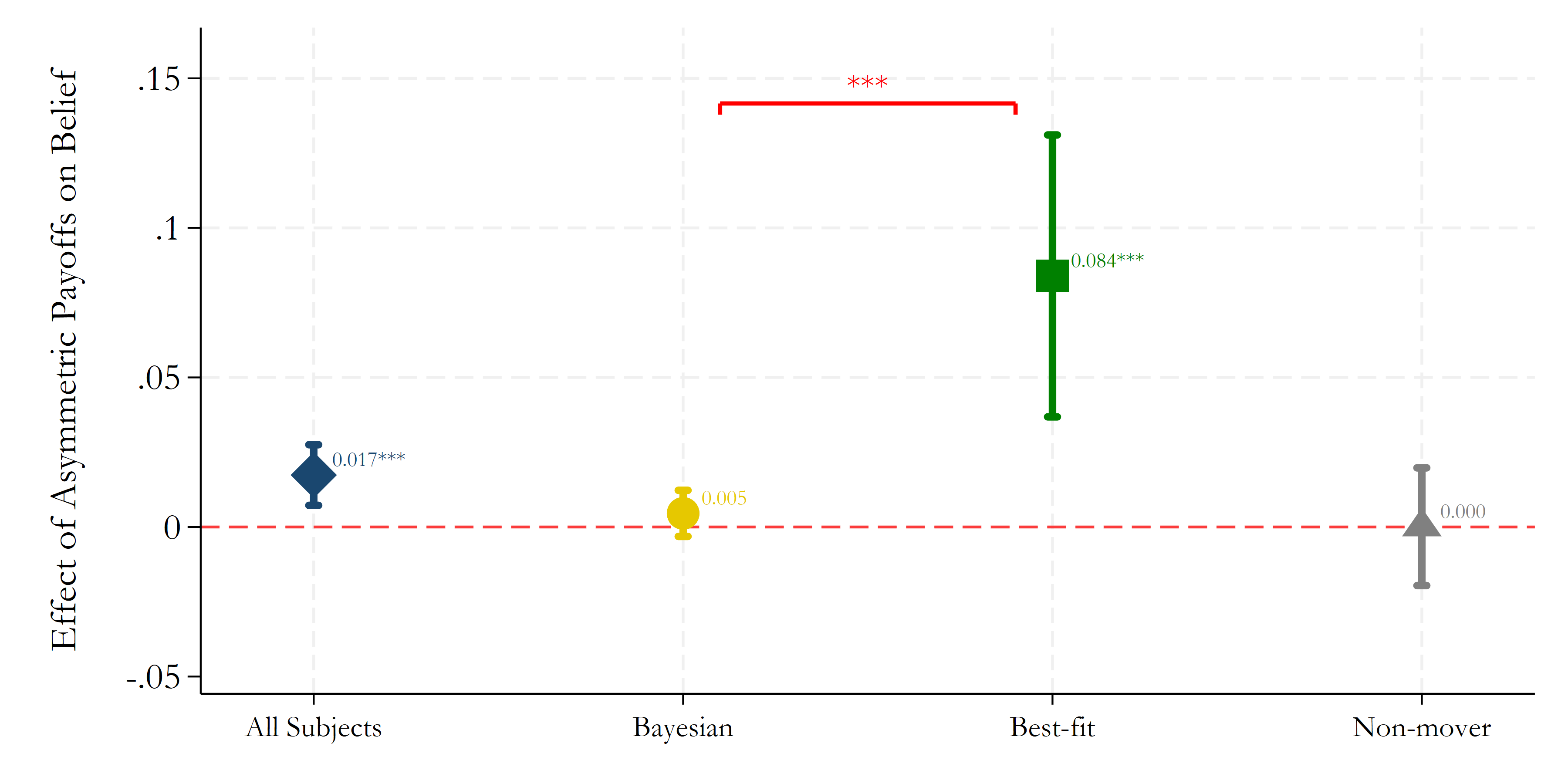}
\begin{minipage}{0.9\linewidth}\footnotesize
\textit{Notes:} The figure reports estimated coefficients on $\mathbbm{1}_{\text{Asym}}$ in Equation~\eqref{eqn:ols} for the pooled sample and separately for each behavioral type. The pooled, Bayesian, Best-fit, and Non-mover samples contain 193, 132, 33, and 28 participants, and each participant contributes 54 beliefs. Regressions control for gender, race, cognitive score, risk attitude, overconfidence, and narrative extremeness, and include fixed effects for signal color, round, and decision problem. Bars indicate 95\% confidence intervals based on standard errors clustered at the participant level. The red bracket reports the significance of the difference between the Best-fit and Bayesian coefficients. $^{*}p<0.10$, $^{**}p<0.05$, $^{***}p<0.01$.
\end{minipage}
\caption{Motivated Reasoning Effect by Type}
\label{fig:mr_by_type}
\end{figure}

\FloatBarrier
\vspace{0.25cm}

\begin{samepage}
\begin{result}
Asymmetric payoffs have a significantly larger effect on Best-fit participants' reported beliefs than on Bayesian participants' reported beliefs.
\end{result}
\end{samepage}

\vspace{0.25cm}
\FloatBarrier

I next examine how the effect of motivated reasoning varies with posterior type probabilities. I interact $\mathbbm{1}_{\text{Asym}}$ with each participant's posterior probabilities of being the Best-fit and Non-mover types, using the Bayesian type as the reference group. These probabilities are estimated from each participant's 27 reports in the \textit{SP} block. Table~\ref{tab:mr_likelihood} reports a positive coefficient of 0.082 on the interaction between $\mathbbm{1}_{\text{Asym}}$ and the Best-fit probability ($p<0.001$). Thus, the estimated effect of motivated reasoning increases with the Best-fit probability, which is consistent with the separate estimates in Figure~\ref{fig:mr_by_type}.\footnote{See Appendix~\ref{app:core_results} for details on effects of motivated reasoning and closeness to Best-fit versus Bayesian predictions measured using reports from other decision problems.}

\clearpage

\begin{table}[htbp]
\caption{Motivated Reasoning Effect by Likelihood of Type}
\label{tab:mr_likelihood}
\centering
\begin{threeparttable}
{
\def\sym#1{\ifmmode^{#1}\else\(^{#1}\)\fi}
\begin{tabular}{l*{4}{c}}
\toprule
DV: Reported Belief &&&& (1) \\
\midrule
\(\mathbbm{1}_{\text{Asym}}\) &&&& 0.004 \\
 &&&& (0.004) \\
\(\mathbbm{1}_{\text{Asym}}\) \(\times\) Likelihood of Best-fit Type &&&& 0.082\sym{***} \\
 &&&& (0.023) \\
\(\mathbbm{1}_{\text{Asym}}\) \(\times\) Likelihood of Non-mover Type &&&& -0.007 \\
 &&&& (0.011) \\
Likelihood of Best-fit Type &&&& -0.069\sym{***} \\
 &&&& (0.018) \\
Likelihood of Non-mover Type &&&& 0.004 \\
 &&&& (0.008) \\
\midrule
Observations &&&& 10422 \\
Participants &&&& 193 \\
Mean of DV &&&& 0.506 \\
\(R^2\) &&&& 0.343 \\
\bottomrule
\end{tabular}
}

\begin{tablenotes}\footnotesize
\item \textit{Notes:} The table reports OLS estimates of how the effect of motivated reasoning varies with posterior type probabilities. The dependent variable is reported belief. Unreported controls include gender, race indicators, cognitive score, risk attitude, overconfidence, and narrative extremeness. The specification includes fixed effects for signal color, round, and decision problem. Standard errors clustered at the participant level appear in parentheses. $^{*}p<0.10$, $^{**}p<0.05$, $^{***}p<0.01$.
\end{tablenotes}
\end{threeparttable}
\end{table}

The main results on motivated reasoning are robust to alternative two-, four-, and five-type classifications. Asymmetric payoffs have a significantly larger effect among Best-fit participants than among Bayesian participants under all four specifications. See Appendix~\ref{app:robustness_types} for details on the corresponding estimated effects of motivated reasoning and regressions using posterior type probabilities.

As described in Section~\ref{sec:structural_classification}, the main specification assumes that the error standard deviations are shared across types. When I relax this assumption and allow type-specific error standard deviations, asymmetric payoffs still have a significantly larger effect among Best-fit participants than among Bayesian participants. The estimated effects are 4.9 and 0.1 percentage points, respectively, compared with 8.4 and 0.5 percentage points in Figure~\ref{fig:mr_by_type}, and their difference is statistically significant ($p<0.001$). See Appendix~\ref{app:type_specific_errors} for details on this specification.

\FloatBarrier

\subsection{Tests of the Model-Switch Predictions}\label{sec:prediction_tests}
                                          
I next examine how the effect of motivated reasoning varies across decision problems. In Decision Problems 1--3 and 6--9, the objectively best-fitting model assigns a lower posterior probability to Urn X than the other model after either signal. A sufficiently strong preference-driven distortion may therefore lead Best-fit agents to select the model with the higher posterior probability. Prediction~\ref{pred:full-switch} and Proposition~\ref{prop:sensitivity} establish the conditions for this switch. In Decision Problem 4, the objectively best-fitting model already assigns the higher posterior probability to Urn X. In Decision Problem 5, participants see two identical models, so the two model-specific posteriors coincide. Neither problem therefore allows a switch from a model with a lower posterior probability to one with a higher posterior probability. Preference-driven distortion may still affect Best-fit agents' beliefs through changes in model-specific posterior probabilities.\footnote{Decision Problems 1 and 4 have the same state prior, objective fit gap, and gap between model-specific posteriors. The best-fitting model implies the lower posterior probability of Urn X in Decision Problem 1 and the higher posterior probability in Decision Problem 4. Appendix~\ref{app:core_results} compares their estimated effects to examine the role of model switching.}

\begin{figure}[htbp]
\centering
\includegraphics[width=0.9\linewidth]{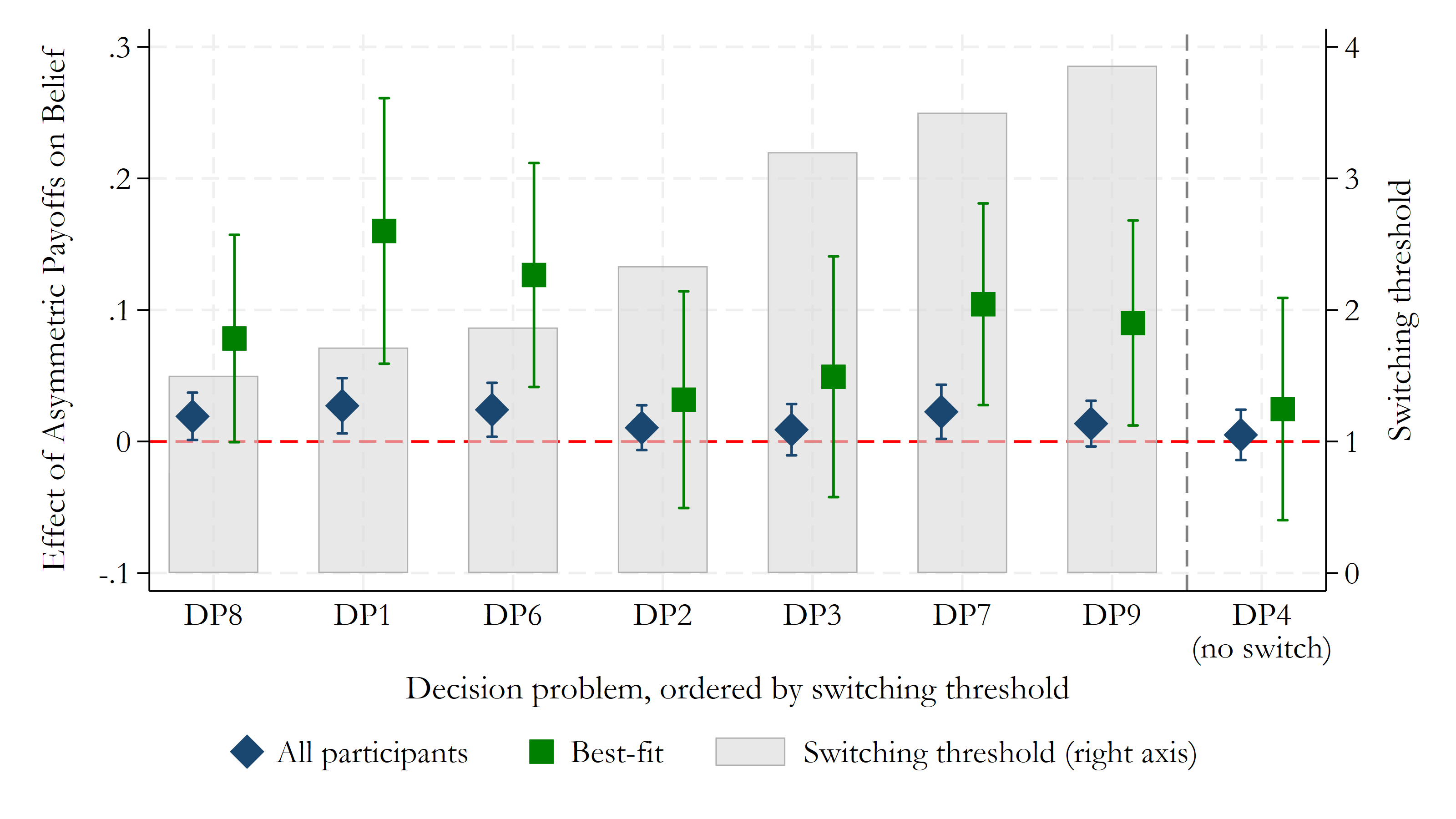}
\begin{minipage}{0.9\linewidth}\footnotesize
\textit{Notes:} The figure reports estimated effects of motivated reasoning on the reported beliefs of Urn X for each decision problem (left axis). Navy diamonds represent estimates for the pooled sample, and green squares represent estimates for participants classified as Best-fit. Decision problems are ordered by the switching threshold, shown by the grey bars (right axis). Decision Problem 4 appears last because no switch is possible. Decision Problem 5 is excluded because participants see identical models. The regressions include participant, decision-problem-by-signal, and round fixed effects. Bars indicate 95\% confidence intervals based on standard errors clustered at the participant level.
\end{minipage}
\caption{Motivated Reasoning Effects Across Decision Problems}
\label{fig:all_dp_effects}
\end{figure}

Figure~\ref{fig:all_dp_effects} reports the estimated effects of motivated reasoning in each decision problem separately. The decision problems are ordered by their switching threshold, \(1/R\), discussed in Proposition~\ref{prop:sensitivity}. The switching threshold is the smallest preference-driven distortion toward Urn X that leads a Best-fit agent to select the model with the higher posterior probability of Urn X. A lower threshold means that a weaker preference is enough to switch models. In Decision Problem 4, the best-fitting model already implies the higher posterior probability of Urn X, so no switch is possible. Several Best-fit estimates are significantly positive, but they are imprecise and do not decline steadily as the threshold rises.\footnote{State priors may affect how easily participants assess model fit. Decision Problems 1--3 share a state prior of 0.5, under which participants can determine each model's fit by counting the balls of the observed color in the cart. Among these three problems, the Best-fit estimate is largest in Decision Problem 1, which has the lowest threshold.}

I next test Hypothesis 3 with regressions. Hypothesis 3 implies that the effect among Best-fit participants should be larger in problems with lower thresholds. I use two measures of how much preference-driven distortion is needed to switch models: the objective fit gap, $|P_A(s)-P_B(s)|$, and the switching threshold, $1/R$, discussed in Proposition~\ref{prop:sensitivity} and used in Hypothesis 3. Using Decision Problems 1--3 and 6--9, I interact $\mathbbm{1}_{\text{Asym}}$ with each measure in a separate regression. The regressions include participant, decision-problem-by-signal, and round fixed effects.\footnote{Decision Problems 4--5 are excluded as no switch is possible.} They also interact $\mathbbm{1}_{\text{Asym}}$ with state-prior indicators, which allows the effect of motivated reasoning to vary across state priors.

\begin{table}[htbp]
\caption{Motivated Reasoning, Model Fit, and Switching Thresholds}
\label{tab:mr_gap}
\centering
\begin{threeparttable}
\begin{tabular}{@{}c@{}}
\begin{adjustbox}{max width=\linewidth}
{
\def\sym#1{\ifmmode^{#1}\else\(^{#1}\)\fi}
\begin{tabular}{l*{6}{c}}
\toprule
DV: Reported Belief && (1) & (2) && (3) & (4) \\
\midrule
\(\mathbbm{1}_{\text{Asym}}\) && 0.035\sym{**} & 0.177\sym{***} && 0.034\sym{**} & 0.157\sym{**} \\
 && (0.015) & (0.063) && (0.015) & (0.064) \\
\(\mathbbm{1}_{\text{Asym}}\) \(\times\) objective fit gap && -0.054 & -0.269 &&  &  \\
 && (0.044) & (0.168) &&  &  \\
\(\mathbbm{1}_{\text{Asym}}\) \(\times\) switching threshold &&  &  && -0.004 & -0.014 \\
 &&  &  && (0.004) & (0.014) \\
\midrule
Sample && All & Best-fit && All & Best-fit \\
Observations && 8106 & 1386 && 8106 & 1386 \\
Participants && 193 & 33 && 193 & 33 \\
Mean of DV && 0.505 & 0.465 && 0.505 & 0.465 \\
\(R^2\) && 0.513 & 0.380 && 0.513 & 0.379 \\
\bottomrule
\end{tabular}
}

\end{adjustbox}
\end{tabular}
\begin{tablenotes}\footnotesize
\item \textit{Notes:} The table reports OLS estimates of how the effect of motivated reasoning varies with the objective fit gap and switching threshold. The dependent variable is reported belief. The sample includes Decision Problems 1--3 and 6--9. Columns (1) and (3) use all participants, while columns (2) and (4) use participants classified as Best-fit. The objective fit gap is $|P_A(s)-P_B(s)|$, and the switching threshold is $1/R$, discussed in Proposition~\ref{prop:sensitivity}. All specifications include interactions between $\mathbbm{1}_{\text{Asym}}$ and state-prior indicators, together with participant, decision-problem-by-signal, and round fixed effects. Standard errors clustered at the participant level appear in parentheses. $^{*}p<0.10$, $^{**}p<0.05$, $^{***}p<0.01$.
\end{tablenotes}
\end{threeparttable}
\end{table}

Table~\ref{tab:mr_gap} reports negative interaction coefficients for both measures. The estimates are consistent with Hypothesis 3 directionally, although none of the interaction coefficients is statistically significant. With 33 Best-fit participants and seven decision problems, the design has limited power to detect how the effect of motivated reasoning varies across decision problems. Another possible reason is that preference-driven distortion differs across participants. A Best-fit participant with a strong distortion switches models in every decision problem, and one with a weak distortion never switches. Only participants with an intermediate distortion switch in decision problems with low thresholds but not in those with high thresholds. Differences in the average effect across decision problems therefore come only from this intermediate group, which may be quite small.

\subsection{Belief Extremeness and Individual Characteristics}\label{sec:determinants}

This section examines the extremeness of reported beliefs, measured by their absolute distance from 0.5 \citep{fan2026narratives}. A reported belief of 0.1 or 0.9 has an extremeness of 0.4, while a reported belief of 0.5 has an extremeness of zero. In Decision Problems 1--4 and 6--7, the state priors are 0.5 and 0.3, and the Best-fit benchmark is further from 0.5 than the Bayesian benchmark. In Decision Problems 8--9, where the state prior is 0.7, the Best-fit benchmark instead predicts less extreme beliefs than the Bayesian benchmark. The Bayesian and Best-fit benchmarks coincide in Decision Problem 5. Across all nine decision problems in the \textit{SP} block, mean belief extremeness is 0.292 among participants classified as Best-fit and 0.162 among those classified as Bayesian (\(p < 0.001\)). In Decision Problems 1--4, the corresponding means are 0.263 and 0.079 (\(p < 0.001\)).

Figure~\ref{fig:extremeness_types} shows the cumulative distributions of report-level extremeness from Decision Problems 1--5 in the \textit{SP} block. In Decision Problems 1--4, the Best-fit rule generates more extreme benchmark beliefs than the Bayesian rule. In Decision Problem 5, both rules produce the same benchmark. Best-fit participants report higher extremeness levels in both Decision Problems 1--4 (\(p < 0.001\)) and Decision Problem 5 (\(p = 0.038\)), while the difference in mean belief extremeness between Best-fit and Bayesian participants is significantly larger in Decision Problems 1--4 than in Decision Problem 5 (\(p<0.001\)). One may be concerned that the difference in belief extremeness is driven entirely by the parameters of the decision problems. The larger gap in Decision Problems 1--4 suggests that differences between the rules' predictions help explain why Best-fit participants report more extreme beliefs. However, Best-fit participants also report more extreme beliefs on average in Decision Problem 5, where both rules predict the same beliefs. This finding suggests that the difference in belief extremeness is not solely driven by benchmark predictions.

\begin{figure}[htbp]
\centering
\begin{minipage}{0.49\linewidth}\centering
\small DP 1--4\\
\includegraphics[width=\linewidth]{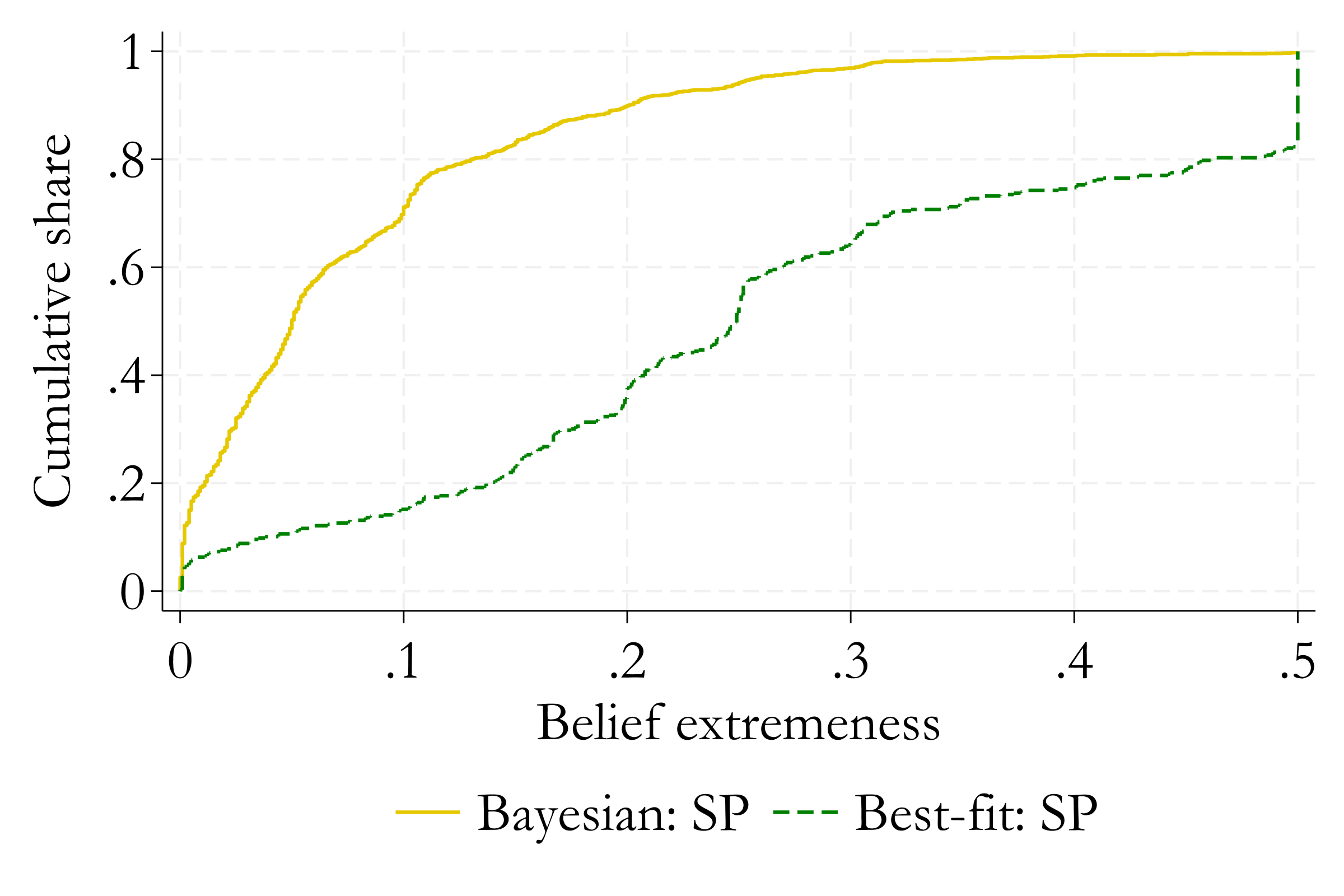}
\end{minipage}\hfill
\begin{minipage}{0.49\linewidth}\centering
\small DP 5\\
\includegraphics[width=\linewidth]{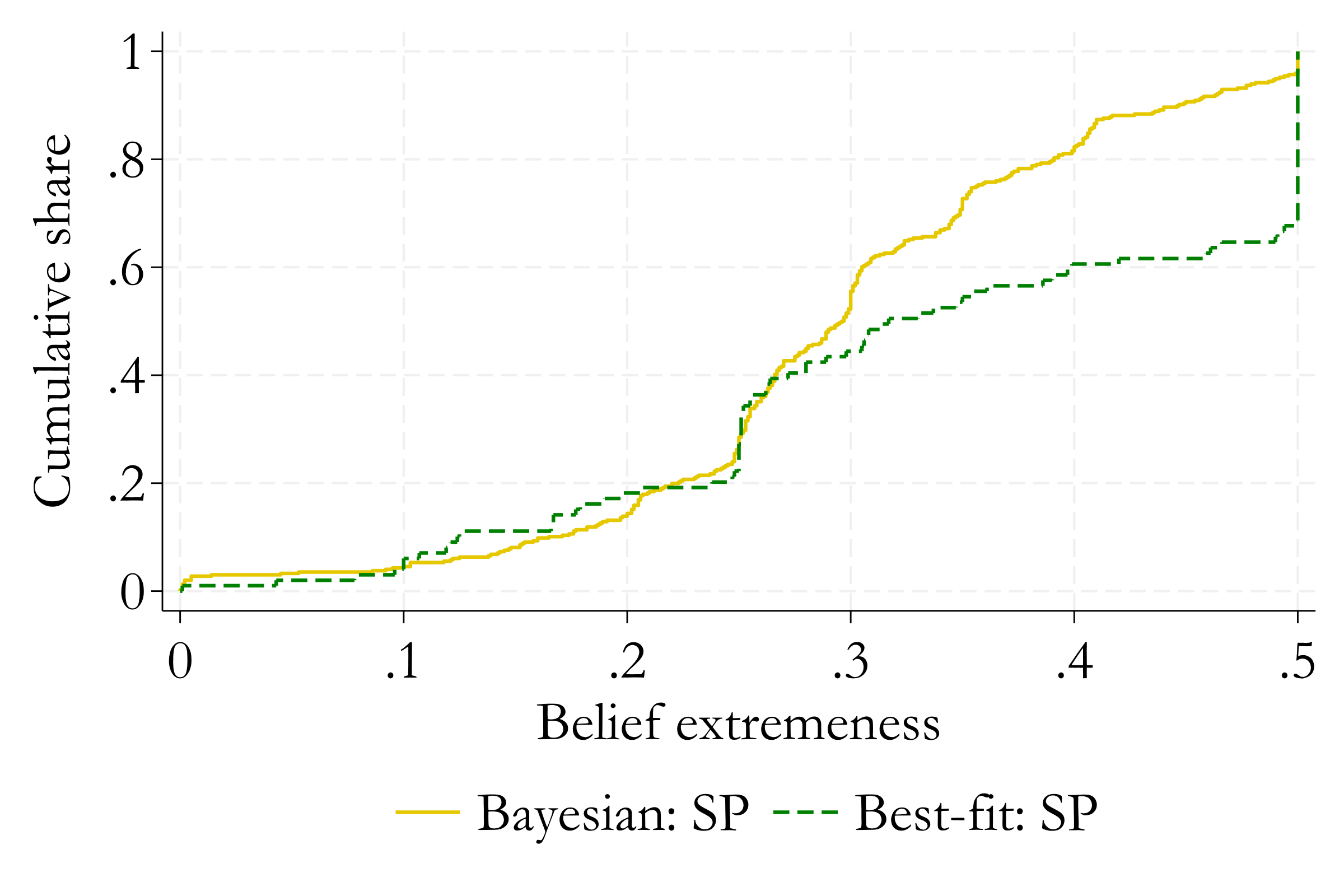}
\end{minipage}
\begin{minipage}{0.95\linewidth}\footnotesize
\textit{Notes:} The curves show the cumulative distributions of belief extremeness, $|\text{Belief}_{it}-0.5|$, in the \textit{SP} block for participants classified as Bayesian and Best-fit. The left panel uses Decision Problems 1--4, and the right panel uses Decision Problem 5.
\end{minipage}
\caption{Belief Extremeness by Type in the Symmetric Payoff Block}
\label{fig:extremeness_types}
\end{figure}

I also explore which individual characteristics are associated with behavioral type classification. Table~\ref{tab:type_determinant_logit} reports average marginal effects from separate logit regressions for Bayesian and Best-fit types. The regressions include gender, race indicators, cognitive score, risk attitude, overconfidence, narrative extremeness, quiz score, and previous experimental participation. Narrative extremeness, defined in Section~\ref{sec:procedure}, comes from the unincentivized narrative survey at the end of the experiment.

\begin{table}[htbp]
\caption{Individual Characteristics and Type Classification}
\label{tab:type_determinant_logit}
\centering
\begin{threeparttable}
{
\def\sym#1{\ifmmode^{#1}\else\(^{#1}\)\fi}
\begin{tabular}{l*{4}{c}}
\toprule
\multirow{2}{*}{DV: Type Classification} && Bayesian && Best-fit \\
\cmidrule{3-5}
&& (1) && (2) \\
\midrule
Female && 0.032 && -0.004 \\
 && (0.066) && (0.054) \\
Quiz Score && 0.056 && 0.025 \\
 && (0.040) && (0.037) \\
Cognitive Score && 0.010 && -0.008 \\
 && (0.028) && (0.023) \\
Risk Attitude && 0.005\sym{***} && -0.001 \\
 && (0.002) && (0.001) \\
Overconfidence && -0.075 && -0.034 \\
 && (0.080) && (0.064) \\
Narrative Extremeness && -0.094\sym{**} && 0.067\sym{**} \\
 && (0.037) && (0.029) \\
\midrule
Observations && 193 && 193 \\
Mean of DV && 0.684 && 0.171 \\
Pseudo \(R^2\) && 0.109 && 0.121 \\
\bottomrule
\end{tabular}
}

\begin{tablenotes}\footnotesize
\item \textit{Notes:} The table reports average marginal effects from binary logit regressions. The dependent variables are binary indicators for Bayesian and Best-fit types in columns (1) and (2). Each regression uses all 193 participants. All displayed characteristics enter both regressions. The regressions include gender, race indicators, cognitive score, risk attitude, overconfidence, narrative extremeness, quiz score, and previous experimental participation. Delta-method standard errors appear in parentheses. $^{*}p<0.10$, $^{**}p<0.05$, $^{***}p<0.01$.
\end{tablenotes}
\end{threeparttable}
\end{table}

The regression results show that narrative extremeness is positively associated with being classified as the Best-fit type in the belief updating tasks in this experiment. The corresponding association with Bayesian classification is negative, and both associations are statistically significant at the 5\% level. Risk attitude is positively associated with Bayesian classification ($p<0.01$).\footnote{This association is economically modest. A one-standard-deviation increase in the number of boxes collected in the bomb risk task (about 21 boxes) raises the probability of Bayesian classification by about 10 percentage points. Hedging against the urn payment is also unlikely to explain this association. The classification uses only beliefs from the \textit{SP} block, where the urn payment is the same for both urns, so participants have little reason to hedge. In the \textit{AP} block, hedging could lower reported beliefs of Urn X, especially among more risk-averse participants. Participants classified as Best-fit are more risk-averse on average, so hedging would, if anything, reduce their estimated effect of motivated reasoning relative to Bayesian participants.} The estimated coefficients of quiz score, cognitive score, and overconfidence are not statistically significant for either type.

This finding suggests that the behavioral rules participants use in the laboratory may sometimes reflect tendencies that extend beyond the abstract belief-updating tasks. Participants classified as Best-fit also tend to express stronger views on economic narratives. However, narrative extremeness is measured through an unincentivized survey. This association therefore provides weak evidence of a broader connection. See Appendix~\ref{app:narrative_associations} for details on type-assignment regressions including Non-movers and associations between narrative extremeness and belief updating.

\FloatBarrier

\subsection{Further Discussion}\label{sec:interpretation}

The results connect differences in belief updating and model selection to differences in responses to asymmetric payoffs. Asymmetric payoffs increase reported beliefs in the preferred state, which supports Hypothesis 1. This effect is significantly larger among participants classified as Best-fit than among those classified as Bayesian, which supports Hypothesis 2.

The results do not provide clear support for Hypothesis 3. Among Best-fit participants, the point estimates suggest a larger effect of motivated reasoning when the objective fit gap is smaller or the switching threshold is lower, but these estimates are imprecise and never statistically significant. The objective fit gap may not be the only determinant of Best-fit participants' model-switching behavior. Salience, complexity, or model-specific posteriors might also affect this process. In addition, Hypothesis 3 tests a further link between model switching and the effect of motivated reasoning. Preference-driven distortion affects Best-fit agents through two channels: model selection and model-specific posterior beliefs. The second channel still operates when model switching is less frequent. Together, these issues limit how clearly I can test this hypothesis.

I also explore the extremeness of reported beliefs across behavioral rules and find that Best-fit participants report more extreme beliefs on average. I further explore the potential connection between behavior in this belief updating task and behavior in broader settings. The results suggest that participants classified as Best-fit tend to hold stronger opinions on economic policy narratives.

\FloatBarrier

\section{Conclusion}\label{sec:conclusion}

This paper examines how preferences over payoff-relevant states affect model selection and belief updating when multiple competing models can explain the same information. I develop a theoretical framework in which preference-driven distortion affects how individuals perceive the information structure. The same distortion applies to Bayesian and Best-fit behavioral rules, and it distorts Bayesian agents' model weights and may switch Best-fit agents' model selection. In the experimental setting with two states, two signals, and two models, stronger bias shifts beliefs toward the preferred state under both rules. Bayesian beliefs change continuously as the distortion increases. Best-fit beliefs can also jump when individuals switch to a model that assigns a higher probability to the preferred state. This theoretical framework characterizes when these model switches occur and their dependence on the information structure.

The experiment shows that asymmetric payoffs have a larger effect on participants classified as best-fit updaters than on those classified as Bayesian updaters. Behavior in the symmetric payoff block therefore helps predict whose beliefs are more affected by preferences. This finding suggests that differences in model selection and belief updating matter for understanding motivated reasoning. Even when individuals face the same information and competing explanations, their responses to preferences can differ substantially.

The framework offers one explanation for this difference. Preferences can change how well a model appears to explain the data, making a model that favors the preferred state more likely to be selected. Individuals may therefore arrive at a preferred belief while following what they perceive as the best explanation of the evidence. 
The experimental results are consistent with this explanation, although the variation across decision problems predicted by Hypothesis 3 is not statistically significant. However, the experiment elicits only beliefs, so I cannot directly measure how much of the effect of motivated reasoning comes from model switches. Eliciting model choices alongside beliefs would help separate model switches from changes in model-specific posteriors within the selected model.

These results have implications for efforts to reduce biased beliefs by presenting competing explanations. Both models are available in the experiment, and individuals differ in whether they weight them or select one model to update beliefs. The effectiveness of providing alternative explanations may therefore depend on how individuals use them when updating beliefs. A useful next step is to examine whether encouraging individuals to weight multiple models could reduce the effect of motivated reasoning.

A further question is when individuals adopt different updating rules. The experiment uses simple models that participants can compare directly. In many settings, individuals face several complex models whose assumptions and predictions are difficult to assess. Comparing and evaluating these models may require considerable time and effort. As a result, individuals may then rely on a single model, and their preferences may influence which model they select. Future experiments could vary the number of models or the complexity of the model. This would help us understand whether individuals change their updating rules across settings and whether these changes affect the magnitude of motivated reasoning.

The supply of competing models also deserves further attention. In this experiment, the models are given by the experimenter, and their availability does not depend on participants' preferences. In political and financial settings, interested parties often choose which explanations to present. A sender who knows the receiver's preferences may offer a model that supports the receiver's preferred state while fitting the observed data. The optimal sending strategy may also depend on whether the receiver weights competing models or selects one. Future research could allow senders to choose models and explore how senders respond to differences in receivers' preferences and updating behavior, and whether their choices strengthen the effect of motivated reasoning.

\clearpage

\bibliographystyle{aea}
\bibliography{Literature.bib}

\newpage
\appendix

\renewcommand{\thefigure}{\Alph{section}-\arabic{figure}}
\setcounter{figure}{0}
\renewcommand{\thetable}{\Alph{section}-\arabic{table}}
\setcounter{table}{0}

\renewcommand{\thepage}{Appendix \Alph{section}, p.~\arabic{page}}
\setcounter{page}{1}

\renewcommand{\theequation}{\Alph{section}-\arabic{equation}}
\renewcommand{\theHequation}{appendix.\Alph{section}.\arabic{equation}}
\setcounter{equation}{0}

\section{Further Theoretical Analysis}

This appendix derives several identities used in Section~\ref{sec:theory} and proves each proposition and prediction. The assumptions specified in Sections~\ref{sec:environment} and \ref{sec:distortion} apply throughout.

\subsection{Model Fit}\label{app:fit-identities}

For every model, objective and subjective fits satisfy
\begin{equation}\label{eq:fit-sums}
    \sum_{s\in S}P_m(s)=1,
    \qquad
    \sum_{s\in S}\hat P_m(s)
    =
    \sum_{\omega\in\Omega}\mu_0(\omega)f(U(\omega)).
\end{equation}
Both equalities follow by exchanging the state and signal sums and using \(\sum_{s\in S}\pi_m(s\mid\omega)=1\). Their right-hand sides are independent of the model.

Suppose \(S=\{s_1,s_2\}\). Subtracting Equation~\eqref{eq:fit-sums} for models \(m\) and \(m'\) gives Equation~\eqref{eq:fit-reversal}. This derivation allows any finite number of states.

Taking the ratio of the weights assigned to \(m\) and \(m'\), before and after preference-driven distortion, and using Equation~\eqref{eq:subj-fit} gives
\[
    \frac{\hat\rho(m\mid s)}
    {\hat\rho(m'\mid s)}
    =
    \frac{\rho(m\mid s)}
    {\rho(m'\mid s)}
    \frac{
        \sum_{\omega\in\Omega}
        \mu_m(\omega\mid s)f(U(\omega))
    }{
        \sum_{\omega\in\Omega}
        \mu_{m'}(\omega\mid s)f(U(\omega))
    }.
\]
Preference-driven distortion increases the weight on \(m\) relative to \(m'\) exactly when the posterior expectation of \(f(U)\) is larger under \(m\).

A related observation holds with any finite signal space. Suppose \(M=\{m,m'\}\) and \(\pi_m(s\mid\omega)=\pi_{m'}(s\mid\omega)\) for every \(s\in S\) and \(\omega\neq\bar\omega\). Define \(\delta(s)=\pi_m(s\mid\bar\omega)-\pi_{m'}(s\mid\bar\omega)\). Then
\[
    P_m(s)-P_{m'}(s)=\mu_0(\bar\omega)\delta(s),
    \qquad
    \hat P_m(s)-\hat P_{m'}(s)
    =f(U(\bar\omega))\big[P_m(s)-P_{m'}(s)\big].
\]
The objective and subjective fit differences therefore have the same sign after every signal. Preference-driven distortion cannot change the model selected by a best-fit agent.

\clearpage

\subsection{Proofs}

\subsubsection{Proof of Proposition~\ref{prop:fit-representative}}
\label{app:proof-fit-representative}

\begin{proof}
Suppose \(m\) is more representative than \(m'\) for \(s\). The subjective fit difference is
\[
    \hat P_m(s)-\hat P_{m'}(s)
    =\sum_{\omega\in\Omega}\mu_0(\omega)f(U(\omega))
    \big[\pi_m(s\mid\omega)-\pi_{m'}(s\mid\omega)\big].
\]
Every term is nonnegative. At least one is positive because \(\mu_0(\omega)>0\), \(f(U(\omega))>0\), and one likelihood inequality is strict. Hence, \(\hat P_m(s)>\hat P_{m'}(s)\), which proves part 1.

The common denominator in Equation~\eqref{eq:distorted-weight} cancels from the ratio of model weights. Therefore,
\[
    \frac{\hat\rho(m\mid s)}{\hat\rho(m'\mid s)}
    =\frac{\rho_0(m)}{\rho_0(m')}
    \frac{\hat P_m(s)}{\hat P_{m'}(s)}
    >\frac{\rho_0(m)}{\rho_0(m')}.
\]
This proves part 2. Part 1 also implies that \(m'\) cannot maximize subjective fit, which proves part 3.
\end{proof}

\subsubsection{Proof of Prediction~\ref{pred:full-switch}}
\label{app:proof-full-switch}

\begin{proof}
By assumption, the objective and subjective fit differences between \(m\) and \(m'\) are both nonzero. Equation~\eqref{eq:fit-reversal} implies that each difference changes sign across the two signals.

Suppose the selections based on objective and subjective fit differ after \(s_1\). Both fit rankings reverse for \(s_2\), so the selections also differ after \(s_2\). The switch is therefore full.

If the selections agree after \(s_1\), both ranking reversals make them agree after \(s_2\). The selected model is then unchanged after both signals.
\end{proof}

\subsubsection{Proof of Proposition~\ref{prop:sensitivity}}
\label{app:proof-sensitivity}

\begin{proof}
Because \(S\) has two signals and the models are distinct, their likelihoods of \(s_1\) differ in at least one state. If they were equal in the other state, one model would be more representative. Thus, the likelihood differences are nonzero and have opposite signs.

To determine their signs, suppose that
\[
    \pi_m(s_1\mid\omega_1)>\pi_{m'}(s_1\mid\omega_1),
    \qquad
    \pi_m(s_1\mid\omega_2)<\pi_{m'}(s_1\mid\omega_2).
\]
Bayes' rule would imply
\[
    \frac{\mu_m(\omega_1\mid s_1)}
    {\mu_m(\omega_2\mid s_1)}
    >
    \frac{\mu_{m'}(\omega_1\mid s_1)}
    {\mu_{m'}(\omega_2\mid s_1)}.
\]
With two states, the posterior odds of \(\omega_1\) relative to \(\omega_2\) are strictly increasing in the posterior probability of \(\omega_1\). The displayed inequality contradicts \(\mu_m(\omega_1\mid s_1)<\mu_{m'}(\omega_1\mid s_1)\). Hence,
\[
    \pi_m(s_1\mid\omega_1)<\pi_{m'}(s_1\mid\omega_1),
    \qquad
    \pi_m(s_1\mid\omega_2)>\pi_{m'}(s_1\mid\omega_2).
\]
Equation~\eqref{eq:switch-gaps} therefore gives \(A>0\) and \(B>0\).

The objective fit difference is
\[
    P_m(s_1)-P_{m'}(s_1)=B-A>0.
\]
Thus, \(B>A>0\) and \(R=A/B\in(0,1)\).

The subjective fit difference equals
\[
    \hat P_m(s_1)-\hat P_{m'}(s_1)
    =f(U(\omega_2))B-f(U(\omega_1))A
    =f(U(\omega_2))(B-tA).
\]
It is negative exactly when
\[
    t>\frac{B}{A}=\frac{1}{R}.
\]
This condition reverses the selected model after \(s_1\). Prediction~\ref{pred:full-switch} then implies a full switch. Below the threshold, the selections agree after \(s_1\), so no switch occurs.

At equality, subjective fits tie after both signals. The fixed tie-breaking rule selects the same model after both signals. Objective selections differ across signals, so the switch is partial.
\end{proof}

\subsubsection{Proof of Proposition~\ref{prop:belief-direction}}
\label{app:proof-belief-direction}

\begin{proof}
Fix \(s\in S\). I first derive Equation~\eqref{eq:bayesian-distortion}. Let
\[
    D(s)=\sum_{m\in M}\rho_0(m)P_m(s),
    \qquad
    \hat D(s)=\sum_{m\in M}\rho_0(m)\hat P_m(s).
\]
Equations~\eqref{eq:bayesian-weight}, \eqref{eq:bayesian-belief}, and \eqref{eq:model-specific-posterior} give
\[
    \mu_B(\omega\mid s)
    =
    \frac{
        \mu_0(\omega)
        \sum_{m\in M}\rho_0(m)\pi_m(s\mid\omega)
    }{D(s)}.
\]
Likewise, Equations~\eqref{eq:subjective-posterior} and \eqref{eq:distorted-weight} give
\[
    \hat\mu_B(\omega\mid s)
    =
    \frac{
        \mu_0(\omega)f(U(\omega))
        \sum_{m\in M}\rho_0(m)\pi_m(s\mid\omega)
    }{\hat D(s)}.
\]
Using the definition of \(\hat D(s)\) and rearranging the sums gives
\[
\begin{aligned}
    \hat D(s)
    &=
    \sum_{\omega'\in\Omega}
    \mu_0(\omega')f(U(\omega'))
    \sum_{m\in M}\rho_0(m)\pi_m(s\mid\omega')\\
    &=
    D(s)\sum_{\omega'\in\Omega}
    \mu_B(\omega'\mid s)f(U(\omega')).
\end{aligned}
\]
Substituting this equality into the expression for \(\hat\mu_B\) gives Equation~\eqref{eq:bayesian-distortion}.

Now write \(x_m=\mu_m(\omega_1\mid s)\) and \(t=f(U(\omega_1))/f(U(\omega_2))>1\). Full support implies \(x_m\in(0,1)\) and \(\mu_B(\omega_1\mid s)\in(0,1)\). Equation~\eqref{eq:bayesian-distortion} gives
\[
    \hat\mu_B(\omega_1\mid s)
    =h(t,\mu_B(\omega_1\mid s)),
    \qquad
    h(t,x)=\frac{tx}{tx+(1-x)}.
\]
For \(t>1\) and \(x\in(0,1)\),
\[
    h(t,x)-x
    =
    \frac{(t-1)x(1-x)}{1+(t-1)x}>0.
\]
Therefore, \(\hat\mu_B(\omega_1\mid s)>\mu_B(\omega_1\mid s)\).

Let \(j=m_s^*\) and \(k=\hat m_s^*\). Equation~\eqref{eq:subj-fit} becomes
\[
    \hat P_m(s)=P_m(s)f(U(\omega_2))\big[1+(t-1)x_m\big].
\]
Optimality gives \(P_j(s)\geq P_k(s)\) and \(\hat P_k(s)\geq\hat P_j(s)\). If \(x_k<x_j\), then
\[
\begin{aligned}
    \hat P_k(s)
    &=P_k(s)f(U(\omega_2))\big[1+(t-1)x_k\big]\\
    &\leq P_j(s)f(U(\omega_2))\big[1+(t-1)x_k\big]\\
    &<P_j(s)f(U(\omega_2))\big[1+(t-1)x_j\big]\\
    &=\hat P_j(s).
\end{aligned}
\]
This contradicts the selection of \(k\), so \(x_k\geq x_j\). Hence,
\[
    \mu_{\hat m_s^*}(\omega_1\mid s)
    \geq
    \mu_{m_s^*}(\omega_1\mid s).
\]

Finally,
\[
    \hat\mu_F(\omega_1\mid s)
    =h(t,x_k)
    >x_k
    \geq x_j
    =\mu_F(\omega_1\mid s).
\]
This proves both claims for best-fit agents.
\end{proof}

\subsubsection{Proof of Proposition~\ref{prop:stronger-distortion}}
\label{app:proof-stronger-distortion}

\begin{proof}
Fix \(s\in S\). Multiplying all values of \(f(U(\omega))\) by the same positive constant leaves posterior beliefs, model weights, and model selection unchanged. Set \(f(U(\omega_2))=1\) and \(f(U(\omega_1))=t\).

The function
\[
    h(t,x)=\frac{tx}{tx+(1-x)}
\]
is continuous. For \(t>0\) and \(x\in(0,1)\),
\[
    \frac{\partial h}{\partial t}
    =\frac{x(1-x)}{[tx+(1-x)]^2}>0,
    \qquad
    \frac{\partial h}{\partial x}
    =\frac{t}{[tx+(1-x)]^2}>0.
\]
Equation~\eqref{eq:bayesian-distortion} gives
\[
    \hat\mu_B(\omega_1\mid s)=h(t,\mu_B(\omega_1\mid s)).
\]
Full support places \(\mu_B(\omega_1\mid s)\) in \((0,1)\). This proves part 1.

For the best-fit rule, define \(x_m=\mu_m(\omega_1\mid s)\) and
\[
    q_m(t)=P_m(s)\big[1+(t-1)x_m\big].
\]
Equation~\eqref{eq:subj-fit} gives \(\hat P_m(s)=q_m(t)\), so the selected model maximizes \(q_m(t)\). Fix \(1<t_1<t_2\), and let models \(a\) and \(b\) be selected at these values.

Optimality gives \(q_a(t_1)\geq q_b(t_1)\) and \(q_b(t_2)\geq q_a(t_2)\). Combining these inequalities gives
\[
    \frac{1+(t_2-1)x_b}{1+(t_1-1)x_b}
    \geq
    \frac{1+(t_2-1)x_a}{1+(t_1-1)x_a}.
\]
For fixed \(t_1\) and \(t_2\), the ratio on each side is strictly increasing in \(x\), with derivative
\[
    \frac{t_2-t_1}{[1+(t_1-1)x]^2}>0.
\]
Therefore, \(x_b\geq x_a\).

The model-specific posterior probability on \(\omega_1\) under the selected model is therefore non-decreasing in \(t\). It remains constant whenever the selected model does not change. This proves part 2.

The distorted best-fit posterior belief at \(t_1\) is \(h(t_1,x_a)\), while the belief at \(t_2\) is \(h(t_2,x_b)\). Because \(x_b\geq x_a\),
\[
    h(t_2,x_b)
    \geq
    h(t_2,x_a)
    >
    h(t_1,x_a).
\]
The strict inequality follows because \(x_a\in(0,1)\). Thus, the distorted best-fit posterior belief is strictly increasing in \(t\).

Each \(q_m(t)\) is affine in \(t\), and two distinct affine functions intersect at most once. Since \(M\) is finite, only finitely many switching thresholds exist. Identical functions do not create switches because the fixed tie-breaking rule selects the same model for every \(t\).

Between switching thresholds, the selected model is fixed. The distorted best-fit posterior belief is therefore continuous on each such interval. A switch to a model with a strictly higher model-specific posterior probability on \(\omega_1\) produces an upward jump. This proves part 3.
\end{proof}

\newpage

\renewcommand{\thefigure}{\Alph{section}-\arabic{figure}}
\setcounter{figure}{0}
\renewcommand{\thetable}{\Alph{section}-\arabic{table}}
\setcounter{table}{0}

\section{Further Empirical Analysis}

\subsection{Sample Characteristics}\label{app:sample_characteristics}

Table~\ref{tab:demo} summarizes the characteristics of the 193 participants.

\begin{table}[htbp]
\centering
\begin{threeparttable}
\small
\begin{tabular}{@{}c@{}}
\begin{adjustbox}{max width=\linewidth}
{
\def\sym#1{\ifmmode^{#1}\else\(^{#1}\)\fi}
\begin{tabular}{l*{1}{ccccc}}
\toprule
 & count & mean & sd & min & max \\
\midrule
Female & 193 & 0.497 & 0.501 & 0 & 1 \\
Asian & 193 & 0.415 & 0.494 & 0 & 1 \\
White & 193 & 0.472 & 0.500 & 0 & 1 \\
Cognitive Score & 193 & 2.751 & 1.283 & 0 & 6 \\
Prior Experiments & 193 & 3.710 & 3.072 & 0 & 17 \\
Quiz Score & 193 & 9.508 & 0.758 & 7 & 10 \\
\bottomrule
\end{tabular}
}

\end{adjustbox}
\end{tabular}
\begin{tablenotes}
\footnotesize
\item \textit{Notes:} The table reports summary statistics for the 193 participants. Means of indicator variables represent participant proportions. Quiz scores measure comprehension of the experimental instructions.
\end{tablenotes}
\end{threeparttable}
\caption{Sample Characteristics}
\label{tab:demo}
\end{table}

\clearpage

\subsection{Structural Classification}\label{app:structural_details}

This subsection presents the error structure and estimation procedure for the classification in Section~\ref{sec:structural_classification}.
The sample contains 5,211 reported beliefs from 193 participants, with 27 beliefs per participant in the \textit{SP} block.
Each participant follows one behavioral rule across these scenarios.
Let $k\in\mathcal T=\{\text{Bayesian},\text{Best-fit},\text{Non-mover}\}$ index behavioral types, and let $t$ index rounds.

\paragraph*{First layer: errors in model-fit calculations.}
For each decision problem, the objective fit of model $m$ after signal $s$ is
\begin{equation}
P_m(s)=\sum_{\omega\in\Omega}\mu_0(\omega)\pi_m(s\mid\omega).
\label{eq:obj_fit}
\end{equation}
Participant $i$ of type $k$ assesses this fit with a mean-zero normal error:
\begin{equation}
\hat P_{mit}(s)=P_m(s)+\varepsilon^1_{mit},
\qquad \varepsilon^1_{mit}\sim N(0,\sigma_{1,k}^2).
\label{eq:fit_error}
\end{equation}
The errors are independent across models, participants, and rounds.
Their standard deviation is common to the Bayesian and Best-fit types, $\sigma_{1,k}=\sigma_1$.
The model prior assigns equal probabilities to carts A and B.
Bayesian agents therefore weight the model-specific posteriors by their relative subjective fits.
For positive perceived fits, the weight on model A is\footnote{In structural estimation, subjective fits below $10^{-10}$ are set to $10^{-10}$, and model weights are therefore restricted to \([0,1]\).
The main results are robust when I modify the lower bound.}
\begin{equation}
\rho_{it}(A\mid s)
=\frac{\rho_0(A)\hat P_{Ait}(s)}
{\rho_0(A)\hat P_{Ait}(s)+\rho_0(B)\hat P_{Bit}(s)},
\qquad
\rho_{it}(B\mid s)=1-\rho_{it}(A\mid s).
\label{eq:bayes_weight}
\end{equation}
In the experiment, $\rho_0(A)=\rho_0(B)=1/2$, so the model-prior terms cancel. Bounding the weight between zero and one keeps the weighted belief between the two model-specific posteriors when fit errors are large.
Best-fit agents select model A when $\hat P_{Ait}(s)>\hat P_{Bit}(s)$ and model B otherwise.
Let $\hat m_{it}^*$ denote this selected model.
Non-movers retain the state prior and do not use model-fit calculations.

\paragraph*{Second layer: reporting errors.}
Let $\omega_1$ denote selection of Urn X and $\omega_2$ selection of Urn Y.
The internal belief about Urn X for a participant of type $k$ is $\mu_{k,it}(\omega_1\mid s)$.
The three rules imply
\begin{align}
\mu_{\text{Bayesian},it}(\omega_1\mid s)
&=\sum_{m\in\{A,B\}}\rho_{it}(m\mid s)\mu_m(\omega_1\mid s),
\label{eq:bayes_belief}\\
\mu_{\text{Best-fit},it}(\omega_1\mid s)
&=\mu_{\hat m_{it}^*}(\omega_1\mid s),
\label{eq:bestfit_belief}\\
\mu_{\text{Non-mover},it}(\omega_1\mid s)
&=\mu_0(\omega_1).
\label{eq:nonmover_belief}
\end{align}
A second layer of mean-zero normal error is added to the internal belief before censoring at zero and one:
\begin{equation}
\text{Belief}_{it}=\mu_{k,it}(\omega_1\mid s)+\varepsilon^2_{it},
\qquad \varepsilon^2_{it}\sim N(0,\sigma_{2,k}^2).
\label{eq:reporting_error}
\end{equation}
Reporting errors are independent across participants and rounds and independent of the first-layer errors.
The reporting-error standard deviation is common to all three types, $\sigma_{2,k}=\sigma_2$.
The Non-mover type has no fit-calculation error.

\paragraph*{Likelihood and estimation.}
Let $\phi$ denote the standard normal density.
Conditional on the first-layer errors, the density of a reported belief $b \in (0,1)$ is
\begin{equation}
\frac{1}{\sigma_{2,k}}\phi\left(\frac{b-\mu_{k,it}(\omega_1\mid s)}{\sigma_{2,k}}\right).
\label{eq:report_density_app}
\end{equation}
Reported beliefs at zero and one contribute the corresponding lower and upper normal tail probabilities to the likelihood.
For Bayesian and Best-fit types, I integrate the conditional
likelihood contribution over the two first-layer errors.
For Non-movers, I evaluate the same likelihood contribution with
$\mu_{k,it}(\omega_1\mid s)=\mu_0(\omega_1)$
and reporting-error standard deviation $\sigma_2$.
Let $\ell_{it}(k;\boldsymbol\sigma_k)$ denote the resulting report likelihood.\footnote{In structural estimation and participant classification, for numerical stability, report likelihoods below $10^{-10}$ are set to $10^{-10}$ before taking logarithms. The main results are robust to modifying this lower bound.}
Conditional on a participant's type, I assume that reported beliefs are independent across rounds. The participant likelihood is therefore the product of the report likelihoods:
\begin{equation}
L_i(k;\boldsymbol\sigma_k)
=\prod_{t\in\mathcal T_i^{SP}}\ell_{it}(k;\boldsymbol\sigma_k),
\label{eq:participant_like_app}
\end{equation}
where $\mathcal T_i^{SP}$ contains participant $i$'s 27 rounds in the \textit{SP} block.
With population shares $p_k\geq0$ and $\sum_kp_k=1$, the sample likelihood is
\begin{equation}
L(\boldsymbol\theta)=\prod_{i\in\mathcal I}\sum_{k\in\mathcal T}
p_kL_i(k;\boldsymbol\sigma_k).
\label{eq:mixture_like_app}
\end{equation}
The parameter vector contains two independent population shares and two error standard deviations.
Standard errors are calculated from 500 bootstrap samples drawn by resampling participants.
Table~\ref{tab:parameter_estimates} in Section~\ref{sec:structural_classification} reports the parameter estimates and standard errors.

\paragraph*{Participant classification.}
Using estimated population shares as prior type probabilities, participant $i$'s posterior probability of type $k$ is
\begin{equation}
\Pr(k\mid\{\text{Belief}_{it}\}_{t\in\mathcal T_i^{SP}})
=\frac{\hat p_kL_i(k;\hat{\boldsymbol\sigma}_k)}
{\sum_{j\in\mathcal T}\hat p_jL_i(j;\hat{\boldsymbol\sigma}_j)}.
\label{eq:posterior_type_app}
\end{equation}
I assign each participant to the type with the highest posterior probability.
This procedure classifies 132 participants as Bayesian, 33 as Best-fit, and 28 as Non-movers.
Table~\ref{tab:individual_estimates} in Section~\ref{sec:structural_classification} reports how many participants remain classified when the highest posterior probability must exceed 50\%, 70\%, or 90\%.


\clearpage

\subsection{Effects of Motivated Reasoning: Robustness Checks}
\label{app:core_results}
This subsection reports additional estimates of the effect of motivated reasoning. The regressions follow Equation~\eqref{eqn:ols} unless stated otherwise, and standard errors are clustered by participant. Regressions with participant fixed effects also include decision-problem-by-signal and round fixed effects.

Table~\ref{tab:core_robustness} reports the pooled effect of asymmetric payoffs under three specifications. Column (1) repeats the pooled estimate in Figure~\ref{fig:mr_by_type}. Column (2) replaces the participant covariates and decision-problem fixed effects with participant and decision-problem-by-signal fixed effects. Column (3) uses only beliefs from each participant's first block, so each participant contributes beliefs under only one payoff condition, and participant fixed effects cannot be included. The estimated effects are 1.7, 1.6, and 1.3 percentage points, respectively. The first-block estimate is less precise and not statistically significant, but it is similar in size.

\begin{table}[htbp]
\centering
\begin{threeparttable}
\small
\begin{tabular}{@{}c@{}}
\begin{adjustbox}{max width=\linewidth}
{
\def\sym#1{\ifmmode^{#1}\else\(^{#1}\)\fi}
\begin{tabular}{l*{3}{c}}
\toprule
DV: Reported Belief & (1) & (2) & (3) \\
\midrule
\(\mathbbm{1}_{\text{Asym}}\) & 0.017\sym{***} & 0.016\sym{***} & 0.013 \\
 & (0.005) & (0.005) & (0.009) \\
\midrule
Observations & 10422 & 10422 & 5211 \\
Participants & 193 & 193 & 193 \\
Mean of DV & 0.506 & 0.506 & 0.503 \\
SP comparison mean & 0.498 & 0.498 & 0.496 \\
\(R^2\) & 0.336 & 0.511 & 0.424 \\
Participant FE & No & Yes & No \\
DP \(\times\) signal FE & No & Yes & Yes \\
DP FE & Yes & No & No \\
Sample & All reports & All reports & First block \\
Participant covariates & Yes & No & No \\
\bottomrule
\end{tabular}
}

\end{adjustbox}
\end{tabular}
\begin{tablenotes}
\footnotesize
\item \textit{Notes:} The table reports OLS estimates. The dependent variable is reported belief. Columns (1) and (2) use all beliefs; column (3) uses beliefs from the first block. Unreported controls in column (1) include signal color, gender, race, cognitive score, risk attitude, overconfidence, and narrative extremeness. Column (1) also includes unreported round and decision-problem fixed effects. Column (2) includes unreported participant, decision-problem-by-signal, and round fixed effects. Column (3) includes unreported decision-problem-by-signal and round fixed effects. The \textit{SP} comparison mean uses beliefs in the \textit{SP} block in the same estimation sample. Standard errors clustered at the participant level appear in parentheses. $^{*}p<0.10$, $^{**}p<0.05$, $^{***}p<0.01$.
\end{tablenotes}
\end{threeparttable}
\caption{Overall Effects and First-Block Comparison}
\label{tab:core_robustness}
\end{table}

\clearpage

Table~\ref{tab:core_types_full} repeats the estimates by type in Figure~\ref{fig:mr_by_type}, replacing the participant covariates with participant fixed effects. As in Figure~\ref{fig:mr_by_type}, types are assigned as in Section~\ref{sec:structural_classification}. The estimated effect is 7.9 percentage points among Best-fit participants ($p<0.01$), compared with 0.5 percentage points among Bayesian participants and $-0.2$ percentage points among Non-movers. These estimates are close to those in Figure~\ref{fig:mr_by_type}.

\begin{table}[htbp]
\centering
\begin{threeparttable}
\small
\begin{tabular}{@{}c@{}}
\begin{adjustbox}{max width=\linewidth}
{
\def\sym#1{\ifmmode^{#1}\else\(^{#1}\)\fi}
\begin{tabular}{l*{4}{c}}
\toprule
DV: Reported Belief & (1) & (2) & (3) & (4) \\
\midrule
\(\mathbbm{1}_{\text{Asym}}\) & 0.016\sym{***} & 0.005 & 0.079\sym{***} & -0.002 \\
 & (0.005) & (0.003) & (0.023) & (0.009) \\
\midrule
Observations & 10422 & 7128 & 1782 & 1512 \\
Participants & 193 & 132 & 33 & 28 \\
Mean of DV & 0.506 & 0.513 & 0.476 & 0.511 \\
\(R^2\) & 0.511 & 0.627 & 0.372 & 0.579 \\
Sample & All participants & Bayesian & Best-fit & Non-mover \\
\bottomrule
\end{tabular}
}

\end{adjustbox}
\end{tabular}
\begin{tablenotes}
\footnotesize
\item \textit{Notes:} The table reports OLS estimates. The dependent variable is reported belief. Columns (1)--(4) use all participants, Bayesian participants, Best-fit participants, and Non-movers, respectively. The specification includes participant, decision-problem-by-signal, and round fixed effects. The Non-mover regression contains 28 participants. Standard errors clustered at the participant level appear in parentheses. $^{*}p<0.10$, $^{**}p<0.05$, $^{***}p<0.01$.
\end{tablenotes}
\end{threeparttable}
\caption{Motivated Reasoning Effects with Participant Fixed Effects}
\label{tab:core_types_full}
\end{table}

\clearpage

Table~\ref{tab:probability_three} extends Table~\ref{tab:mr_likelihood}, which interacts $\mathbbm{1}_{\text{Asym}}$ with each participant's posterior type probabilities. Column (2) repeats the specification in Table~\ref{tab:mr_likelihood}. Column (1) omits the participant covariates, and column (3) adds participant fixed effects, which absorb the posterior type probabilities. The coefficient on the interaction between $\mathbbm{1}_{\text{Asym}}$ and the Best-fit probability ranges from 0.077 to 0.082 across the three columns ($p<0.01$ in each).

\begin{table}[htbp]
\centering
\begin{threeparttable}
\small
\begin{tabular}{@{}c@{}}
\begin{adjustbox}{max width=\linewidth}
{
\def\sym#1{\ifmmode^{#1}\else\(^{#1}\)\fi}
\begin{tabular}{l*{3}{c}}
\toprule
DV: Reported Belief & (1) & (2) & (3) \\
\midrule
\(\mathbbm{1}_{\text{Asym}}\) & 0.004 & 0.004 & 0.004 \\
 & (0.004) & (0.004) & (0.003) \\
\(\mathbbm{1}_{\text{Asym}}\) \(\times\) Likelihood of Best-fit Type & 0.082\sym{***} & 0.082\sym{***} & 0.077\sym{***} \\
 & (0.022) & (0.023) & (0.023) \\
\(\mathbbm{1}_{\text{Asym}}\) \(\times\) Likelihood of Non-mover Type & -0.007 & -0.007 & -0.014 \\
 & (0.011) & (0.011) & (0.011) \\
Likelihood of Best-fit Type & -0.076\sym{***} & -0.069\sym{***} &  \\
 & (0.019) & (0.018) &  \\
Likelihood of Non-mover Type & -0.001 & 0.004 &  \\
 & (0.007) & (0.008) &  \\
\midrule
Observations & 10422 & 10422 & 10422 \\
Participants & 193 & 193 & 193 \\
Mean of DV & 0.506 & 0.506 & 0.506 \\
\(R^2\) & 0.340 & 0.343 & 0.515 \\
Participant FE & No & No & Yes \\
DP \(\times\) signal FE & No & No & Yes \\
DP FE & Yes & Yes & No \\
Participant covariates & No & Yes & Absorbed \\
Narrative extremeness & No & Yes & Absorbed \\
\bottomrule
\end{tabular}
}

\end{adjustbox}
\end{tabular}
\begin{tablenotes}
\footnotesize
\item \textit{Notes:} The table reports OLS estimates of how the effect of motivated reasoning varies with posterior type probabilities. The dependent variable is reported belief. Each column uses 10,422 beliefs from 193 participants. Posterior type probabilities are estimated from the 27 beliefs per participant in the \textit{SP} block. The Bayesian type is the reference category. Each regression includes $\mathbbm{1}_{\text{Asym}}$, the other types' posterior probabilities, and their interactions with $\mathbbm{1}_{\text{Asym}}$. Columns (1) and (2) include fixed effects for signal color, round, and decision problem. Column (2) also controls for gender, race indicators, cognitive score, risk attitude, overconfidence, and narrative extremeness. Column (3) includes participant, decision-problem-by-signal, and round fixed effects. Participant fixed effects absorb the posterior type probabilities in column (3). Standard errors clustered at the participant level appear in parentheses. $^{*}p<0.10$, $^{**}p<0.05$, $^{***}p<0.01$.
\end{tablenotes}
\end{threeparttable}
\caption{Posterior-Probability Interactions: Three Types}
\label{tab:probability_three}
\end{table}

\clearpage

The estimates above rely on the classification from the finite mixture model. As a robustness check that does not use this model, I construct a simple measure of how close each participant's reports are to Best-fit predictions. For each report in the \textit{SP} block, I calculate its absolute distance from the Bayesian benchmark minus its absolute distance from the Best-fit benchmark. For each participant and decision problem, the closeness to Best-fit predictions is the average of this difference over the participant's \textit{SP} reports in the other eight decision problems. Excluding the current decision problem ensures that the measure does not use reports from the decision problem in which the effect is estimated. Higher values mean that a participant's reports are closer to Best-fit predictions relative to Bayesian predictions. Most participants have negative values, and the median participant average is $-0.11$.

Table~\ref{tab:holdout_demographic} adds this measure and its interaction with $\mathbbm{1}_{\text{Asym}}$ to Equation~\eqref{eqn:ols}. The interaction coefficient is 0.544 ($p<0.01$). At the lower and upper quartiles of participant averages, $-0.15$ and $-0.06$, the implied effect of asymmetric payoffs is $-1.3$ and $3.5$ percentage points, respectively. Participants whose reports are closer to Best-fit predictions therefore respond more strongly to asymmetric payoffs, consistent with the estimates by type.

\begin{table}[htbp]
\centering
\begin{threeparttable}
\small
\begin{tabular}{@{}c@{}}
\begin{adjustbox}{max width=\linewidth}
{
\def\sym#1{\ifmmode^{#1}\else\(^{#1}\)\fi}
\begin{tabular}{l*{1}{c}}
\toprule
DV: Reported Belief & (1) \\
\midrule
\(\mathbbm{1}_{\text{Asym}}\) & 0.068\sym{***} \\
 & (0.011) \\
Closeness to Best-fit predictions & -0.406\sym{***} \\
 & (0.067) \\
\(\mathbbm{1}_{\text{Asym}}\) \(\times\) Closeness & 0.544\sym{***} \\
 & (0.090) \\
\midrule
Observations & 10422 \\
Participants & 193 \\
Mean of DV & 0.506 \\
\(R^2\) & 0.345 \\
\bottomrule
\end{tabular}
}

\end{adjustbox}
\end{tabular}
\begin{tablenotes}
\footnotesize
\item \textit{Notes:} The table reports OLS estimates. The dependent variable is reported belief. The sample contains 10,422 beliefs from 193 participants. Closeness to Best-fit predictions averages absolute Bayesian error minus absolute Best-fit error across beliefs in the \textit{SP} block outside the current decision problem. The regression includes $\mathbbm{1}_{\text{Asym}}$, closeness to Best-fit predictions, and their interaction. Unreported controls include gender, race indicators, cognitive score, risk attitude, overconfidence, and narrative extremeness. The specification includes fixed effects for signal color, round, and decision problem. Standard errors clustered at the participant level appear in parentheses. $^{*}p<0.10$, $^{**}p<0.05$, $^{***}p<0.01$.
\end{tablenotes}
\end{threeparttable}
\caption{Motivated Reasoning Effects and Closeness to Best-fit Predictions}
\label{tab:holdout_demographic}
\end{table}

\clearpage

\paragraph*{Decision Problems 1 and 4.}
Decision Problem 4 reproduces Decision Problem 1 after exchanging the state labels and model labels. The two problems therefore share the state prior, objective fit gap, and separation between model-specific posteriors. Their objectively best-fitting models favor Urn Y and Urn X, respectively. Figure~\ref{fig:mr_dp4} displays the effects in each problem and their difference. For Best-fit participants, the estimated effect of motivated reasoning is approximately 13.2 percentage points in Decision Problem 1 and 2.7 percentage points in Decision Problem 4. Their difference is 10.5 percentage points, with a two-sided $p$-value of 0.101. The pooled difference is 2.1 percentage points and has $p=0.126$.

\begin{figure}[htbp]
\centering
\includegraphics[width=0.9\linewidth]{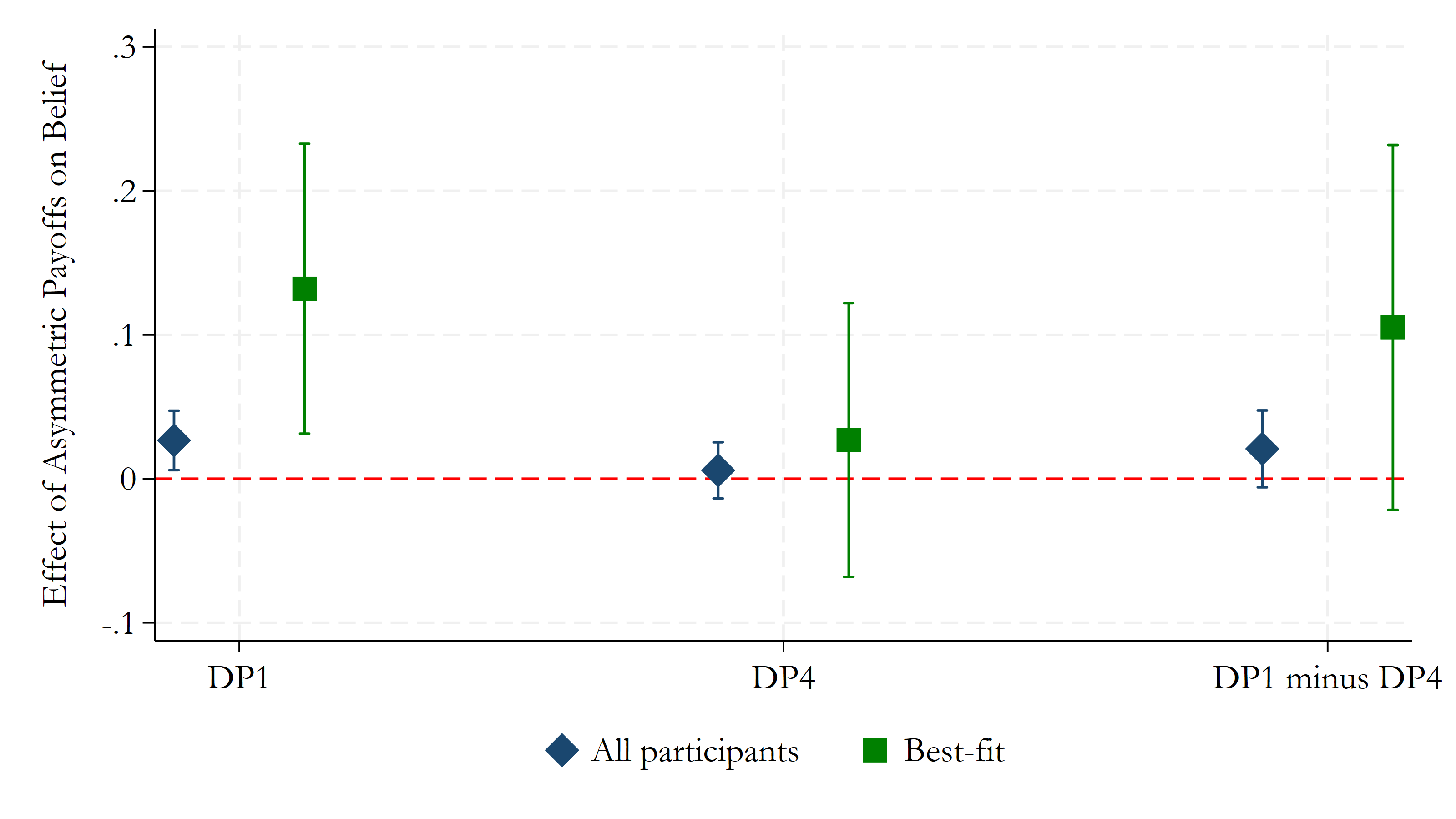}
\begin{minipage}{0.9\linewidth}\footnotesize
\textit{Notes:} The figure reports estimated effects of motivated reasoning using beliefs from Decision Problems 1 and 4. Navy diamonds show pooled estimates and green squares show Best-fit estimates. The samples contain 2,316 beliefs from 193 participants and 396 beliefs from 33 participants, respectively. Regressions include participant, decision-problem-by-signal, and round fixed effects and interact $\mathbbm{1}_{\text{Asym}}$ with an indicator for Decision Problem 1. The first two positions show the implied effects of motivated reasoning; the third shows their directly estimated difference. Bars indicate 95\% confidence intervals based on standard errors clustered at the participant level.
\end{minipage}
\caption{Motivated Reasoning Effects in Decision Problems 1 and 4}
\label{fig:mr_dp4}
\end{figure}

\clearpage

Table~\ref{tab:dp14_robustness} reports the estimates plotted in Figure~\ref{fig:mr_dp4}. For Best-fit participants, the difference between Decision Problems 1 and 4 is 10.5 percentage points with a standard error of 6.2 percentage points. The estimates have the predicted sign, but the difference is imprecisely estimated.

\begin{table}[htbp]
\centering
\begin{threeparttable}
\small
\begin{tabular}{@{}c@{}}
\begin{adjustbox}{max width=\linewidth}
{
\def\sym#1{\ifmmode^{#1}\else\(^{#1}\)\fi}
\begin{tabular}{l*{2}{c}}
\toprule
DV: Reported Belief & (1) & (2) \\
\midrule
\(\mathbbm{1}_{\text{Asym}}\) (DP4) & 0.006 & 0.027 \\
 & (0.010) & (0.047) \\
DP1 minus DP4 effect & 0.021 & 0.105 \\
 & (0.014) & (0.062) \\
\midrule
Observations & 2316 & 396 \\
Participants & 193 & 33 \\
Mean of DV & 0.518 & 0.525 \\
\(R^2\) & 0.228 & 0.378 \\
Sample & All participants & Best-fit \\
\bottomrule
\end{tabular}
}

\end{adjustbox}
\end{tabular}
\begin{tablenotes}
\footnotesize
\item \textit{Notes:} The table reports OLS estimates. The dependent variable is reported belief. The sample contains only Decision Problems 1 and 4. Columns (1) and (2) use all participants and Best-fit participants, respectively. The first row reports the effect of motivated reasoning in Decision Problem 4. The second reports the difference between Decision Problems 1 and 4. The specification includes participant, decision-problem-by-signal, and round fixed effects. Standard errors clustered at the participant level appear in parentheses. $^{*}p<0.10$, $^{**}p<0.05$, $^{***}p<0.01$.
\end{tablenotes}
\end{threeparttable}
\caption{Motivated Reasoning Effects in Decision Problems 1 and 4: Regression Estimates}
\label{tab:dp14_robustness}
\end{table}

\clearpage

\subsection{Type-Specific Error Standard Deviations}
\label{app:type_specific_errors}
This subsection reports the three-type classification when the error standard deviations in Equations~\eqref{eq:fit_error} and \eqref{eq:reporting_error} differ across types.
The fit-calculation standard deviation $\sigma_{1,k}$ is estimated separately for the Bayesian and Best-fit types, and the reporting standard deviation $\sigma_{2,k}$ is estimated separately for each of the three types.
The parameter vector therefore contains two independent population shares and five error standard deviations.
The behavioral rules, participant likelihood, and posterior assignment otherwise follow Appendix~\ref{app:structural_details}.\footnote{With the reporting error, the belief implied by the rule and the error can fall below zero or above one, but participants can only report beliefs between zero and one. Therefore, the likelihood uses normal densities for reports strictly between zero and one. For reports at zero and one, it uses the normal probabilities below zero and above one, respectively.}
Table~\ref{tab:ts_parameter_estimates} reports the estimates.

This specification classifies 123 participants as Bayesian, 63 as Best-fit, and seven as Non-movers, compared with 132, 33, and 28 in the main specification (Table~\ref{tab:individual_estimates}).
Table~\ref{tab:xtab_3_ts} compares these assignments with the main classification. All 33 participants classified as Best-fit in the main specification remain Best-fit. The other 30 Best-fit participants are classified as Bayesian (26) or Non-movers (4) in the main specification.
Figure~\ref{fig:ts_three_all} reports the effects of motivated reasoning by type. With participant covariates, the estimated effect is 4.9 percentage points among Best-fit participants ($p<0.001$) and 0.1 percentage points among Bayesian participants ($p=0.693$), compared with 8.4 and 0.5 percentage points in the main specification (Figure~\ref{fig:mr_by_type}).

\vspace{0.25cm}

\begin{table}[htbp]
\centering
\begin{threeparttable}
\small
\begin{tabular}{@{}c@{}}
\begin{adjustbox}{max width=\linewidth}
\begin{tabular}{lcc}
\toprule
\textbf{Parameter} & \textbf{Estimate} & \textbf{Bootstrap SE} \\
\hline
\multicolumn{1}{l}{\textit{Population Shares}} \\
\(p_1\) (Bayesian) & 0.641 & 0.059 \\
\(p_2\) (Best-fit) & 0.322 & 0.059 \\
\(p_3\) (Non-mover) & 0.036 & 0.013 \\
\addlinespace
\multicolumn{1}{l}{\textit{First-layer (fit-calculation) error SD}} \\
\(\sigma_{1,\text{Bayes}}\) & 0.004 & 0.001 \\
\(\sigma_{1,\text{Best-fit}}\) & 0.278 & 0.067 \\
\(\sigma_{1,\text{Non-mover}}\) & \multicolumn{2}{c}{---} \\
\addlinespace
\multicolumn{1}{l}{\textit{Second-layer (reporting) error SD}} \\
\(\sigma_{2,\text{Bayes}}\) & 0.103 & 0.010 \\
\(\sigma_{2,\text{Best-fit}}\) & 0.185 & 0.008 \\
\(\sigma_{2,\text{Non-mover}}\) & 0.001 & 0.000 \\
\bottomrule
\end{tabular}

\end{adjustbox}
\end{tabular}
\begin{tablenotes}
\footnotesize
\item \textit{Notes:} The table reports maximum likelihood estimates of population shares and type-specific error standard deviations. The sample contains 5,211 beliefs from 193 participants in the \textit{SP} block, with 27 beliefs per participant. Population shares are the estimated fractions of each type in the population from which participants are drawn. Standard errors are calculated from 500 bootstrap samples drawn by resampling participants.
\end{tablenotes}
\end{threeparttable}
\caption{Type-Specific Errors: Three-Type Parameter Estimates}
\label{tab:ts_parameter_estimates}
\end{table}

\clearpage

\begin{table}[htbp]
\centering
\begin{threeparttable}
\small
\begin{tabular}{@{}c@{}}
\begin{adjustbox}{max width=\linewidth}
{
\def\sym#1{\ifmmode^{#1}\else\(^{#1}\)\fi}
\begin{tabular}{l*{4}{c}}
\toprule
 & Bayesian & Best-fit & Non-mover & Total \\
\midrule
Bayesian & 106 & 26 & 0 & 132 \\
Best-fit & 0 & 33 & 0 & 33 \\
Non-mover & 17 & 4 & 7 & 28 \\
Total & 123 & 63 & 7 & 193 \\
\bottomrule
\end{tabular}
}

\end{adjustbox}
\end{tabular}
\begin{tablenotes}
\footnotesize
\item \textit{Notes:} The table compares participant classifications under the main three-type specification and the three-type specification with type-specific error standard deviations. Rows report the main classification, and columns report the alternative classification. Each entry gives the number of participants. Both specifications assign each of the 193 participants to the type with the highest posterior probability, using beliefs from the \textit{SP} block.
\end{tablenotes}
\end{threeparttable}
\caption{Type-Specific Errors and Main Three-Type Assignments}
\label{tab:xtab_3_ts}
\end{table}

\begin{figure}[htbp]
\centering
\includegraphics[width=0.88\linewidth]{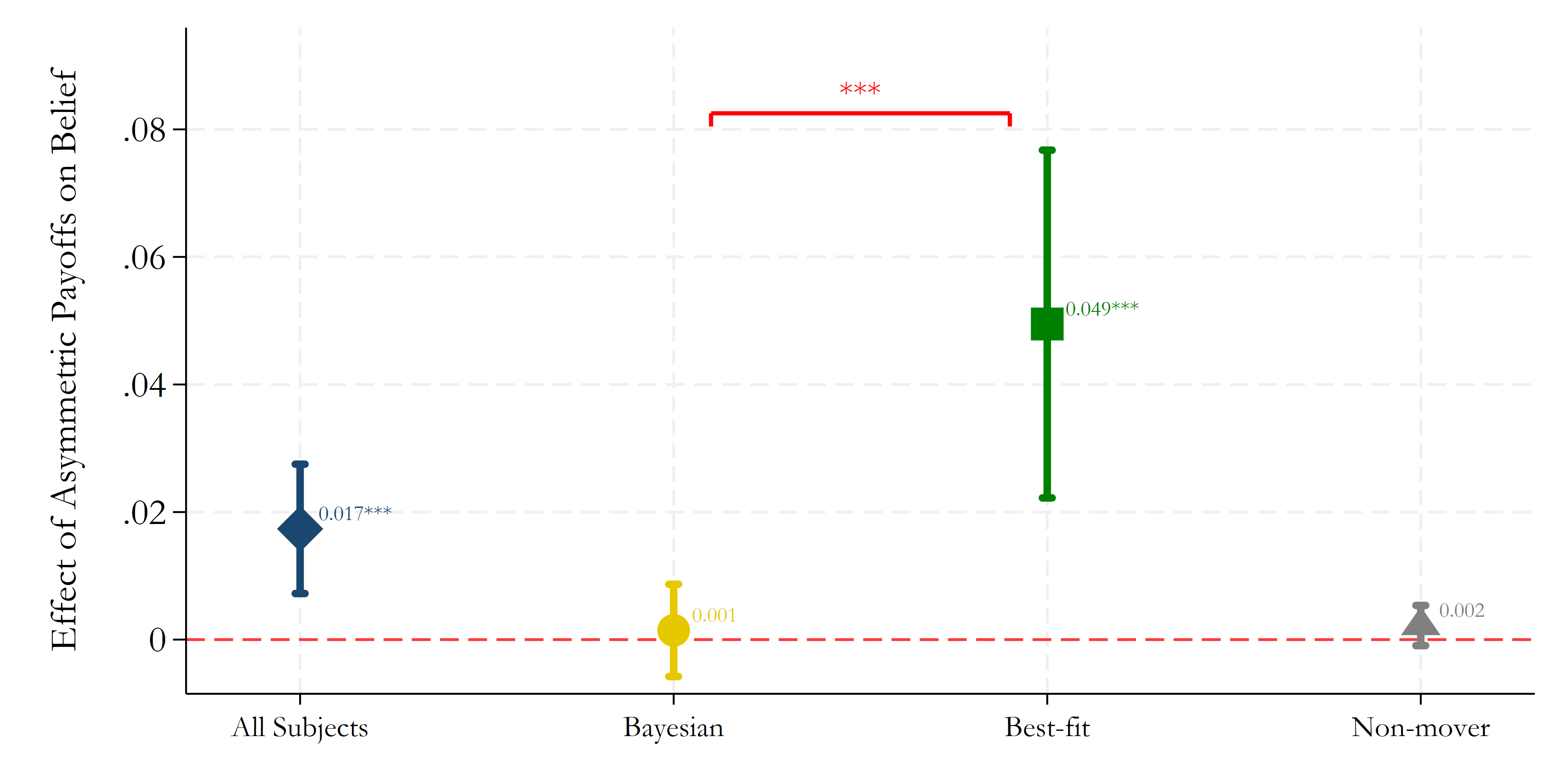}
\begin{minipage}{0.88\linewidth}
\footnotesize \textit{Notes:} The figure reports estimated coefficients on $\mathbbm{1}_{\text{Asym}}$ in Equation~\eqref{eqn:ols} for the pooled sample and separately for each behavioral type, using the classification with type-specific error standard deviations. The pooled, Bayesian, Best-fit, and Non-mover samples contain 193, 123, 63, and 7 participants, respectively. Each participant contributes 54 beliefs. Regressions control for gender, race, cognitive score, risk attitude, overconfidence, and narrative extremeness. All specifications include fixed effects for signal color, round, and decision problem. Bars indicate 95\% confidence intervals based on standard errors clustered at the participant level. The red bracket reports the significance of the difference between the Best-fit and Bayesian coefficients. $^{*}p<0.10$, $^{**}p<0.05$, $^{***}p<0.01$.
\end{minipage}
\caption{Type-Specific Errors: Motivated Reasoning Effects, All Types}
\label{fig:ts_three_all}
\end{figure}

\begin{table}[htbp]
\centering
\begin{threeparttable}
\small
\begin{tabular}{@{}c@{}}
\begin{adjustbox}{max width=\linewidth}
{
\def\sym#1{\ifmmode^{#1}\else\(^{#1}\)\fi}
\begin{tabular}{l*{4}{c}}
\toprule
DV: Reported Belief & (1) & (2) & (3) & (4) \\
\midrule
\(\mathbbm{1}_{\text{Asym}}\) & 0.017\sym{***} & 0.001 & 0.049\sym{***} & 0.002 \\
 & (0.005) & (0.004) & (0.014) & (0.001) \\
\midrule
Observations & 10422 & 6642 & 3402 & 378 \\
Participants & 193 & 123 & 63 & 7 \\
Mean of DV & 0.506 & 0.506 & 0.507 & 0.502 \\
\(R^2\) & 0.336 & 0.458 & 0.261 & 0.978 \\
Sample & All participants & Bayesian & Best-fit & Non-mover \\
\bottomrule
\end{tabular}
}

\end{adjustbox}
\end{tabular}
\begin{tablenotes}
\footnotesize
\item \textit{Notes:} The table reports OLS estimates. The dependent variable is reported belief. Columns (1)--(4) use the pooled sample, Bayesian participants, Best-fit participants, and Non-movers, respectively. Each participant contributes 54 beliefs. Unreported controls include gender, race indicators, cognitive score, risk attitude, overconfidence, and narrative extremeness. The specification includes fixed effects for signal color, round, and decision problem. Standard errors clustered at the participant level appear in parentheses. $^{*}p<0.10$, $^{**}p<0.05$, $^{***}p<0.01$.
\end{tablenotes}
\end{threeparttable}
\caption{Type-Specific Errors: Motivated Reasoning Regressions, Participant Covariates}
\label{tab:ts_three_source}
\end{table}

\begin{table}[htbp]
\centering
\begin{threeparttable}
\small
\begin{tabular}{@{}c@{}}
\begin{adjustbox}{max width=\linewidth}
{
\def\sym#1{\ifmmode^{#1}\else\(^{#1}\)\fi}
\begin{tabular}{l*{3}{c}}
\toprule
DV: Reported Belief & (1) & (2) & (3) \\
\midrule
\(\mathbbm{1}_{\text{Asym}}\) & 0.001 & 0.001 & 0.001 \\
 & (0.004) & (0.004) & (0.003) \\
\(\mathbbm{1}_{\text{Asym}}\) \(\times\) Likelihood of Best-fit Type & 0.050\sym{***} & 0.050\sym{***} & 0.045\sym{***} \\
 & (0.015) & (0.015) & (0.014) \\
\(\mathbbm{1}_{\text{Asym}}\) \(\times\) Likelihood of Non-mover Type & 0.005 & 0.005 & 0.001 \\
 & (0.005) & (0.005) & (0.010) \\
Likelihood of Best-fit Type & -0.026\sym{**} & -0.024\sym{**} &  \\
 & (0.013) & (0.012) &  \\
Likelihood of Non-mover Type & -0.008\sym{**} & -0.006 &  \\
 & (0.004) & (0.008) &  \\
\midrule
Observations & 10422 & 10422 & 10422 \\
Participants & 193 & 193 & 193 \\
Mean of DV & 0.506 & 0.506 & 0.506 \\
\(R^2\) & 0.335 & 0.339 & 0.513 \\
Participant FE & No & No & Yes \\
DP \(\times\) signal FE & No & No & Yes \\
DP FE & Yes & Yes & No \\
Participant covariates & No & Yes & Absorbed \\
Narrative extremeness & No & Yes & Absorbed \\
\bottomrule
\end{tabular}
}

\end{adjustbox}
\end{tabular}
\begin{tablenotes}
\footnotesize
\item \textit{Notes:} The table reports OLS estimates of how the effect of motivated reasoning varies with posterior type probabilities. The dependent variable is reported belief. Each column uses 10,422 beliefs from 193 participants. Posterior type probabilities are estimated from the 27 beliefs per participant in the \textit{SP} block. The Bayesian type is the reference category. Each regression includes $\mathbbm{1}_{\text{Asym}}$, the other types' posterior probabilities, and their interactions with $\mathbbm{1}_{\text{Asym}}$. Columns (1) and (2) include fixed effects for signal color, round, and decision problem. Column (2) also controls for gender, race indicators, cognitive score, risk attitude, overconfidence, and narrative extremeness. Column (3) includes participant, decision-problem-by-signal, and round fixed effects. Participant fixed effects absorb the posterior type probabilities in column (3). Standard errors clustered at the participant level appear in parentheses. $^{*}p<0.10$, $^{**}p<0.05$, $^{***}p<0.01$.
\end{tablenotes}
\end{threeparttable}
\caption{Type-Specific Errors: Posterior-Probability Interactions}
\label{tab:ts_three_probability}
\end{table}

With participant covariates, the effect of motivated reasoning is 4.8 percentage points larger among Best-fit participants than among Bayesian participants ($p<0.001$). A test that the effect of motivated reasoning is equal across all types gives $p=0.003$. Both tests come from a single regression that interacts each control and fixed effect with type.

\clearpage

\subsection{Alternative Mixture Specifications}
\label{app:robustness_types}

This subsection examines how the results change under alternative sets of behavioral types.
Each specification estimates population shares, the two error standard deviations shared across types, and the parameters of any Grether components using the 27 beliefs per participant in the \textit{SP} block.
The Gaussian error structure, participant likelihood, and posterior assignment follow Appendix~\ref{app:structural_details}.
Each additional Grether component has two parameters, $\beta_{1,k}$ and $\beta_{2,k}$, common to participants in that component, and shares the two error standard deviations with the other types.

\paragraph*{Grether components.}
Following \citet{grether1980bayes}, I allow prior odds and likelihood ratios to receive different weights when updating beliefs within each model. The following equations define the Grether components.
For each model, the transformation applies to its Bayesian posterior $\mu_m(\omega_1\mid s)$:
\begin{equation}
\mu^G_{m,k}(\omega_1\mid s)
=\frac{\mu_0(\omega_1)^{\beta_{1,k}}\mu_m(\omega_1\mid s)^{\beta_{2,k}}}
{\mu_0(\omega_1)^{\beta_{1,k}}\mu_m(\omega_1\mid s)^{\beta_{2,k}}
+\mu_0(\omega_2)^{\beta_{1,k}}\mu_m(\omega_2\mid s)^{\beta_{2,k}}}.
\label{eq:grether_posterior_app}
\end{equation}
Here, $\beta_{1,k}$ weights the state prior, and $\beta_{2,k}$ weights the model-specific posterior, which already incorporates the signal and prior.
In log odds, this transformation is equivalent to
\begin{equation}
\log\frac{\mu^G_{m,k}(\omega_1\mid s)}{\mu^G_{m,k}(\omega_2\mid s)}
=(\beta_{1,k}+\beta_{2,k})\log\frac{\mu_0(\omega_1)}{\mu_0(\omega_2)}
+\beta_{2,k}\log\frac{\pi_m(s\mid\omega_1)}{\pi_m(s\mid\omega_2)}.
\label{eq:grether_logodds_app}
\end{equation}
Thus, the effective coefficient on prior log odds is $\beta_{1,k}+\beta_{2,k}$.

The Grether-Bayesian component applies the same transformation
to posterior model probabilities.
Let $\hat P^+_{mit}(s)$ denote the positive perceived fit used
to evaluate this transformation.\footnote{
For numerical stability,
$\hat P^+_{mit}(s)=\max\{\hat P_{mit}(s),10^{-10}\}$.
}
Posterior model probabilities are proportional to
$\rho_0(m)\hat P^+_{mit}(s)$.
The transformation multiplies their $\beta_{2,k}$ powers
by the model prior raised to $\beta_{1,k}$.
Its weight on model A is therefore
\begin{equation}
\rho^G_{kit}(A\mid s)
=
\frac{
\rho_0(A)^{\beta_{1,k}+\beta_{2,k}}
[\hat P^+_{Ait}(s)]^{\beta_{2,k}}
}{
\rho_0(A)^{\beta_{1,k}+\beta_{2,k}}
[\hat P^+_{Ait}(s)]^{\beta_{2,k}}
+
\rho_0(B)^{\beta_{1,k}+\beta_{2,k}}
[\hat P^+_{Bit}(s)]^{\beta_{2,k}}
}.
\label{eq:grether_weight_app}
\end{equation}
The weight on model B is $1-\rho^G_{kit}(A\mid s)$. Here, $\rho_0(m)$ is the prior probability of model $m$.

The Grether-Bayesian internal belief is the weighted average of the transformed model-specific posteriors:
\begin{equation}
\mu_{k,it}(\omega_1\mid s)
=\rho^G_{kit}(A\mid s)\mu^G_{A,k}(\omega_1\mid s)
+[1-\rho^G_{kit}(A\mid s)]\mu^G_{B,k}(\omega_1\mid s).
\label{eq:grether_belief_app}
\end{equation}

The Grether-Best-fit component selects the model with the larger
transformed model probability.
It selects model A when
\begin{equation}
\rho_0(A)^{\beta_{1,k}+\beta_{2,k}}
[\hat P^+_{Ait}(s)]^{\beta_{2,k}}
>
\rho_0(B)^{\beta_{1,k}+\beta_{2,k}}
[\hat P^+_{Bit}(s)]^{\beta_{2,k}},
\label{eq:grether_selection_app}
\end{equation}
and model B otherwise.

In the experiment, $\rho_0(A)=\rho_0(B)=1/2$.
The model-prior terms therefore cancel from both the model weights
and the selection comparison.
Because $\beta_{2,k}>0$, the selection comparison reduces to
$\hat P^+_{Ait}(s)>\hat P^+_{Bit}(s)$.
Thus, the Grether-Best-fit component selects the model with the larger
positive perceived fit.

Its internal belief is the selected model's transformed posterior from Equation~\eqref{eq:grether_posterior_app}.
Both components add the shared normal reporting error and integrate
over the first-layer errors when evaluating report likelihoods.
Under this parameterization, $\beta_{1,k}=0$ and $\beta_{2,k}=1$
recover the Bayesian rule in the absence of errors.

With equal model priors, these parameter values also recover
the Best-fit rule in the absence of errors.
Setting both parameters to one gives an effective prior coefficient
of two and a signal-likelihood coefficient of one.

\paragraph*{Alternative sets of types.}
The two-type specification contains Bayesian and Best-fit components, allowing participants classified as Non-movers in the main specification to be reassigned.
The four-type specification adds one Grether-Bayesian component to the main three-type model.
Five-type specification I adds both Grether-Bayesian and Grether-Best-fit components to the main three-type model.
Five-type specification II adds two Grether-Bayesian components, A and B, with separately estimated parameters.
Component A imposes $\beta_{1,A}>\beta_{2,A}$, while component B imposes $\beta_{2,B}>\beta_{1,B}$.
These restrictions distinguish the two components within the implemented transformation.

The additional components change which participants receive the Bayesian and Best-fit classifications.
The comparisons below therefore assess sensitivity to the set of behavioral types and the resulting participant assignments.
Five-type specification II assigns 11 participants to Grether-Bayesian A. The four-type specification and five-type specifications I and II classify 29, 28, and 24 participants as Non-movers, respectively, compared with 28 in the main specification.

Each specification reports an assignment table, a figure of motivated reasoning effects, and two regression tables.
The regression tables use participant covariates and interactions with posterior type probabilities, respectively.
Participant covariates include gender, race, cognitive score, risk attitude, overconfidence, and narrative extremeness.
These regressions also include fixed effects for signal color, round, and decision problem.
Standard errors are clustered at the participant level.

\subsubsection{Two-Type Specification}

\begin{table}[htbp]
\centering
\begin{threeparttable}
\small
\begin{tabular}{@{}c@{}}
\begin{adjustbox}{max width=\linewidth}
{
\def\sym#1{\ifmmode^{#1}\else\(^{#1}\)\fi}
\begin{tabular}{l*{3}{c}}
\toprule
 & \multicolumn{3}{c}{} \\
 & Bayesian & Best-fit & Total \\
\midrule
Bayesian & 132 & 0 & 132 \\
Best-fit & 1 & 32 & 33 \\
Non-mover & 28 & 0 & 28 \\
Total & 161 & 32 & 193 \\
\bottomrule
\end{tabular}
}

\end{adjustbox}
\end{tabular}
\begin{tablenotes}
\footnotesize
\item \textit{Notes:} The table compares participant classifications under the main three-type specification and the alternative specification. Rows report the main classification, and columns report the alternative classification. Each entry gives the number of participants. Both specifications assign each of the 193 participants to the type with the highest posterior probability, using beliefs from the \textit{SP} block.
\end{tablenotes}
\end{threeparttable}
\caption{Two-Type and Main Three-Type Assignments}
\label{tab:xtab_3_2}
\end{table}

\begin{figure}[htbp]
\centering
\includegraphics[width=0.88\linewidth]{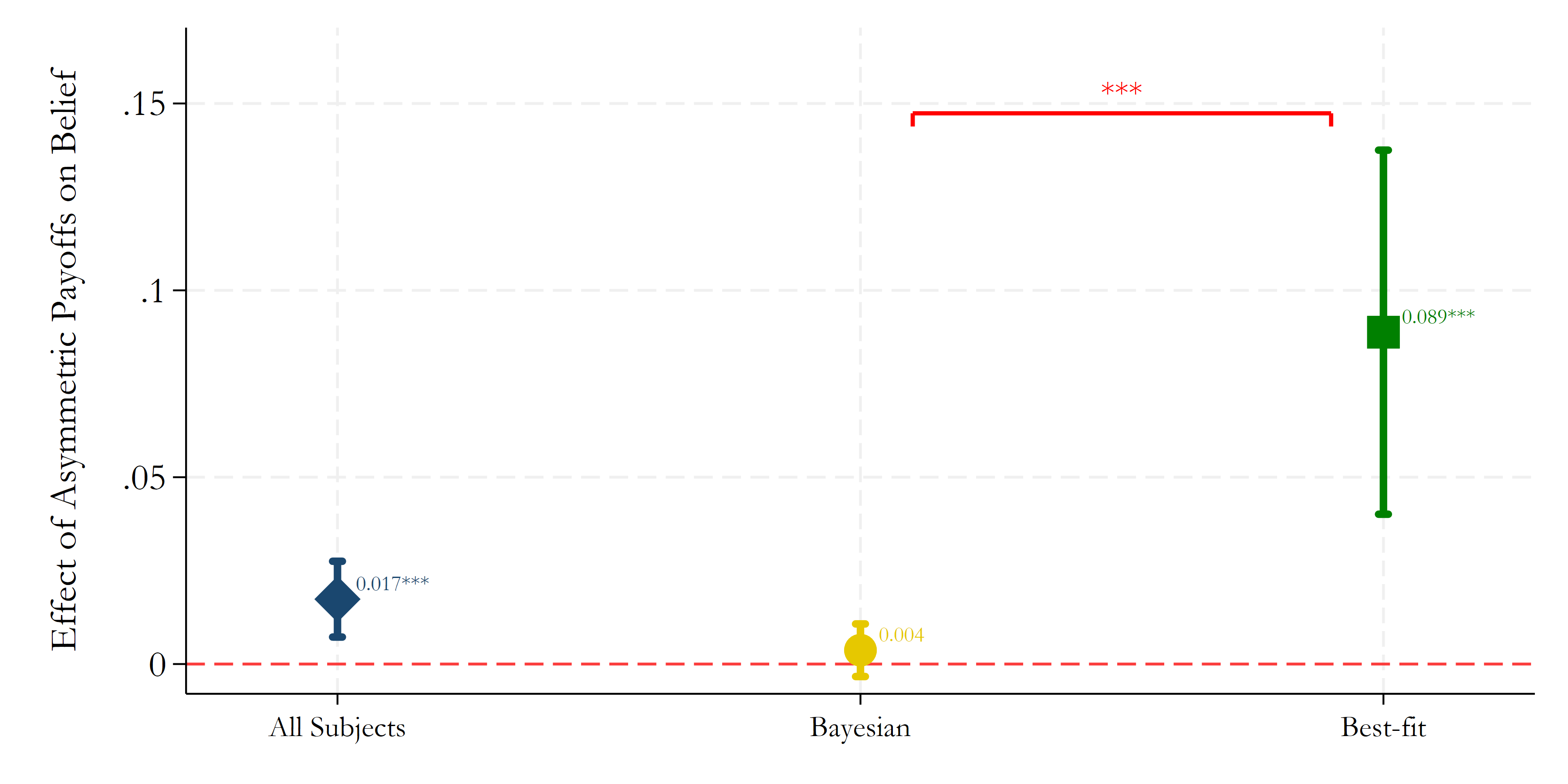}
\begin{minipage}{0.88\linewidth}
\footnotesize \textit{Notes:} The figure reports estimated coefficients on $\mathbbm{1}_{\text{Asym}}$ in Equation~\eqref{eqn:ols} for the pooled sample and separately for each behavioral type. The pooled, Bayesian, and Best-fit samples contain 193, 161, and 32 participants, respectively. Each participant contributes 54 beliefs. Regressions control for gender, race, cognitive score, risk attitude, overconfidence, and narrative extremeness. All specifications include fixed effects for signal color, round, and decision problem. Bars indicate 95\% confidence intervals based on standard errors clustered at the participant level. The red bracket reports the significance of the difference between the Best-fit and Bayesian coefficients. $^{*}p<0.10$, $^{**}p<0.05$, $^{***}p<0.01$.
\end{minipage}
\caption{Two-Type Motivated Reasoning Effects: All Types}
\label{fig:alternative_two_all}
\end{figure}

\begin{table}[htbp]
\centering
\begin{threeparttable}
\small
\begin{tabular}{@{}c@{}}
\begin{adjustbox}{max width=\linewidth}
{
\def\sym#1{\ifmmode^{#1}\else\(^{#1}\)\fi}
\begin{tabular}{l*{3}{c}}
\toprule
DV: Reported Belief & (1) & (2) & (3) \\
\midrule
\(\mathbbm{1}_{\text{Asym}}\) & 0.017\sym{***} & 0.004 & 0.089\sym{***} \\
 & (0.005) & (0.004) & (0.024) \\
\midrule
Observations & 10422 & 8694 & 1728 \\
Participants & 193 & 161 & 32 \\
Mean of DV & 0.506 & 0.512 & 0.476 \\
\(R^2\) & 0.336 & 0.414 & 0.230 \\
Sample & All participants & Bayesian & Best-fit \\
\bottomrule
\end{tabular}
}

\end{adjustbox}
\end{tabular}
\begin{tablenotes}
\footnotesize
\item \textit{Notes:} The table reports OLS estimates. The dependent variable is reported belief. Columns (1)--(3) use the pooled sample, Bayesian participants, and Best-fit participants, respectively. Each participant contributes 54 beliefs. Unreported controls include gender, race indicators, cognitive score, risk attitude, overconfidence, and narrative extremeness. The specification includes fixed effects for signal color, round, and decision problem. Standard errors clustered at the participant level appear in parentheses. $^{*}p<0.10$, $^{**}p<0.05$, $^{***}p<0.01$.
\end{tablenotes}
\end{threeparttable}
\caption{Two-Type Motivated Reasoning Regressions: Participant Covariates}
\label{tab:alternative_two_source}
\end{table}

\begin{table}[htbp]
\centering
\begin{threeparttable}
\small
\begin{tabular}{@{}c@{}}
\begin{adjustbox}{max width=\linewidth}
{
\def\sym#1{\ifmmode^{#1}\else\(^{#1}\)\fi}
\begin{tabular}{l*{3}{c}}
\toprule
DV: Reported Belief & (1) & (2) & (3) \\
\midrule
\(\mathbbm{1}_{\text{Asym}}\) & 0.003 & 0.003 & 0.002 \\
 & (0.004) & (0.004) & (0.003) \\
\(\mathbbm{1}_{\text{Asym}}\) \(\times\) Likelihood of Best-fit Type & 0.086\sym{***} & 0.086\sym{***} & 0.082\sym{***} \\
 & (0.023) & (0.023) & (0.023) \\
Likelihood of Best-fit Type & -0.078\sym{***} & -0.072\sym{***} &  \\
 & (0.019) & (0.019) &  \\
\midrule
Observations & 10422 & 10422 & 10422 \\
Participants & 193 & 193 & 193 \\
Mean of DV & 0.506 & 0.506 & 0.506 \\
\(R^2\) & 0.341 & 0.343 & 0.515 \\
Participant FE & No & No & Yes \\
DP \(\times\) signal FE & No & No & Yes \\
DP FE & Yes & Yes & No \\
Participant covariates & No & Yes & Absorbed \\
Narrative extremeness & No & Yes & Absorbed \\
\bottomrule
\end{tabular}
}

\end{adjustbox}
\end{tabular}
\begin{tablenotes}
\footnotesize
\item \textit{Notes:} The table reports OLS estimates of how the effect of motivated reasoning varies with posterior type probabilities. The dependent variable is reported belief. Each column uses 10,422 beliefs from 193 participants. Posterior type probabilities are estimated from the 27 beliefs per participant in the \textit{SP} block. The Bayesian type is the reference category. Each regression includes $\mathbbm{1}_{\text{Asym}}$, the other types' posterior probabilities, and their interactions with $\mathbbm{1}_{\text{Asym}}$. Columns (1) and (2) include fixed effects for signal color, round, and decision problem. Column (2) also controls for gender, race indicators, cognitive score, risk attitude, overconfidence, and narrative extremeness. Column (3) includes participant, decision-problem-by-signal, and round fixed effects. Participant fixed effects absorb the posterior type probabilities in column (3). Standard errors clustered at the participant level appear in parentheses. $^{*}p<0.10$, $^{**}p<0.05$, $^{***}p<0.01$.
\end{tablenotes}
\end{threeparttable}
\caption{Two-Type Posterior-Probability Interactions}
\label{tab:alternative_two_probability}
\end{table}

With participant covariates, the effect of motivated reasoning is 8.5 percentage points larger among Best-fit participants than among Bayesian participants ($p<0.001$). A test that the effect of motivated reasoning is equal across all types gives $p<0.001$. Both tests come from a single regression that interacts each control and fixed effect with type.

\FloatBarrier

\subsubsection{Four-Type Specification}

\begin{table}[htbp]
\centering
\begin{threeparttable}
\small
\begin{tabular}{@{}c@{}}
\begin{adjustbox}{max width=\linewidth}
{
\def\sym#1{\ifmmode^{#1}\else\(^{#1}\)\fi}
\begin{tabular}{l*{5}{c}}
\toprule
 & \multicolumn{5}{c}{} \\
 & Bayesian & Best-fit & Non-mover & Grether-Bayesian & Total \\
\midrule
Bayesian & 128 & 3 & 1 & 0 & 132 \\
Best-fit & 1 & 19 & 0 & 13 & 33 \\
Non-mover & 0 & 0 & 28 & 0 & 28 \\
Total & 129 & 22 & 29 & 13 & 193 \\
\bottomrule
\end{tabular}
}

\end{adjustbox}
\end{tabular}
\begin{tablenotes}
\footnotesize
\item \textit{Notes:} The table compares participant classifications under the main three-type specification and the alternative specification. Rows report the main classification, and columns report the alternative classification. Each entry gives the number of participants. Both specifications assign each of the 193 participants to the type with the highest posterior probability, using beliefs from the \textit{SP} block.
\end{tablenotes}
\end{threeparttable}
\caption{Four-Type and Main Three-Type Assignments}
\label{tab:xtab_3_4}
\end{table}

\begin{figure}[htbp]
\centering
\includegraphics[width=0.88\linewidth]{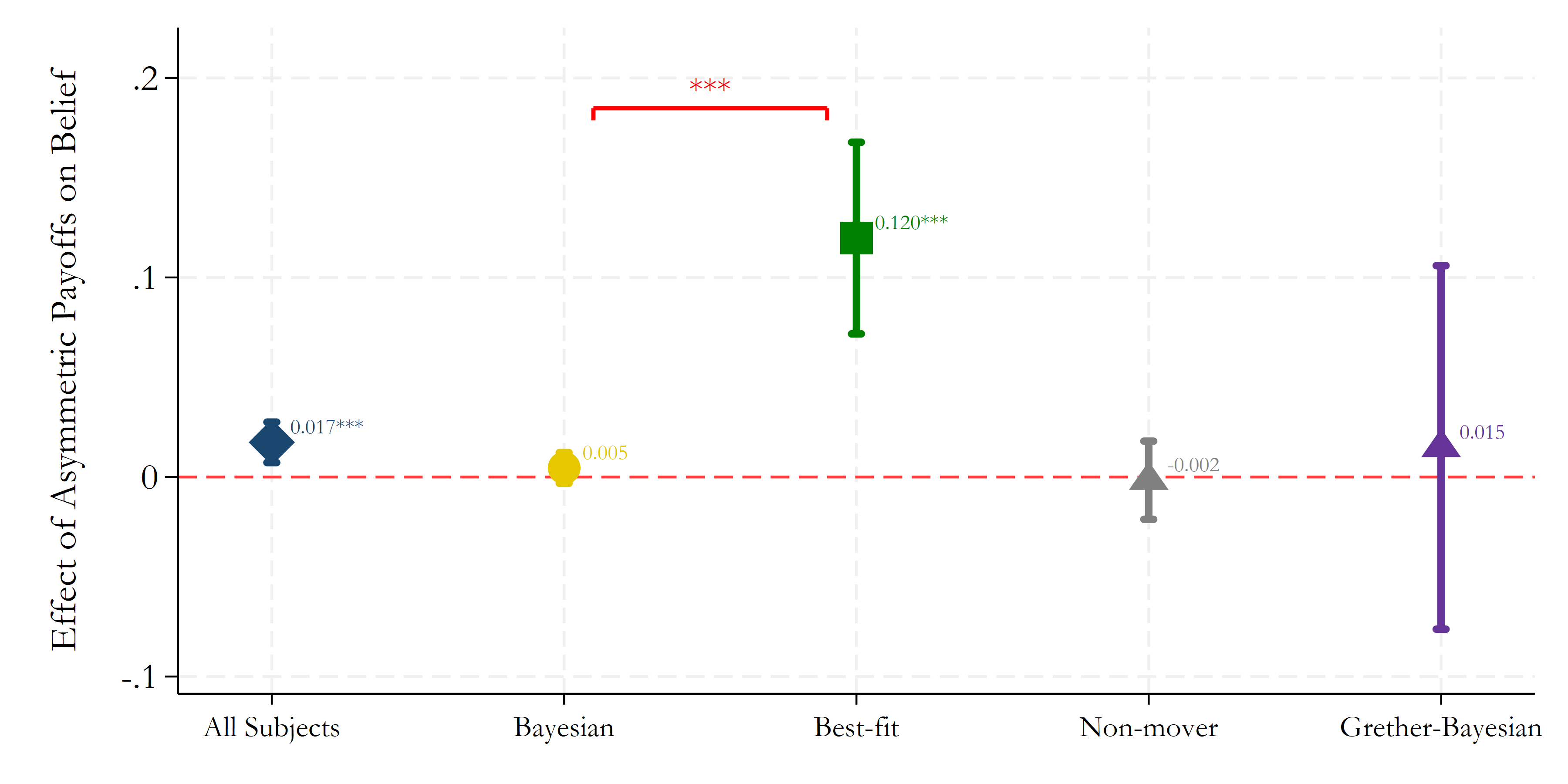}
\begin{minipage}{0.88\linewidth}
\footnotesize \textit{Notes:} The figure reports estimated coefficients on $\mathbbm{1}_{\text{Asym}}$ in Equation~\eqref{eqn:ols} for the pooled sample and separately for each behavioral type. The pooled sample contains 193 participants. The Bayesian, Best-fit, Non-mover, and Grether-Bayesian samples contain 129, 22, 29, and 13 participants, respectively. Each participant contributes 54 beliefs. Regressions control for gender, race, cognitive score, risk attitude, overconfidence, and narrative extremeness. All specifications include fixed effects for signal color, round, and decision problem. Bars indicate 95\% confidence intervals based on standard errors clustered at the participant level. The red bracket reports the significance of the difference between the Best-fit and Bayesian coefficients. $^{*}p<0.10$, $^{**}p<0.05$, $^{***}p<0.01$.
\end{minipage}
\caption{Four-Type Motivated Reasoning Effects: All Types}
\label{fig:alternative_four_all}
\end{figure}

\begin{table}[htbp]
\centering
\begin{threeparttable}
\small
\begin{tabular}{@{}c@{}}
\begin{adjustbox}{max width=\linewidth}
{
\def\sym#1{\ifmmode^{#1}\else\(^{#1}\)\fi}
\begin{tabular}{l*{5}{c}}
\toprule
DV: Reported Belief & (1) & (2) & (3) & (4) & (5) \\
\midrule
\(\mathbbm{1}_{\text{Asym}}\) & 0.017\sym{***} & 0.005 & 0.120\sym{***} & -0.002 & 0.015 \\
 & (0.005) & (0.004) & (0.023) & (0.010) & (0.042) \\
\midrule
Observations & 10422 & 6966 & 1188 & 1566 & 702 \\
Participants & 193 & 129 & 22 & 29 & 13 \\
Mean of DV & 0.506 & 0.511 & 0.472 & 0.510 & 0.511 \\
\(R^2\) & 0.336 & 0.417 & 0.262 & 0.519 & 0.403 \\
Sample & All participants & Bayesian & Best-fit & Non-mover & Grether-Bayesian \\
\bottomrule
\end{tabular}
}

\end{adjustbox}
\end{tabular}
\begin{tablenotes}
\footnotesize
\item \textit{Notes:} The table reports OLS estimates. The dependent variable is reported belief. Columns (1)--(4) use the pooled sample, Bayesian participants, Best-fit participants, and Non-movers, respectively. Column (5) uses Grether-Bayesian participants. Each participant contributes 54 beliefs. Unreported controls include gender, race indicators, cognitive score, risk attitude, overconfidence, and narrative extremeness. The specification includes fixed effects for signal color, round, and decision problem. Standard errors clustered at the participant level appear in parentheses. $^{*}p<0.10$, $^{**}p<0.05$, $^{***}p<0.01$.
\end{tablenotes}
\end{threeparttable}
\caption{Four-Type Motivated Reasoning Regressions: Participant Covariates}
\label{tab:alternative_four_source}
\end{table}

\begin{table}[htbp]
\centering
\begin{threeparttable}
\small
\begin{tabular}{@{}c@{}}
\begin{adjustbox}{max width=\linewidth}
{
\def\sym#1{\ifmmode^{#1}\else\(^{#1}\)\fi}
\begin{tabular}{l*{3}{c}}
\toprule
DV: Reported Belief & (1) & (2) & (3) \\
\midrule
\(\mathbbm{1}_{\text{Asym}}\) & 0.004 & 0.004 & 0.004 \\
 & (0.004) & (0.004) & (0.003) \\
\(\mathbbm{1}_{\text{Asym}}\) \(\times\) Likelihood of Best-fit Type & 0.117\sym{***} & 0.117\sym{***} & 0.114\sym{***} \\
 & (0.023) & (0.023) & (0.022) \\
\(\mathbbm{1}_{\text{Asym}}\) \(\times\) Likelihood of Non-mover Type & -0.007 & -0.007 & -0.013 \\
 & (0.011) & (0.011) & (0.010) \\
\(\mathbbm{1}_{\text{Asym}}\) \(\times\) Likelihood of Grether-Bayesian Type & 0.009 & 0.009 & 0.003 \\
 & (0.036) & (0.036) & (0.036) \\
Likelihood of Best-fit Type & -0.102\sym{***} & -0.097\sym{***} &  \\
 & (0.022) & (0.021) &  \\
Likelihood of Non-mover Type & 0.001 & 0.006 &  \\
 & (0.007) & (0.007) &  \\
Likelihood of Grether-Bayesian Type & -0.003 & 0.002 &  \\
 & (0.024) & (0.024) &  \\
\midrule
Observations & 10422 & 10422 & 10422 \\
Participants & 193 & 193 & 193 \\
Mean of DV & 0.506 & 0.506 & 0.506 \\
\(R^2\) & 0.343 & 0.346 & 0.518 \\
Participant FE & No & No & Yes \\
DP \(\times\) signal FE & No & No & Yes \\
DP FE & Yes & Yes & No \\
Participant covariates & No & Yes & Absorbed \\
Narrative extremeness & No & Yes & Absorbed \\
\bottomrule
\end{tabular}
}

\end{adjustbox}
\end{tabular}
\begin{tablenotes}
\footnotesize
\item \textit{Notes:} The table reports OLS estimates of how the effect of motivated reasoning varies with posterior type probabilities. The dependent variable is reported belief. Each column uses 10,422 beliefs from 193 participants. Posterior type probabilities are estimated from the 27 beliefs per participant in the \textit{SP} block. The Bayesian type is the reference category. Each regression includes $\mathbbm{1}_{\text{Asym}}$, the other types' posterior probabilities, and their interactions with $\mathbbm{1}_{\text{Asym}}$. Columns (1) and (2) include fixed effects for signal color, round, and decision problem. Column (2) also controls for gender, race indicators, cognitive score, risk attitude, overconfidence, and narrative extremeness. Column (3) includes participant, decision-problem-by-signal, and round fixed effects. Participant fixed effects absorb the posterior type probabilities in column (3). Standard errors clustered at the participant level appear in parentheses. $^{*}p<0.10$, $^{**}p<0.05$, $^{***}p<0.01$.
\end{tablenotes}
\end{threeparttable}
\caption{Four-Type Posterior-Probability Interactions}
\label{tab:alternative_four_probability}
\end{table}

With participant covariates, the effect of motivated reasoning is 11.5 percentage points larger among Best-fit participants than among Bayesian participants ($p<0.001$). A test that the effect of motivated reasoning is equal across all types gives $p<0.001$. Both tests come from a single regression that interacts each control and fixed effect with type.

\FloatBarrier

\subsubsection{Five-Type I Specification}

\begin{table}[htbp]
\centering
\begin{threeparttable}
\small
\begin{tabular}{@{}c@{}}
\begin{adjustbox}{max width=\linewidth}
{
\def\sym#1{\ifmmode^{#1}\else\(^{#1}\)\fi}
\begin{tabular}{l*{6}{c}}
\toprule
 & \multicolumn{6}{c}{} \\
 & Bayesian & Best-fit & Non-mover & Grether-Bayesian & Grether-Best-fit & Total \\
\midrule
Bayesian & 117 & 2 & 1 & 12 & 0 & 132 \\
Best-fit & 1 & 21 & 0 & 6 & 5 & 33 \\
Non-mover & 0 & 0 & 27 & 1 & 0 & 28 \\
Total & 118 & 23 & 28 & 19 & 5 & 193 \\
\bottomrule
\end{tabular}
}

\end{adjustbox}
\end{tabular}
\begin{tablenotes}
\footnotesize
\item \textit{Notes:} The table compares participant classifications under the main three-type specification and the alternative specification. Rows report the main classification, and columns report the alternative classification. Each entry gives the number of participants. Both specifications assign each of the 193 participants to the type with the highest posterior probability, using beliefs from the \textit{SP} block.
\end{tablenotes}
\end{threeparttable}
\caption{Five-Type I and Main Three-Type Assignments}
\label{tab:xtab_3_5}
\end{table}

\begin{figure}[htbp]
\centering
\includegraphics[width=0.88\linewidth]{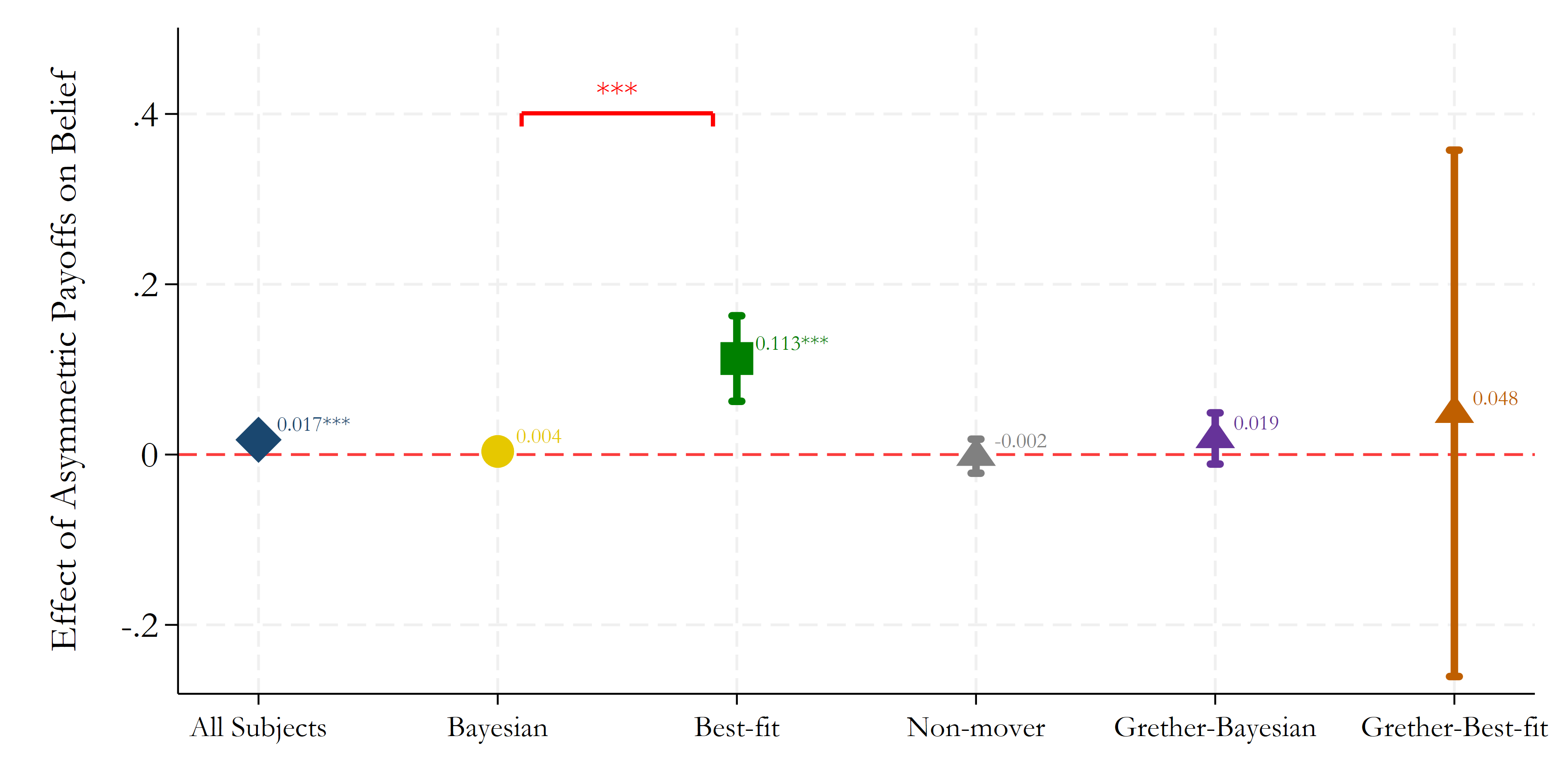}
\begin{minipage}{0.88\linewidth}
\footnotesize \textit{Notes:} The figure reports estimated coefficients on $\mathbbm{1}_{\text{Asym}}$ in Equation~\eqref{eqn:ols} for the pooled sample and separately for each behavioral type. The pooled sample contains 193 participants. The Bayesian, Best-fit, and Non-mover samples contain 118, 23, and 28 participants, respectively. The Grether-Bayesian and Grether-Best-fit samples contain 19 and 5 participants, respectively. Each participant contributes 54 beliefs. Regressions control for gender, race, cognitive score, risk attitude, overconfidence, and narrative extremeness. All specifications include fixed effects for signal color, round, and decision problem. Bars indicate 95\% confidence intervals based on standard errors clustered at the participant level. The red bracket reports the significance of the difference between the Best-fit and Bayesian coefficients. $^{*}p<0.10$, $^{**}p<0.05$, $^{***}p<0.01$.
\end{minipage}
\caption{Five-Type I Motivated Reasoning Effects: All Types}
\label{fig:alternative_five1_all}
\end{figure}

\begin{table}[htbp]
\centering
\begin{threeparttable}
\small
\begin{adjustbox}{max width=\linewidth}
{
\def\sym#1{\ifmmode^{#1}\else\(^{#1}\)\fi}
\begin{tabular}{l*{6}{c}}
\toprule
DV: Reported Belief & (1) & (2) & (3) & (4) & (5) & (6) \\
\midrule
\(\mathbbm{1}_{\text{Asym}}\) & 0.017\sym{***} & 0.004 & 0.113\sym{***} & -0.002 & 0.019 & 0.048 \\
 & (0.005) & (0.004) & (0.024) & (0.010) & (0.014) & (0.111) \\
\midrule
Observations & 10422 & 6372 & 1242 & 1512 & 1026 & 270 \\
Participants & 193 & 118 & 23 & 28 & 19 & 5 \\
Mean of DV & 0.506 & 0.508 & 0.477 & 0.509 & 0.533 & 0.493 \\
\(R^2\) & 0.336 & 0.422 & 0.243 & 0.530 & 0.498 & 0.560 \\
Sample & All participants & Bayesian & Best-fit & Non-mover & Grether-Bayesian & Grether-Best-fit \\
\bottomrule
\end{tabular}
}

\end{adjustbox}
\begin{tablenotes}
\footnotesize
\item \textit{Notes:} The table reports OLS estimates. The dependent variable is reported belief. Columns (1)--(4) use the pooled sample, Bayesian participants, Best-fit participants, and Non-movers, respectively. Columns (5) and (6) use Grether-Bayesian and Grether-Best-fit participants, respectively. Each participant contributes 54 beliefs. Unreported controls include gender, race indicators, cognitive score, risk attitude, overconfidence, and narrative extremeness. The specification includes fixed effects for signal color, round, and decision problem. Standard errors clustered at the participant level appear in parentheses. $^{*}p<0.10$, $^{**}p<0.05$, $^{***}p<0.01$.
\end{tablenotes}
\end{threeparttable}
\caption{Five-Type I Motivated Reasoning Regressions: Participant Covariates}
\label{tab:alternative_five1_source}
\end{table}

\begin{table}[htbp]
\centering
\begin{threeparttable}
\small
\begin{tabular}{@{}c@{}}
\begin{adjustbox}{max width=\linewidth}
{
\def\sym#1{\ifmmode^{#1}\else\(^{#1}\)\fi}
\begin{tabular}{l*{3}{c}}
\toprule
DV: Reported Belief & (1) & (2) & (3) \\
\midrule
\(\mathbbm{1}_{\text{Asym}}\) & 0.003 & 0.003 & 0.004 \\
 & (0.004) & (0.004) & (0.003) \\
\(\mathbbm{1}_{\text{Asym}}\) \(\times\) Likelihood of Best-fit Type & 0.108\sym{***} & 0.108\sym{***} & 0.105\sym{***} \\
 & (0.024) & (0.024) & (0.024) \\
\(\mathbbm{1}_{\text{Asym}}\) \(\times\) Likelihood of Non-mover Type & -0.006 & -0.006 & -0.012 \\
 & (0.011) & (0.011) & (0.010) \\
\(\mathbbm{1}_{\text{Asym}}\) \(\times\) Likelihood of Grether-Bayesian Type & 0.007 & 0.007 & 0.004 \\
 & (0.015) & (0.015) & (0.014) \\
\(\mathbbm{1}_{\text{Asym}}\) \(\times\) Likelihood of Grether-Best-fit Type & 0.034 & 0.034 & 0.016 \\
 & (0.074) & (0.074) & (0.077) \\
Likelihood of Best-fit Type & -0.089\sym{***} & -0.083\sym{***} &  \\
 & (0.024) & (0.023) &  \\
Likelihood of Non-mover Type & 0.002 & 0.008 &  \\
 & (0.007) & (0.007) &  \\
Likelihood of Grether-Bayesian Type & 0.019\sym{*} & 0.019\sym{*} &  \\
 & (0.011) & (0.011) &  \\
Likelihood of Grether-Best-fit Type & -0.031 & -0.027 &  \\
 & (0.042) & (0.044) &  \\
\midrule
Observations & 10422 & 10422 & 10422 \\
Participants & 193 & 193 & 193 \\
Mean of DV & 0.506 & 0.506 & 0.506 \\
\(R^2\) & 0.342 & 0.345 & 0.517 \\
Participant FE & No & No & Yes \\
DP \(\times\) signal FE & No & No & Yes \\
DP FE & Yes & Yes & No \\
Participant covariates & No & Yes & Absorbed \\
Narrative extremeness & No & Yes & Absorbed \\
\bottomrule
\end{tabular}
}

\end{adjustbox}
\end{tabular}
\begin{tablenotes}
\footnotesize
\item \textit{Notes:} The table reports OLS estimates of how the effect of motivated reasoning varies with posterior type probabilities. The dependent variable is reported belief. Each column uses 10,422 beliefs from 193 participants. Posterior type probabilities are estimated from the 27 beliefs per participant in the \textit{SP} block. The Bayesian type is the reference category. Each regression includes $\mathbbm{1}_{\text{Asym}}$, the other types' posterior probabilities, and their interactions with $\mathbbm{1}_{\text{Asym}}$. Columns (1) and (2) include fixed effects for signal color, round, and decision problem. Column (2) also controls for gender, race indicators, cognitive score, risk attitude, overconfidence, and narrative extremeness. Column (3) includes participant, decision-problem-by-signal, and round fixed effects. Participant fixed effects absorb the posterior type probabilities in column (3). Standard errors clustered at the participant level appear in parentheses. $^{*}p<0.10$, $^{**}p<0.05$, $^{***}p<0.01$.
\end{tablenotes}
\end{threeparttable}
\caption{Five-Type I Posterior-Probability Interactions}
\label{tab:alternative_five1_probability}
\end{table}

With participant covariates, the effect of motivated reasoning is 10.9 percentage points larger among Best-fit participants than among Bayesian participants ($p<0.001$). A test that the effect of motivated reasoning is equal across all types gives $p<0.001$. Both tests come from a single regression that interacts each control and fixed effect with type.

\FloatBarrier

\subsubsection{Five-Type II Specification}

\begin{table}[htbp]
\centering
\begin{threeparttable}
\small
\begin{adjustbox}{max width=\linewidth}
{
\def\sym#1{\ifmmode^{#1}\else\(^{#1}\)\fi}
\begin{tabular}{l*{6}{c}}
\toprule
 & \multicolumn{6}{c}{} \\
 & Bayesian & Best-fit & Non-mover & Grether-Bayesian A & Grether-Bayesian B & Total \\
\midrule
Bayesian & 121 & 3 & 1 & 7 & 0 & 132 \\
Best-fit & 1 & 19 & 0 & 0 & 13 & 33 \\
Non-mover & 1 & 0 & 23 & 4 & 0 & 28 \\
Total & 123 & 22 & 24 & 11 & 13 & 193 \\
\bottomrule
\end{tabular}
}

\end{adjustbox}
\begin{tablenotes}
\footnotesize
\item \textit{Notes:} The table compares participant classifications under the main three-type specification and the alternative specification. Rows report the main classification, and columns report the alternative classification. Each entry gives the number of participants. Both specifications assign each of the 193 participants to the type with the highest posterior probability, using beliefs from the \textit{SP} block.
\end{tablenotes}
\end{threeparttable}
\caption{Five-Type II and Main Three-Type Assignments}
\label{tab:xtab_3_5a}
\end{table}

\begin{figure}[htbp]
\centering
\includegraphics[width=0.88\linewidth]{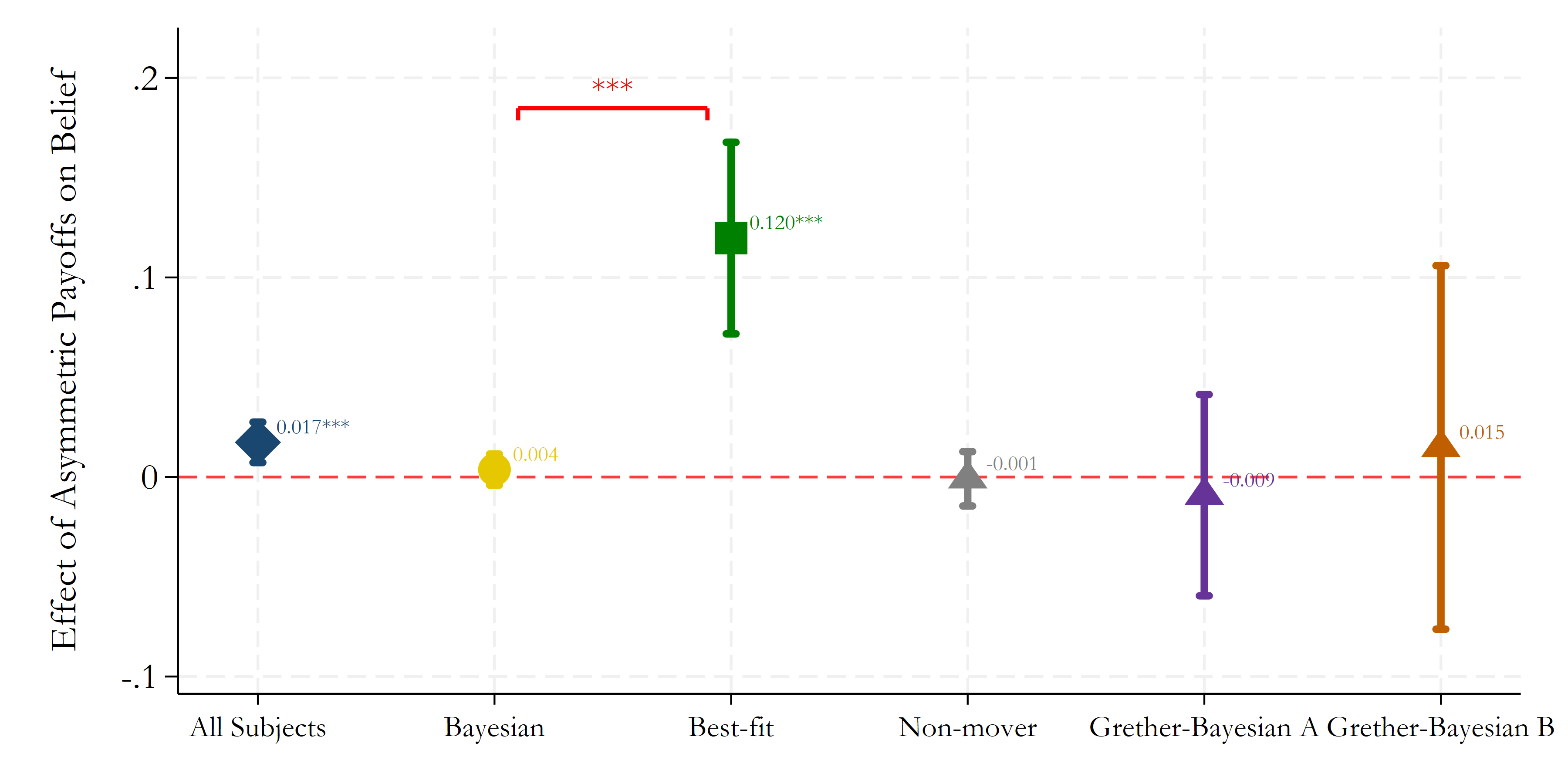}
\begin{minipage}{0.88\linewidth}
\footnotesize \textit{Notes:} The figure reports estimated coefficients on $\mathbbm{1}_{\text{Asym}}$ in Equation~\eqref{eqn:ols} for the pooled sample and separately for each behavioral type. The pooled sample contains 193 participants. The Bayesian, Best-fit, and Non-mover samples contain 123, 22, and 24 participants, respectively. The Grether-Bayesian A and B samples contain 11 and 13 participants, respectively. Each participant contributes 54 beliefs. Regressions control for gender, race, cognitive score, risk attitude, overconfidence, and narrative extremeness. All specifications include fixed effects for signal color, round, and decision problem. Bars indicate 95\% confidence intervals based on standard errors clustered at the participant level. The red bracket reports the significance of the difference between the Best-fit and Bayesian coefficients. $^{*}p<0.10$, $^{**}p<0.05$, $^{***}p<0.01$.
\end{minipage}
\caption{Five-Type II Motivated Reasoning Effects: All Types}
\label{fig:alternative_five2_all}
\end{figure}

\begin{table}[htbp]
\centering
\begin{threeparttable}
\small
\begin{adjustbox}{max width=\linewidth}
{
\def\sym#1{\ifmmode^{#1}\else\(^{#1}\)\fi}
\begin{tabular}{l*{6}{c}}
\toprule
DV: Reported Belief & (1) & (2) & (3) & (4) & (5) & (6) \\
\midrule
\(\mathbbm{1}_{\text{Asym}}\) & 0.017\sym{***} & 0.004 & 0.120\sym{***} & -0.001 & -0.009 & 0.015 \\
 & (0.005) & (0.004) & (0.023) & (0.007) & (0.023) & (0.042) \\
\midrule
Observations & 10422 & 6642 & 1188 & 1296 & 594 & 702 \\
Participants & 193 & 123 & 22 & 24 & 11 & 13 \\
Mean of DV & 0.506 & 0.512 & 0.472 & 0.509 & 0.505 & 0.511 \\
\(R^2\) & 0.336 & 0.441 & 0.262 & 0.680 & 0.287 & 0.403 \\
Sample & All participants & Bayesian & Best-fit & Non-mover & Grether-Bayesian A & Grether-Bayesian B \\
\bottomrule
\end{tabular}
}

\end{adjustbox}
\begin{tablenotes}
\footnotesize
\item \textit{Notes:} The table reports OLS estimates. The dependent variable is reported belief. Columns (1)--(4) use the pooled sample, Bayesian participants, Best-fit participants, and Non-movers, respectively. Columns (5) and (6) use Grether-Bayesian A and Grether-Bayesian B participants, respectively. Each participant contributes 54 beliefs. Unreported controls include gender, race indicators, cognitive score, risk attitude, overconfidence, and narrative extremeness. The specification includes fixed effects for signal color, round, and decision problem. Standard errors clustered at the participant level appear in parentheses. $^{*}p<0.10$, $^{**}p<0.05$, $^{***}p<0.01$.
\end{tablenotes}
\end{threeparttable}
\caption{Five-Type II Motivated Reasoning Regressions: Participant Covariates}
\label{tab:alternative_five2_source}
\end{table}

\begin{table}[htbp]
\centering
\begin{threeparttable}
\small
\begin{tabular}{@{}c@{}}
\begin{adjustbox}{max width=\linewidth}
{
\def\sym#1{\ifmmode^{#1}\else\(^{#1}\)\fi}
\begin{tabular}{l*{3}{c}}
\toprule
DV: Reported Belief & (1) & (2) & (3) \\
\midrule
\(\mathbbm{1}_{\text{Asym}}\) & 0.003 & 0.003 & 0.003 \\
 & (0.004) & (0.004) & (0.003) \\
\(\mathbbm{1}_{\text{Asym}}\) \(\times\) Likelihood of Best-fit Type & 0.117\sym{***} & 0.117\sym{***} & 0.114\sym{***} \\
 & (0.023) & (0.023) & (0.022) \\
\(\mathbbm{1}_{\text{Asym}}\) \(\times\) Likelihood of Non-mover Type & -0.001 & -0.001 & -0.008 \\
 & (0.008) & (0.008) & (0.008) \\
\(\mathbbm{1}_{\text{Asym}}\) \(\times\) Likelihood of Grether-Bayesian A Type & -0.001 & -0.001 & -0.004 \\
 & (0.022) & (0.022) & (0.021) \\
\(\mathbbm{1}_{\text{Asym}}\) \(\times\) Likelihood of Grether-Bayesian B Type & 0.010 & 0.010 & 0.004 \\
 & (0.035) & (0.035) & (0.036) \\
Likelihood of Best-fit Type & -0.101\sym{***} & -0.096\sym{***} &  \\
 & (0.022) & (0.021) &  \\
Likelihood of Non-mover Type & -0.004 & 0.002 &  \\
 & (0.006) & (0.007) &  \\
Likelihood of Grether-Bayesian A Type & -0.006 & -0.004 &  \\
 & (0.013) & (0.012) &  \\
Likelihood of Grether-Bayesian B Type & -0.004 & 0.001 &  \\
 & (0.024) & (0.024) &  \\
\midrule
Observations & 10422 & 10422 & 10422 \\
Participants & 193 & 193 & 193 \\
Mean of DV & 0.506 & 0.506 & 0.506 \\
\(R^2\) & 0.343 & 0.345 & 0.518 \\
Participant FE & No & No & Yes \\
DP \(\times\) signal FE & No & No & Yes \\
DP FE & Yes & Yes & No \\
Participant covariates & No & Yes & Absorbed \\
Narrative extremeness & No & Yes & Absorbed \\
\bottomrule
\end{tabular}
}

\end{adjustbox}
\end{tabular}
\begin{tablenotes}
\footnotesize
\item \textit{Notes:} The table reports OLS estimates of how the effect of motivated reasoning varies with posterior type probabilities. The dependent variable is reported belief. Each column uses 10,422 beliefs from 193 participants. Posterior type probabilities are estimated from the 27 beliefs per participant in the \textit{SP} block. The Bayesian type is the reference category. Each regression includes $\mathbbm{1}_{\text{Asym}}$, the other types' posterior probabilities, and their interactions with $\mathbbm{1}_{\text{Asym}}$. Columns (1) and (2) include fixed effects for signal color, round, and decision problem. Column (2) also controls for gender, race indicators, cognitive score, risk attitude, overconfidence, and narrative extremeness. Column (3) includes participant, decision-problem-by-signal, and round fixed effects. Participant fixed effects absorb the posterior type probabilities in column (3). Standard errors clustered at the participant level appear in parentheses. $^{*}p<0.10$, $^{**}p<0.05$, $^{***}p<0.01$.
\end{tablenotes}
\end{threeparttable}
\caption{Five-Type II Posterior-Probability Interactions}
\label{tab:alternative_five2_probability}
\end{table}

With participant covariates, the effect of motivated reasoning is 11.6 percentage points larger among Best-fit participants than among Bayesian participants ($p<0.001$). A test that the effect of motivated reasoning is equal across all types gives $p<0.001$. Both tests come from a single regression that interacts each control and fixed effect with type.

\clearpage

\subsection{Individual Characteristics, Narrative Opinions, and Other Behavioral Measures}
\label{app:individual_characteristics}

\subsubsection{Type Associations and Narrative Opinions}
\label{app:narrative_associations}

The type regressions describe associations between participant characteristics and modal type assignment.
Narrative opinions are collected after the laboratory tasks and provide an observational comparison across domains.
Their extremeness is the mean absolute distance from five across six responses on the eleven-point scale.

\begin{table}[htbp]
\centering
\begin{threeparttable}
\small
\begin{tabular}{@{}c@{}}
\begin{adjustbox}{max width=\linewidth}
{
\def\sym#1{\ifmmode^{#1}\else\(^{#1}\)\fi}
\begin{tabular}{l*{3}{c}}
\toprule
DV: Type Classification & (1) & (2) & (3) \\
\midrule
Female & 0.032 & -0.004 & -0.033 \\
 & (0.066) & (0.054) & (0.051) \\
Asian & 0.248\sym{**} & -0.165\sym{**} & -0.026 \\
 & (0.101) & (0.069) & (0.083) \\
White & 0.238\sym{**} & -0.234\sym{***} & 0.058 \\
 & (0.101) & (0.070) & (0.082) \\
Quiz Score & 0.056 & 0.025 & -0.074\sym{***} \\
 & (0.040) & (0.037) & (0.028) \\
Cognitive Score & 0.010 & -0.008 & -0.003 \\
 & (0.028) & (0.023) & (0.022) \\
Risk Attitude & 0.005\sym{***} & -0.001 & -0.004\sym{***} \\
 & (0.002) & (0.001) & (0.001) \\
Overconfidence & -0.075 & -0.034 & 0.112 \\
 & (0.080) & (0.064) & (0.069) \\
Narrative Extremeness & -0.094\sym{**} & 0.067\sym{**} & 0.024 \\
 & (0.037) & (0.029) & (0.029) \\
Prior Experiments & 0.014 & -0.016 & 0.002 \\
 & (0.011) & (0.010) & (0.009) \\
\midrule
Observations & 193 & 193 & 193 \\
Pseudo \(R^2\) & 0.109 & 0.121 & 0.099 \\
Mean of DV & 0.684 & 0.171 & 0.145 \\
Type & Bayesian & Best-fit & Non-mover \\
\bottomrule
\end{tabular}
}

\end{adjustbox}
\end{tabular}
\begin{tablenotes}
\footnotesize
\item \textit{Notes:} The table reports average marginal effects from binary logit regressions. The dependent variables indicate Bayesian, Best-fit, and Non-mover classification in columns (1)--(3), respectively. Each regression uses all 193 participants, with types assigned from their 27 beliefs in the \textit{SP} block. All displayed characteristics enter each regression. Narrative extremeness is the mean absolute distance from five across six zero-to-ten survey responses. Standard errors calculated using the delta method appear in parentheses. The dependent-variable mean gives the corresponding type's participant share. Pseudo $R^2$ refers to the underlying logit. $^{*}p<0.10$, $^{**}p<0.05$, $^{***}p<0.01$.
\end{tablenotes}
\end{threeparttable}
\caption{Participant Characteristics and Type Assignment}
\label{tab:type_determinant_full}
\end{table}

\begin{table}[htbp]
\centering
\begin{threeparttable}
\small
\begin{tabular}{@{}c@{}}
\begin{adjustbox}{max width=\linewidth}
\begin{tabular}{lccc}
\toprule
 & Spearman rho & p-value & Observations \\
\midrule
Laboratory and Narrative Extremeness & 0.136 & 0.060 & 193 \\
\bottomrule
\end{tabular}

\end{adjustbox}
\end{tabular}
\begin{tablenotes}
\footnotesize
\item \textit{Notes:} The table reports the Spearman rank correlation between laboratory belief extremeness and narrative extremeness. Belief extremeness averages absolute distance from one half across all 27 beliefs in the \textit{SP} block. Narrative extremeness averages absolute distance from five across all six zero-to-ten survey responses. The calculation uses 193 participants and includes no controls.
\end{tablenotes}
\end{threeparttable}
\caption{Rank Association between Laboratory and Narrative Extremeness}
\label{tab:lab_narrative_rank}
\end{table}

\begin{table}[htbp]
\centering
\begin{threeparttable}
\small
\begin{tabular}{@{}c@{}}
\begin{adjustbox}{max width=\linewidth}
{
\def\sym#1{\ifmmode^{#1}\else\(^{#1}\)\fi}
\begin{tabular}{l*{4}{c}}
\toprule
DV: Narrative Extremeness & (1) & (2) & (3) & (4) \\
\midrule
Best-fit & 0.409\sym{**} &  &  &  \\
 & (0.184) &  &  &  \\
Non-mover & 0.170 &  &  &  \\
 & (0.163) &  &  &  \\
SP Relative Benchmark Distance &  & 2.775\sym{**} & 2.780\sym{**} &  \\
 &  & (1.074) & (1.084) &  \\
Quiz Score &  &  & 0.073 & 0.082 \\
 &  &  & (0.071) & (0.073) \\
Other Seven Cognitive Tasks &  &  & 0.013 & 0.006 \\
 &  &  & (0.051) & (0.052) \\
Female &  &  & 0.056 & 0.037 \\
 &  &  & (0.127) & (0.130) \\
SP Absolute Bayesian Error &  &  &  & 1.399\sym{*} \\
 &  &  &  & (0.721) \\
Constant & 2.217\sym{***} & 2.573\sym{***} & 1.813\sym{**} & 1.348\sym{*} \\
 & (0.069) & (0.126) & (0.704) & (0.740) \\
\midrule
Observations & 193 & 193 & 193 & 193 \\
Mean of DV & 2.312 & 2.312 & 2.312 & 2.312 \\
\(R^2\) & 0.034 & 0.058 & 0.063 & 0.025 \\
Participant covariates & No & No & Yes & Yes \\
\bottomrule
\end{tabular}
}

\end{adjustbox}
\end{tabular}
\begin{tablenotes}
\footnotesize
\item \textit{Notes:} The table reports OLS estimates. The dependent variable is narrative extremeness, the mean absolute distance from five across six zero-to-ten responses. Each participant contributes one observation. Bayesian participants are the reference group in column (1). Relative benchmark distance equals mean absolute Bayesian error minus mean absolute Best-fit error over the 27 beliefs in the \textit{SP} block. Columns (2) and (3) use relative benchmark distance; column (4) uses mean absolute Bayesian error. Columns (3) and (4) add quiz score, performance on the other seven cognitive tasks, and gender. All covariates and constants are displayed. Robust standard errors appear in parentheses. $^{*}p<0.10$, $^{**}p<0.05$, $^{***}p<0.01$.
\end{tablenotes}
\end{threeparttable}
\caption{Narrative Opinions, Type Assignment, and Updating Performance}
\label{tab:narrative_associations_full}
\end{table}

\FloatBarrier

\subsubsection{Repeated Reports}
\label{app:repeatability}

Repeated-report differences measure consistency within identical observed information and preference conditions.

\begin{figure}[htbp]
\centering
\includegraphics[width=0.88\linewidth]{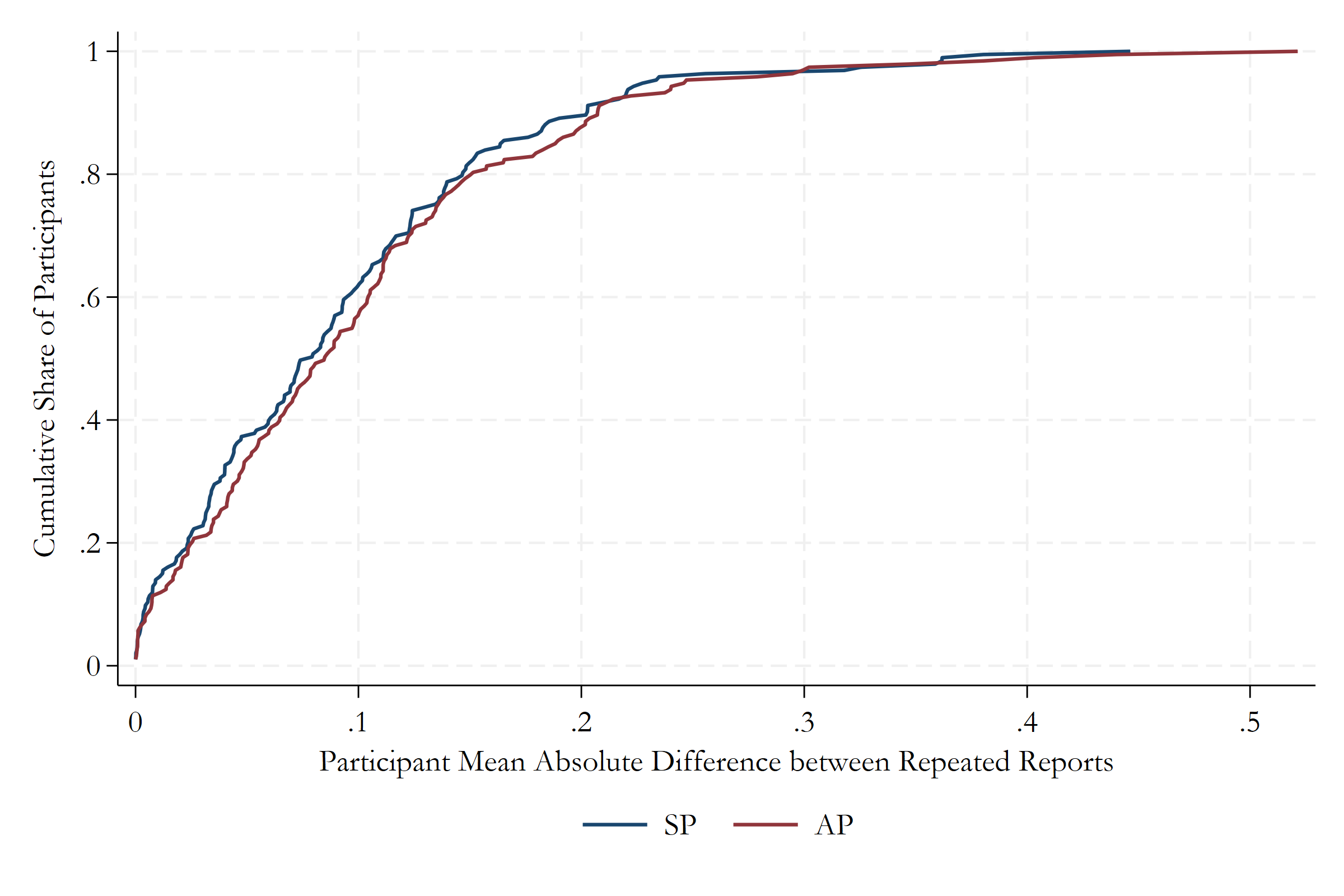}
\begin{minipage}{0.88\linewidth}
\footnotesize \textit{Notes:} The curves show cumulative distributions of participant-average absolute differences between beliefs reported under identical information in the same block. Repeated information conditions receive equal weight within each participant. The \textit{SP} and \textit{AP} curves each contain 193 participants, with equal weight per participant.
\end{minipage}
\caption{Distribution of Repeated-Report Differences}
\label{fig:repeatability_cdf_full}
\end{figure}

\FloatBarrier

\subsubsection{Implied Model Weights}
\label{app:weights}

An implied weight expresses a report as a linear combination of the lower and higher model-specific posterior probabilities.
The weight on the higher posterior equals the report's distance above the lower posterior divided by the gap between the two posteriors.
The single-model problem is excluded because its two posterior predictions coincide.
The report-level distributions retain every eligible report and cap weights below zero at zero and above one at one.

\FloatBarrier




\begin{figure}[htbp]
\centering
\includegraphics[width=0.88\linewidth]{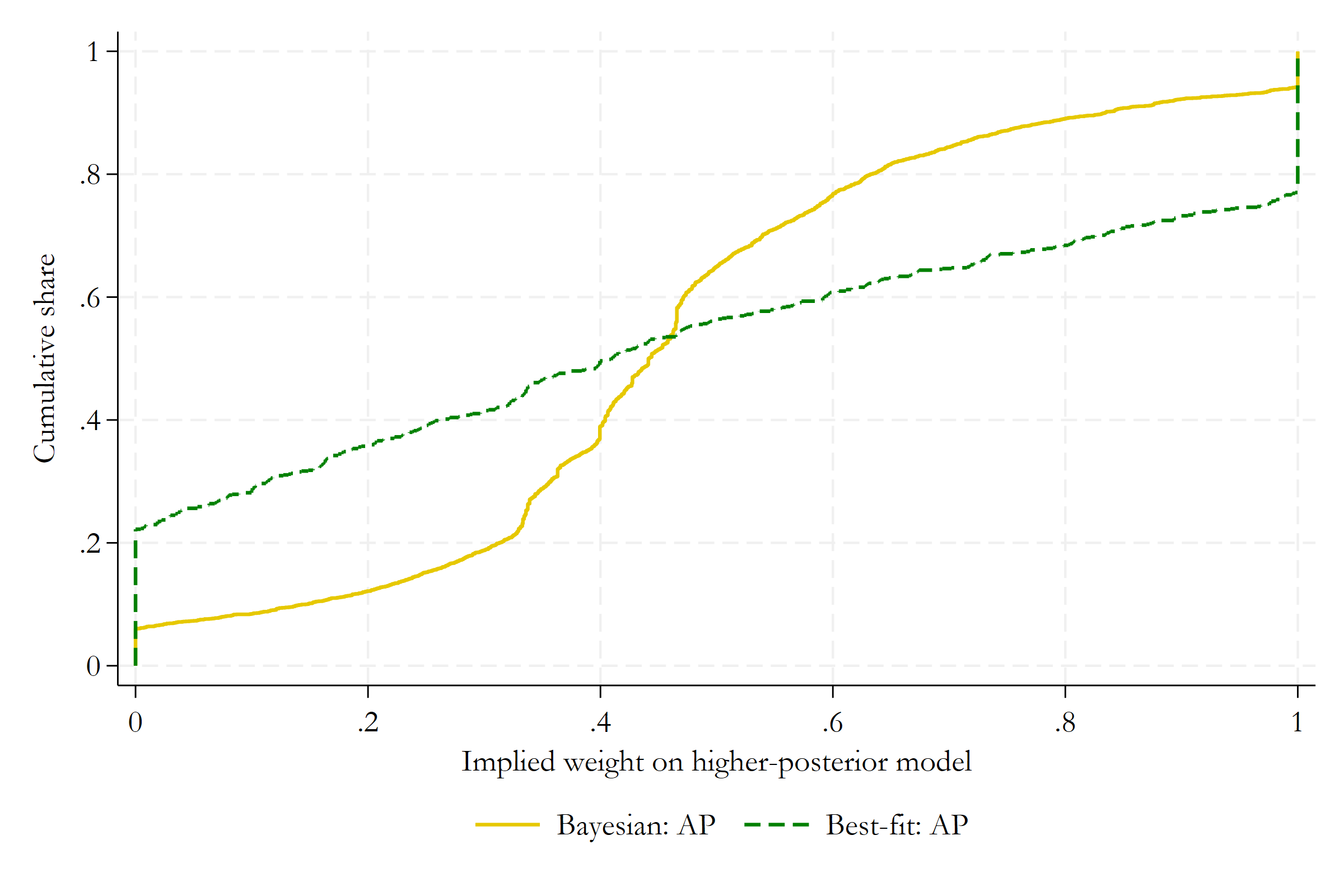}
\begin{minipage}{0.88\linewidth}\footnotesize
\textit{Notes:} The curves show cumulative distributions of weights on the higher model-specific posterior using beliefs from Decision Problems 1--4 and 6--9 in the \textit{AP} block. The Bayesian and Best-fit samples contain 3,168 beliefs from 132 participants and 792 beliefs from 33 participants, respectively. Weights below zero are set to zero, and weights above one are set to one. Each belief receives equal weight. The masses at zero and one include beliefs outside the interval between the two model-specific posteriors.
\end{minipage}
\caption{Report-level Implied Weights in the Asymmetric Payoff Block}
\label{fig:weights_capped_bp}
\end{figure}
\FloatBarrier

\newpage

\renewcommand{\thefigure}{\Alph{section}-\arabic{figure}}
\setcounter{figure}{0}
\renewcommand{\thetable}{\Alph{section}-\arabic{table}}
\setcounter{table}{0}

\section{Follow-up Study}\label{sec:robust}

The follow-up study was conducted at Purdue University in October 2025 with 85 participants. It is separately registered in the AEA RCT Registry (\href{https://doi.org/10.1257/rct.16882-1.0}{RCT 16882}). Participants completed the symmetric payoff block and supplementary tasks in a session lasting approximately 35 minutes. The belief-updating interface additionally displayed the correct posterior implied by each individual model in every scenario \citep{aina2025weighting}. This change reduces the computational demands of within-model updating while retaining the comparison between Bayesian averaging and Best-fit selection.

Table~\ref{tab:comparison} compares the shares of reports near each benchmark in the main and follow-up studies. Beliefs continue to cluster around both benchmarks when individual-model posteriors are supplied. The follow-up share near the Bayesian benchmark is lower, while the share near the Best-fit benchmark is higher. More reports fall within each window around the Bayesian benchmark than around the Best-fit benchmark. Because the studies involve different cohorts, these comparisons describe the persistence of the behavioral patterns across studies.

\vspace{0.5cm}

\begin{table}[htbp]
\centering
\begin{threeparttable}
\begin{tabular}{@{}c@{}}
\begin{adjustbox}{max width=\linewidth}
\begin{tabular}{lccccc}
\toprule
\multirow{2}{*}{Study} & \multirow{2}{*}{Type of Guess} & \multicolumn{2}{c}{Within 2 p.p.} & \multicolumn{2}{c}{Within 5 p.p.} \\
\cmidrule{3-4} \cmidrule{5-6}
 &  & \% & 95\%-CI & \% & 95\%-CI \\
\midrule
\multirow{2}{*}{Main Study} & Bayesian & 28.98 & [25.29, 32.67] & 47.75 & [43.21, 52.28] \\
 & Best-fit & 5.51 & [4.47,  6.54] & 12.15 & [10.35, 13.94] \\
\midrule
\multirow{2}{*}{Follow-up Study} & Bayesian & 20.65 & [16.75, 24.56] & 36.17 & [30.88, 41.45] \\
 & Best-fit & 10.54 & [7.29, 13.80] & 17.25 & [13.00, 21.51] \\
\bottomrule
\end{tabular}

\end{adjustbox}
\end{tabular}
\begin{tablenotes}\footnotesize
\item \textit{Notes:} The table reports the percentage of beliefs within the indicated distance of each benchmark in the main and follow-up studies. The samples contain 5,211 beliefs from 193 main-study participants and 2,295 beliefs from 85 follow-up participants, all from the \textit{SP} block. The 95\% confidence intervals are based on standard errors calculated from participant-level shares. The follow-up interface displays the posterior under each individual model.
\end{tablenotes}
\end{threeparttable}
\caption{Classification of Round-level Beliefs Across Studies}
\label{tab:comparison}
\end{table}

\clearpage

\begin{landscape}

\renewcommand{\headrulewidth}{0pt}
 \fancypagestyle{landscape}{ 
        \fancyhf{}
    \fancyfoot[C]{%
            \vspace*{-15.7cm} 
            \makebox[\textwidth][c]{\hspace{19cm}\rotatebox{90}{\thepage}} 
        }
    }                        \pagestyle{landscape}

\renewcommand{\thefigure}{\Alph{section}-\arabic{figure}}
\setcounter{figure}{0}
\renewcommand{\thetable}{\Alph{section}-\arabic{table}}
\setcounter{table}{0}

\section{Experimental Instructions and Interface}\label{app:instructions} 

\newcommand{\pagenote}[1]{\vfill\noindent\makebox[1.5\textwidth]{%
    \begin{minipage}{1.2\textwidth}\footnotesize\setstretch{1}%
    \textit{Notes:} #1\end{minipage}}}

\begin{center}
    \includegraphics[width=1.35\textwidth]{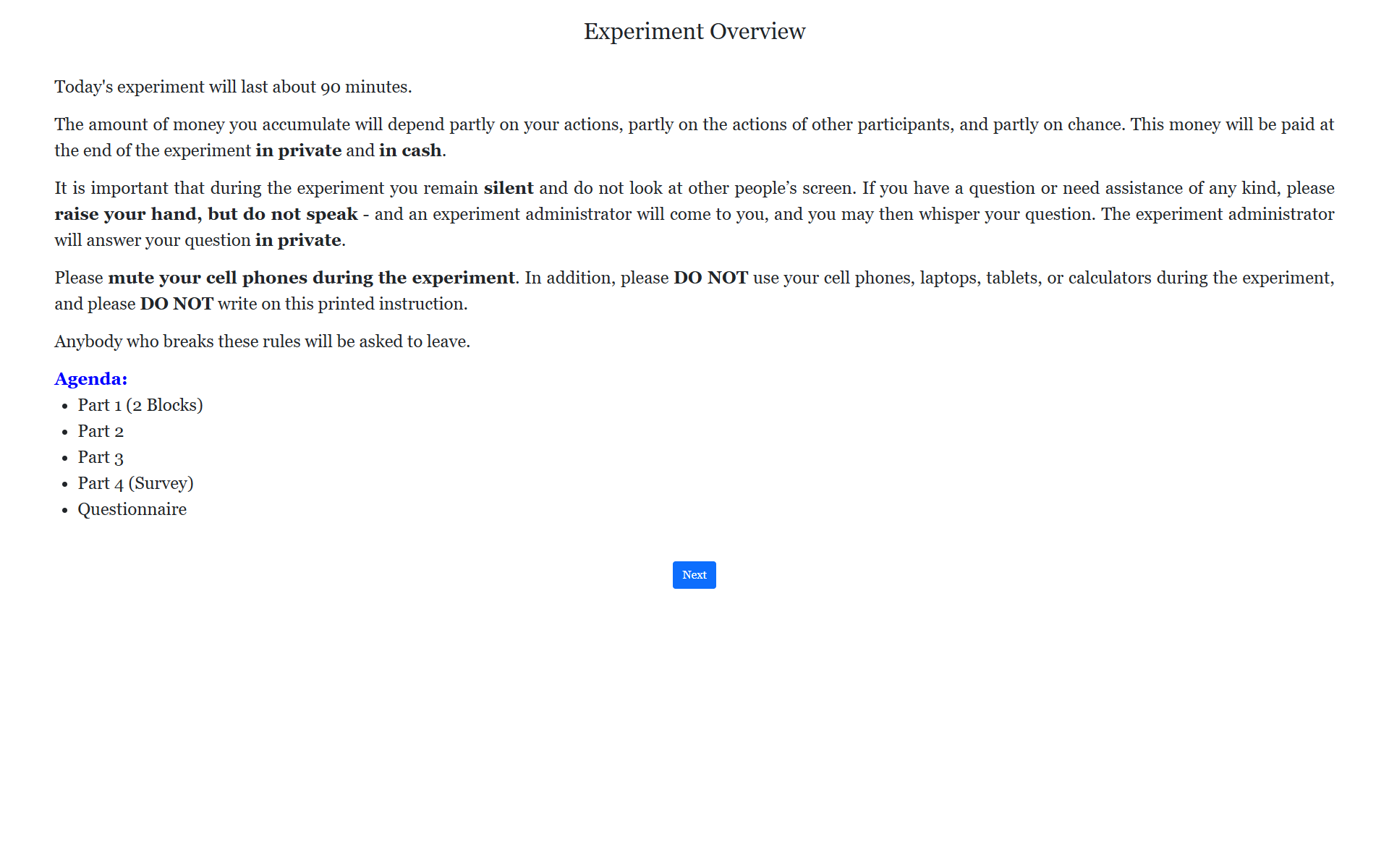}
\vspace{-4em}
\captionof{figure}{Experiment Overview}
\label{fig:instruction_page}
\end{center}

\newpage

\begin{center}
    \includegraphics[width=1.35\textwidth]{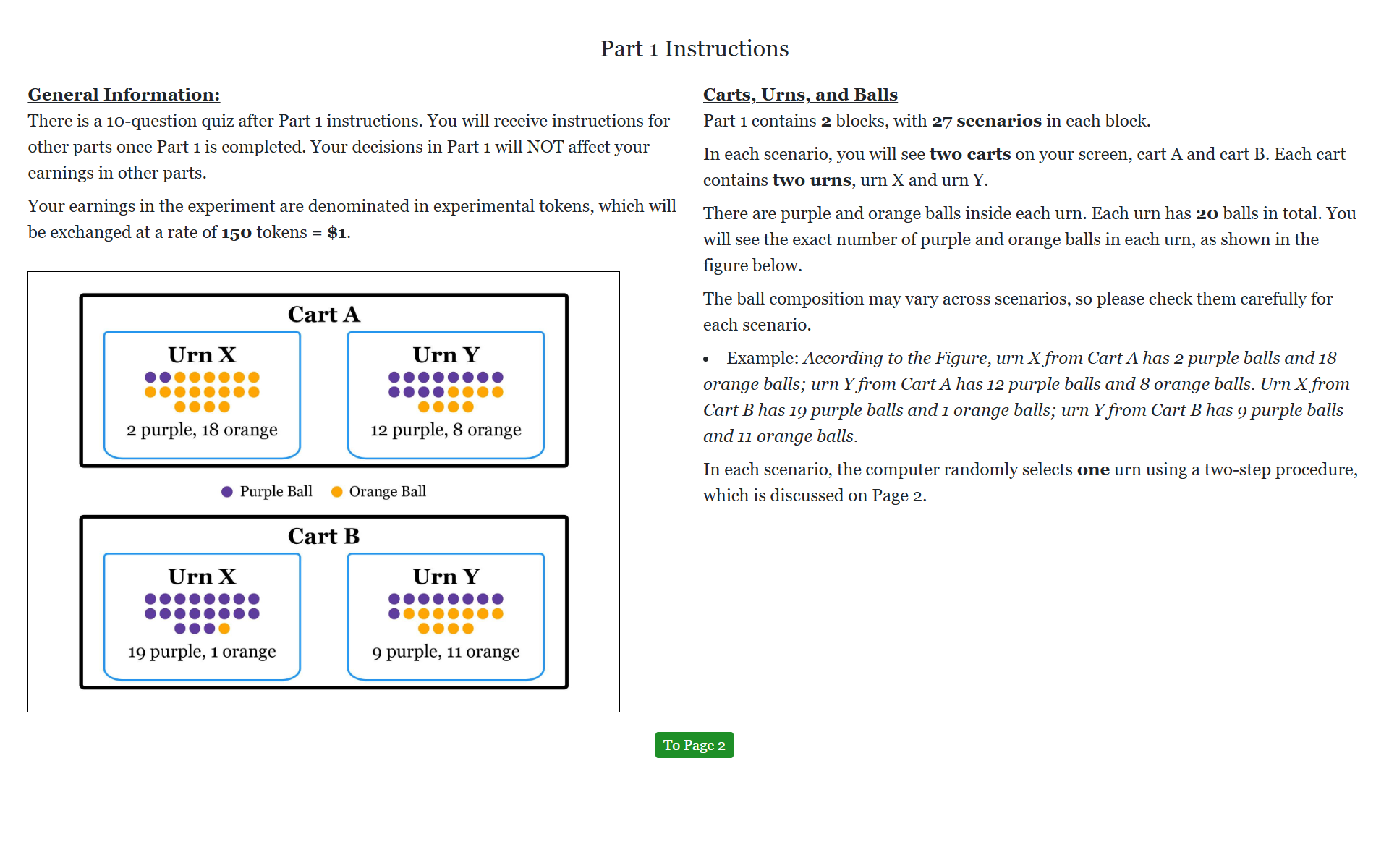}
\vspace{-2em}
\captionof{figure}{Part 1 Instruction P1} 
\label{fig:instruction_page_2}
\end{center}

\newpage

\begin{center}
    \includegraphics[width=1.35\textwidth]{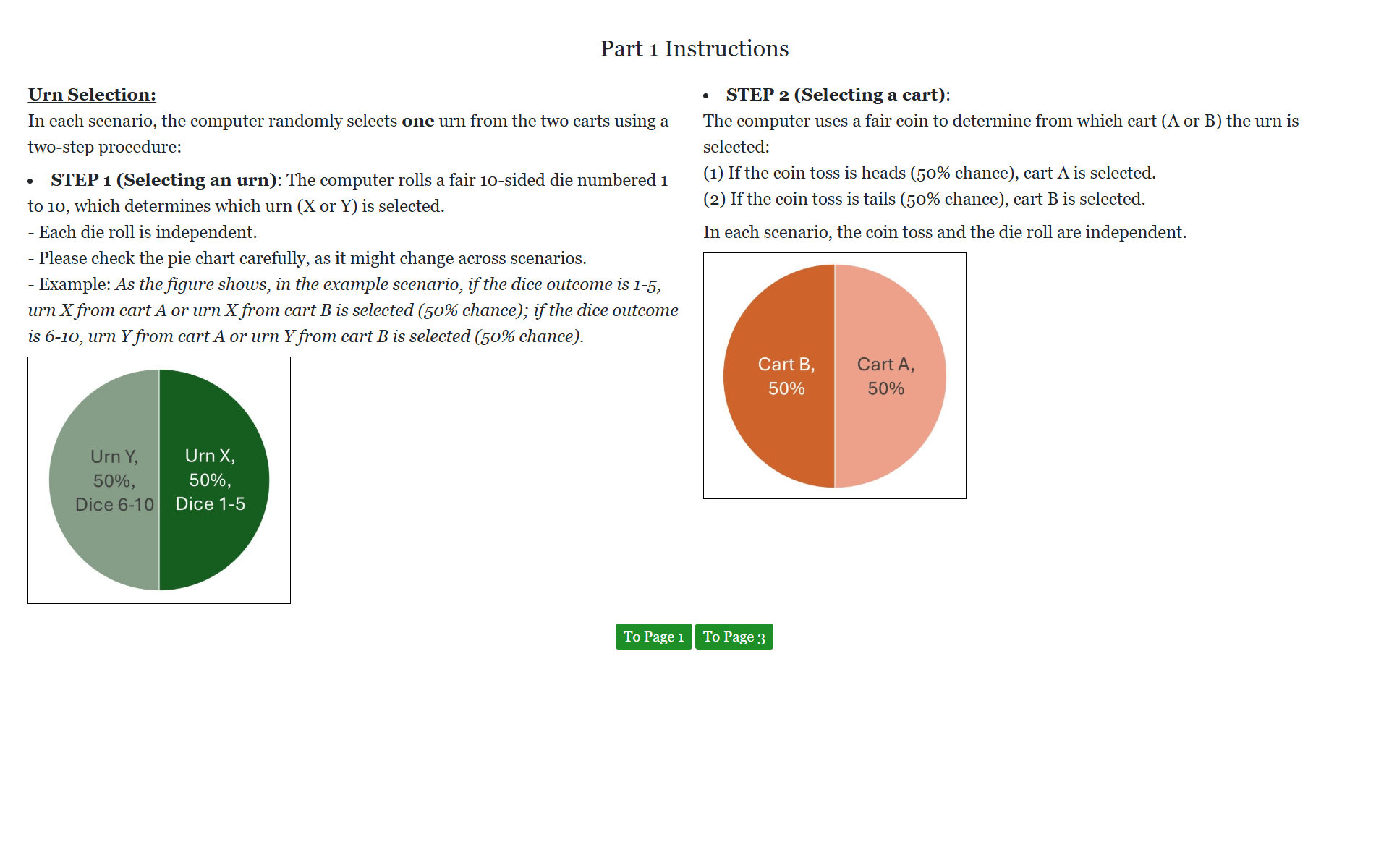}
\vspace{-2em}
\captionof{figure}{Part 1 Instruction P2} 
\label{fig:instruction_page_3}
\end{center}

\newpage

\begin{center}
    \includegraphics[width=1.35\textwidth]{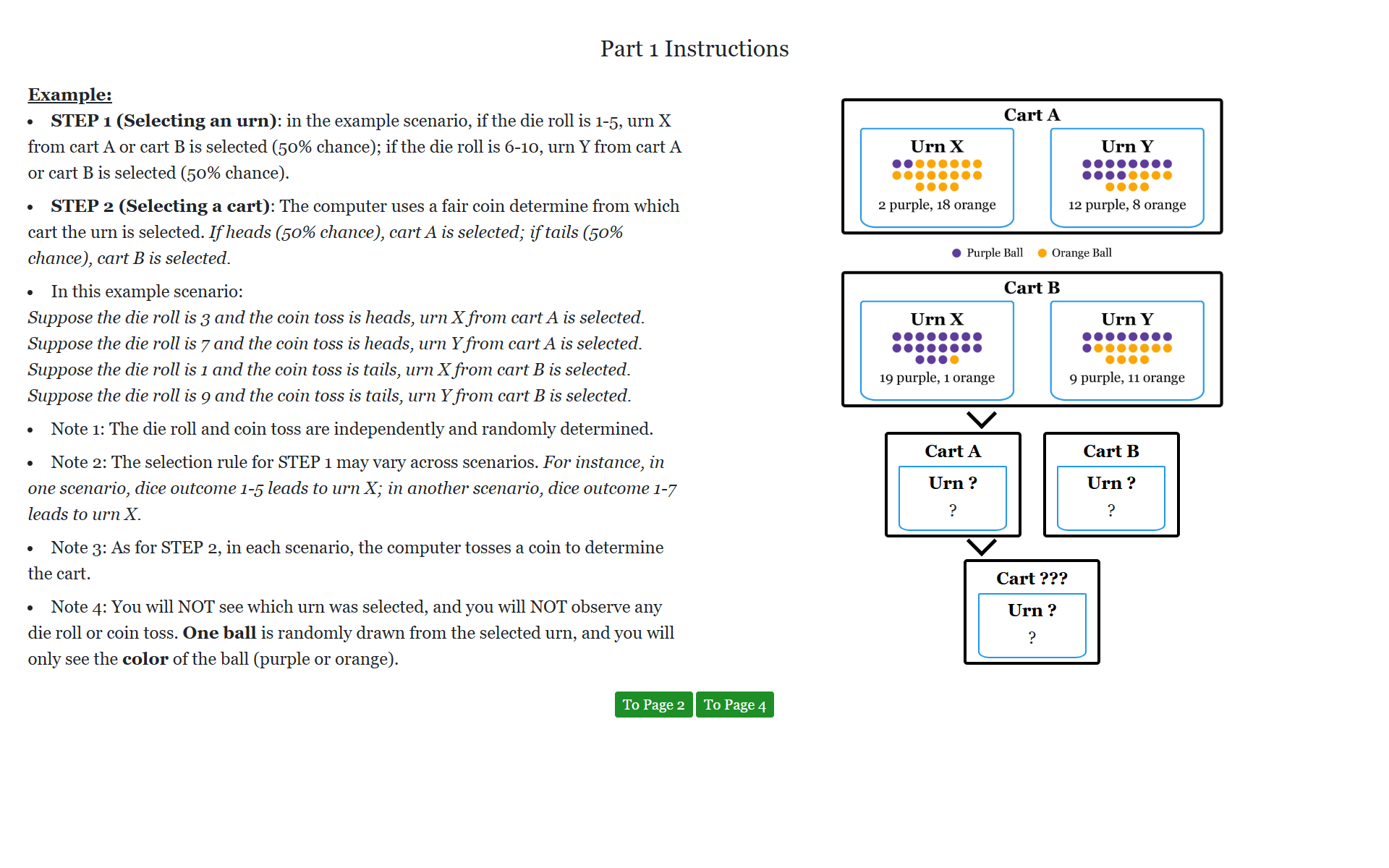}
\vspace{-2em}
\captionof{figure}{Part 1 Instruction P3} 
\label{fig:instruction_page_4}
\end{center}

\newpage

\begin{center}
    \includegraphics[width=1.35\textwidth]{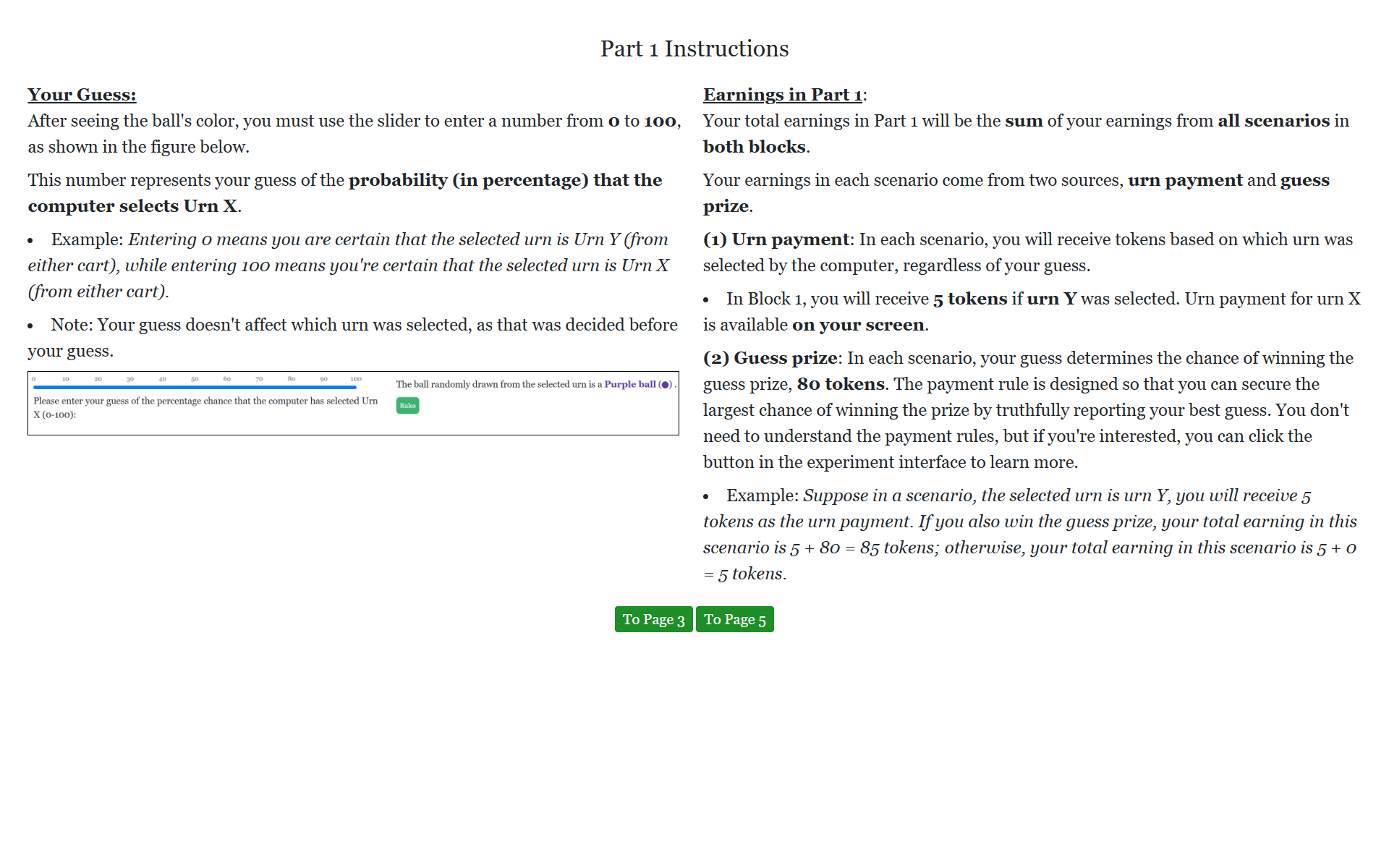}
\vspace{-2em}
\captionof{figure}{Part 1 Instruction P4} 
\label{fig:instruction_page_5}
\end{center}

\newpage

\begin{center}
    \includegraphics[width=1.35\textwidth]{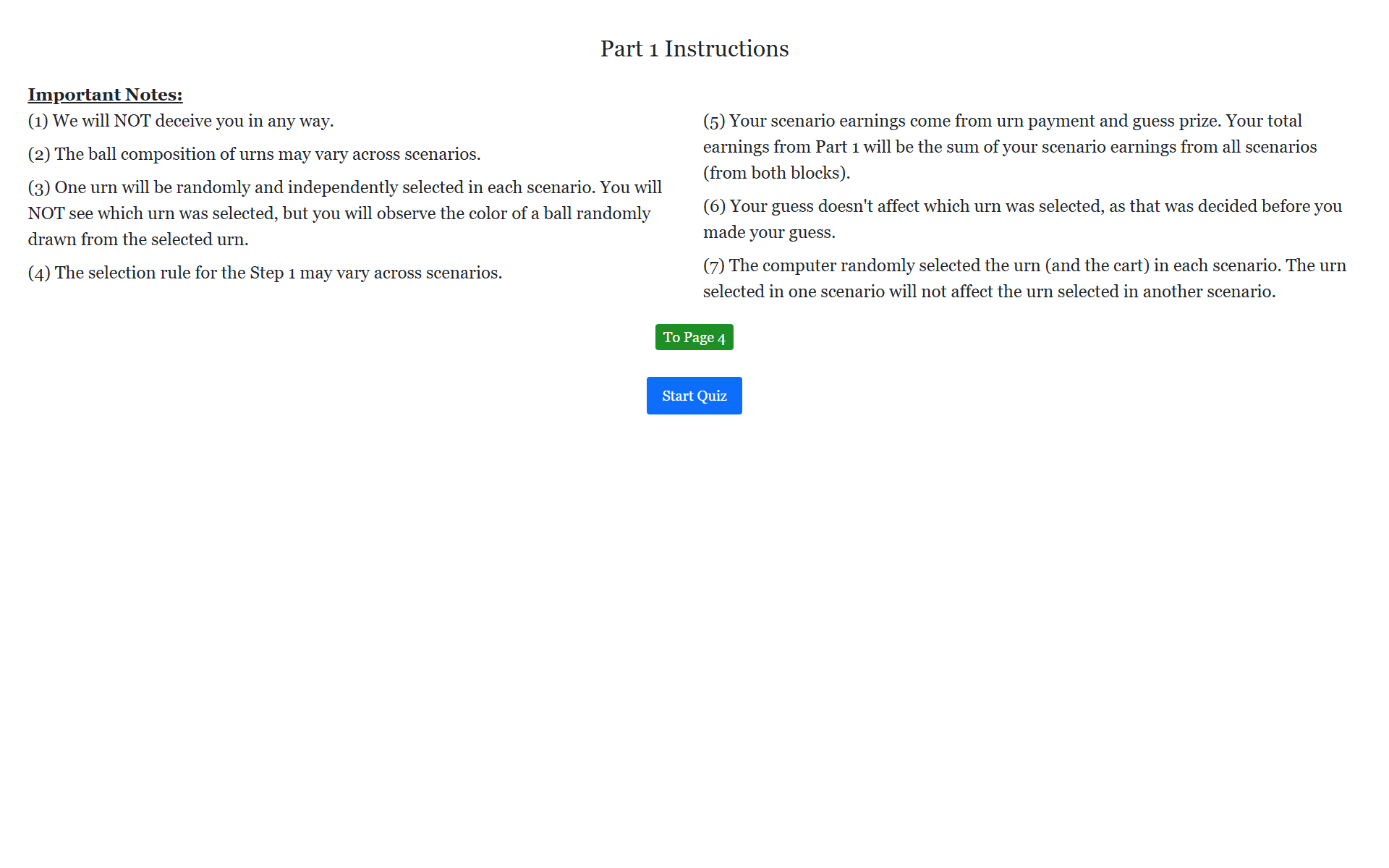}
\vspace{-8em}
\captionof{figure}{Part 1 Instruction P5} 
\label{fig:instruction_page_6}
\end{center}

\newpage

\begin{center}
    \includegraphics[width=1.35\textwidth]{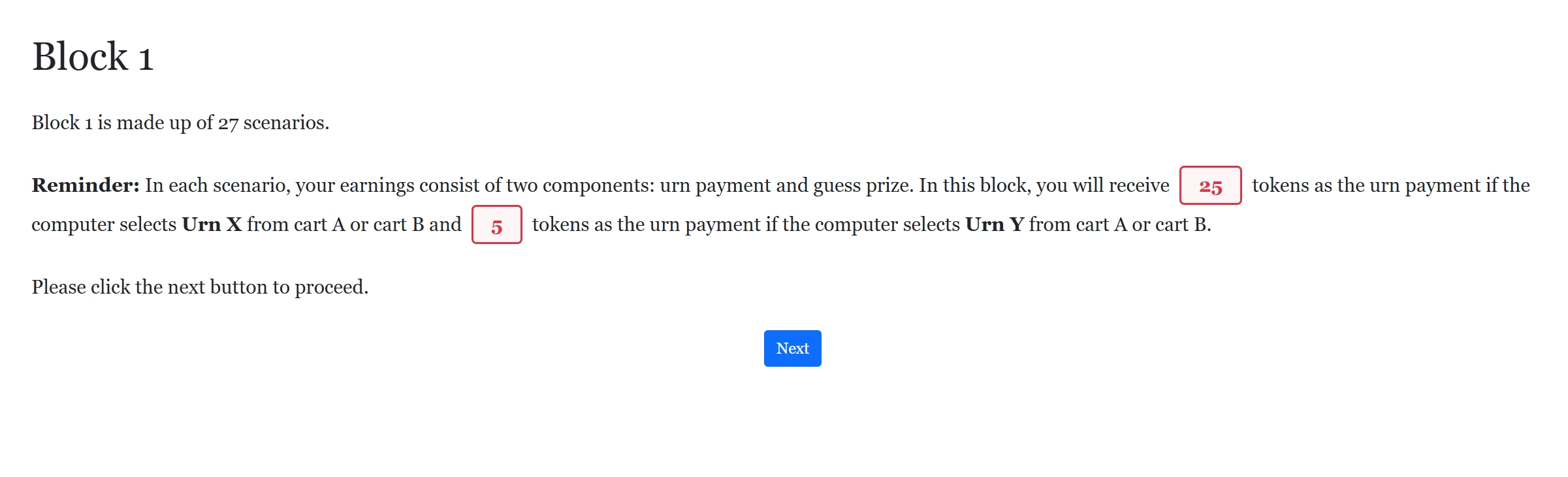}
\vspace{-2em}
\captionof{figure}{Part 1 Block 1 Overview} 
\label{fig:instruction_page_7}
\end{center}
\pagenote{Overview screen at the start of Block 1. The page initially has no Next button. Once all participants have arrived, the experimenter advances the session. The Next button then appears, and participants proceed to the first task.}

\newpage

\begin{center}
    \includegraphics[width=1.35\textwidth]{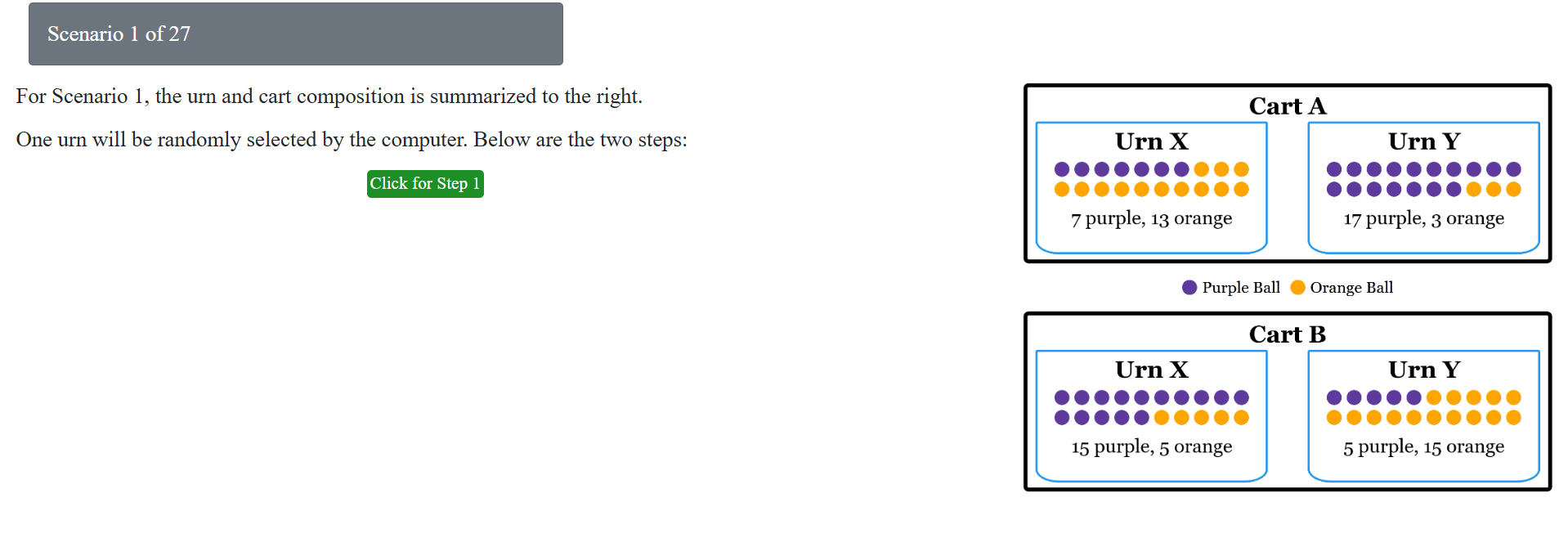}
\vspace{0em}
\captionof{figure}{Part 1 Example Task} 
\label{fig:instruction_page_8}
\end{center}

\newpage

\begin{center}
    \includegraphics[width=1.35\textwidth]{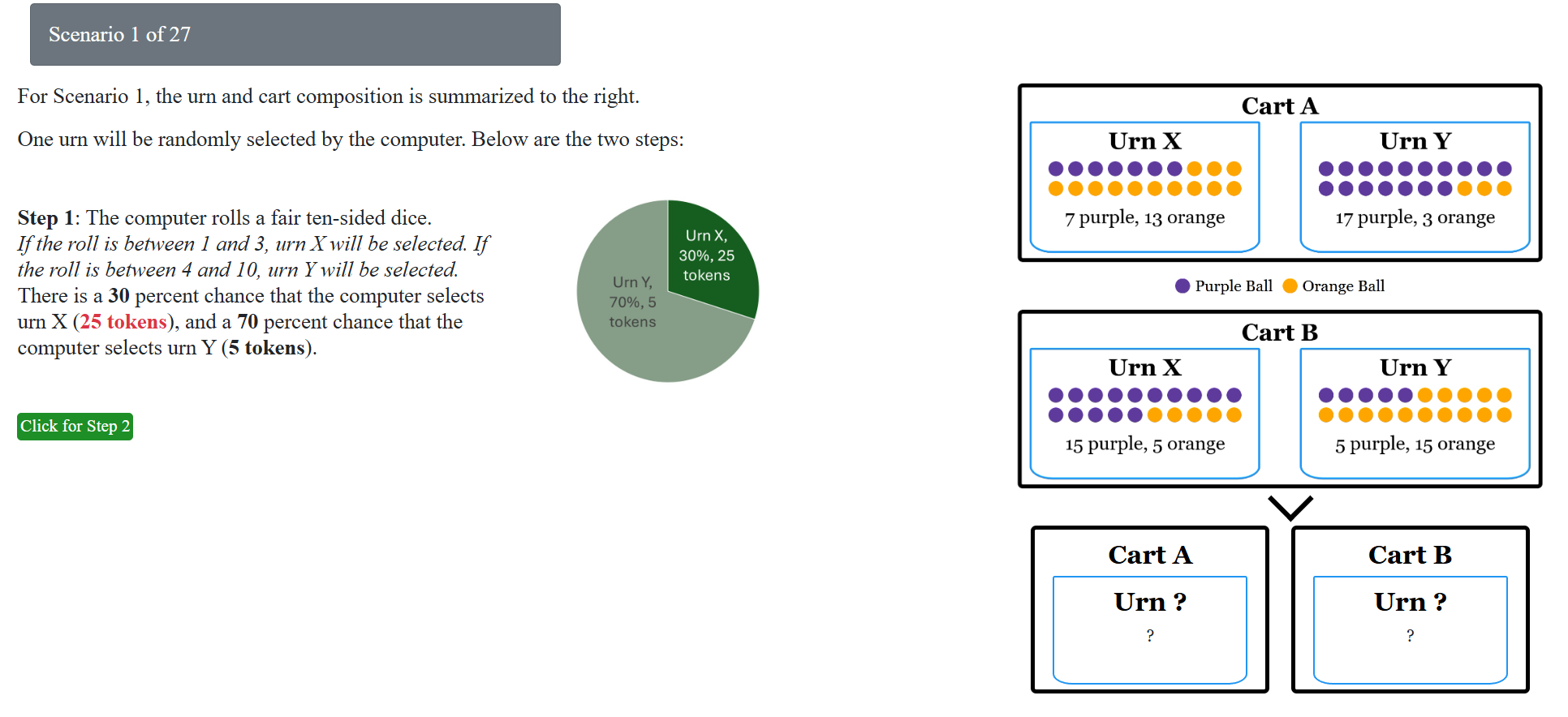}
\vspace{0em}
\captionof{figure}{Part 1 Example Task} 
\label{fig:instruction_page_9}
\end{center}

\newpage

\begin{center}
    \includegraphics[width=1.35\textwidth]{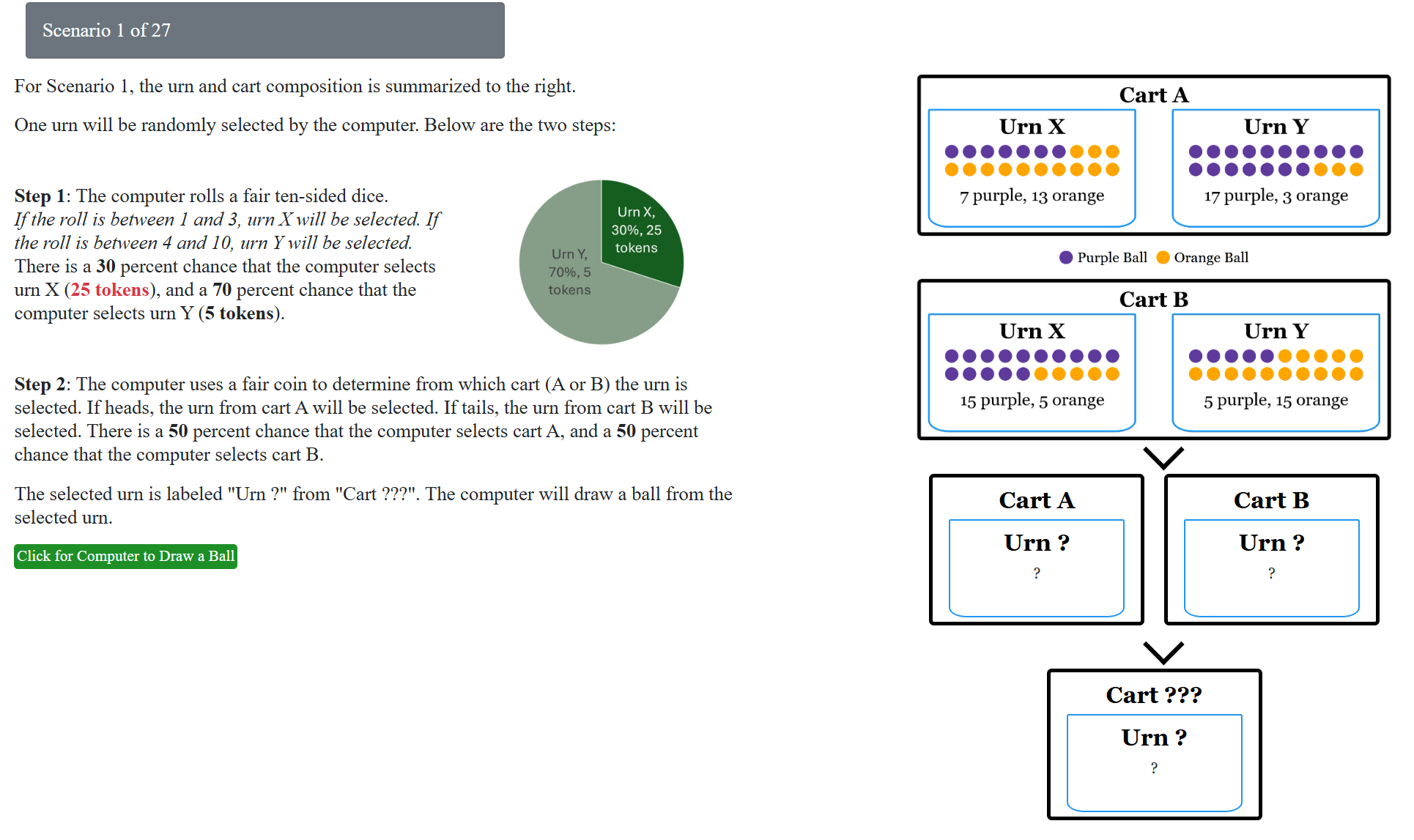}
\vspace{0em}
\captionof{figure}{Part 1 Example Task} 
\label{fig:instruction_page_10}
\end{center}

\newpage

\begin{center}
    \includegraphics[width=1.35\textwidth]{Files/UpdatingTask_04.png}
\vspace{0em}
\captionof{figure}{Part 1 Example Task} 
\label{fig:instruction_page_11}
\end{center}

\newpage

\begin{center}
    \includegraphics[width=1.35\textwidth]{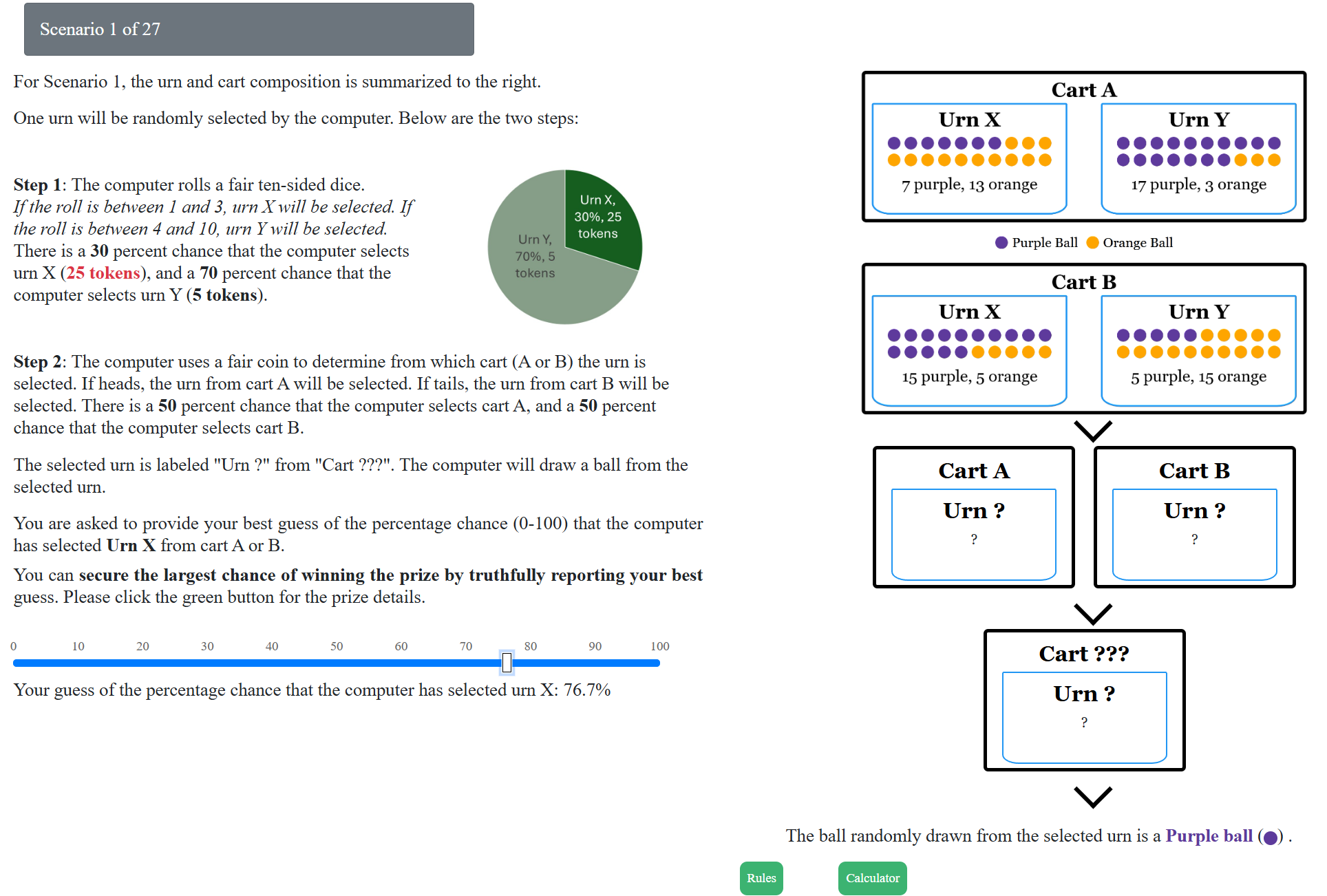}
\vspace{0em}
\captionof{figure}{Part 1 Example Task} 
\label{fig:instruction_page_12}
\end{center}

\newpage

\begin{center}
    \includegraphics[width=1.35\textwidth]{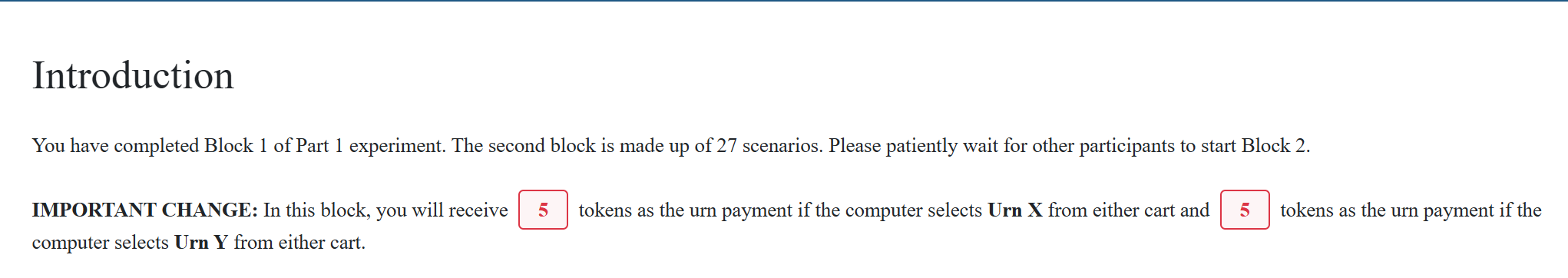}
\vspace{0em}
\captionof{figure}{Part 1 Block 2 Overview} 
\label{fig:instruction_page_13}
\end{center}
\pagenote{Overview screen at the start of Block 2. The page initially has no Next button. Once all participants have arrived, the experimenter advances the session. The Next button then appears, and participants proceed to the first task.}

\newpage

\begin{center}
    \includegraphics[width=1.35\textwidth]{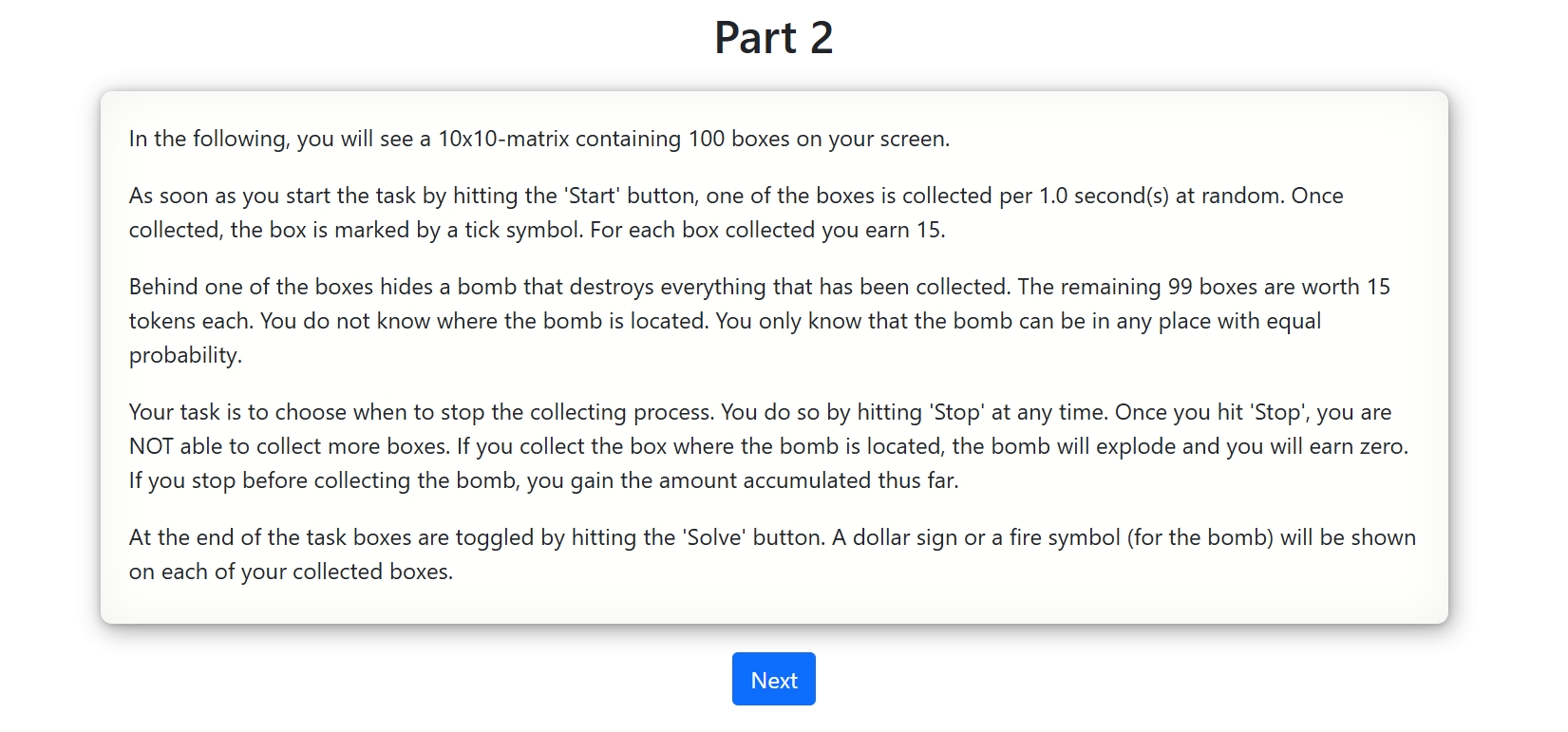}
\vspace{0em}
\captionof{figure}{BRET Task Instructions} 
\label{fig:instruction_page_14}
\end{center}

\newpage

\begin{center}
    \includegraphics[width=1.05\textwidth]{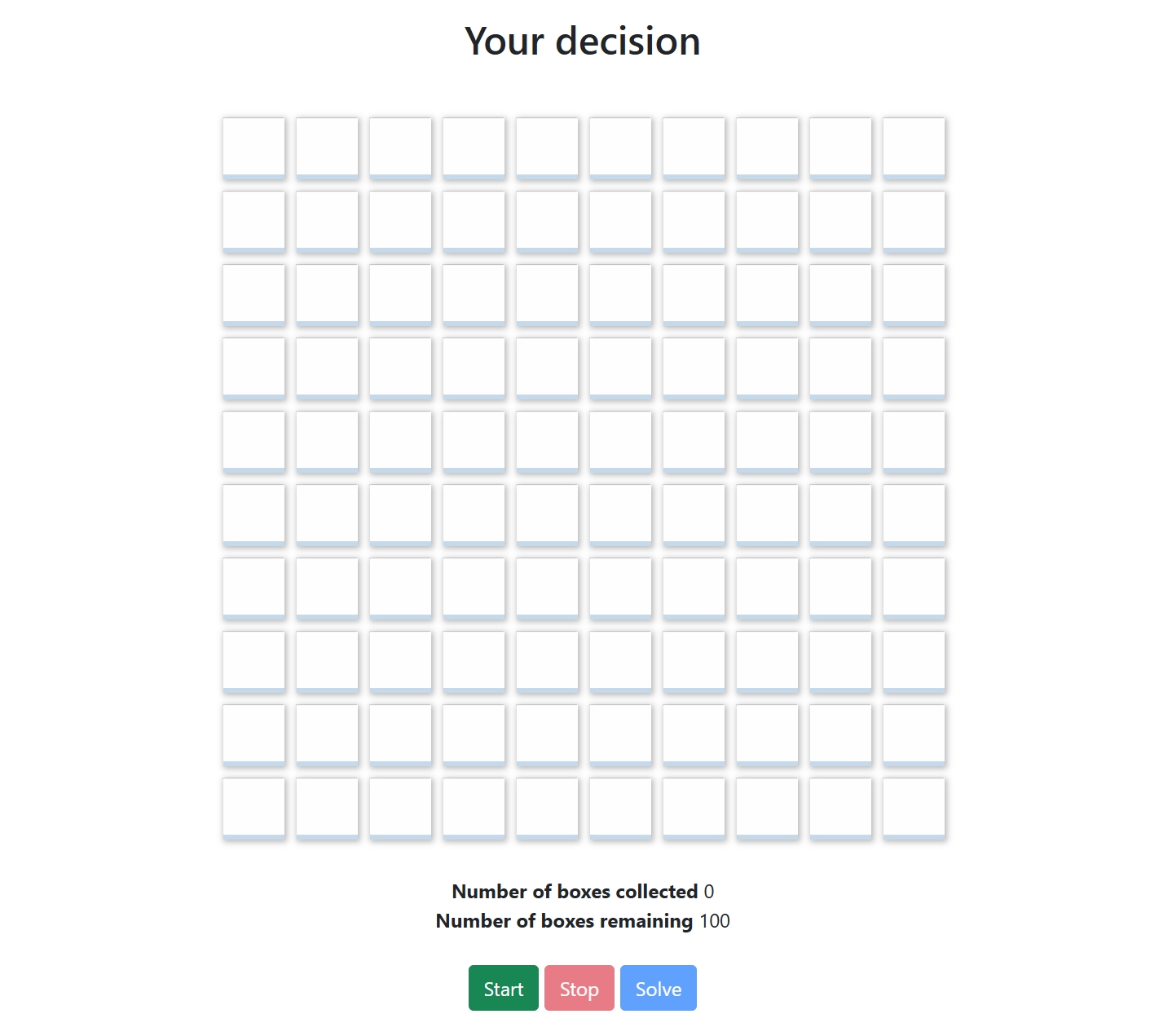}
\vspace{0em}
\captionof{figure}{BRET Task Interface}
\label{fig:instruction_page_15}
\end{center}

\newpage

\begin{center}
    \includegraphics[width=1.35\textwidth]{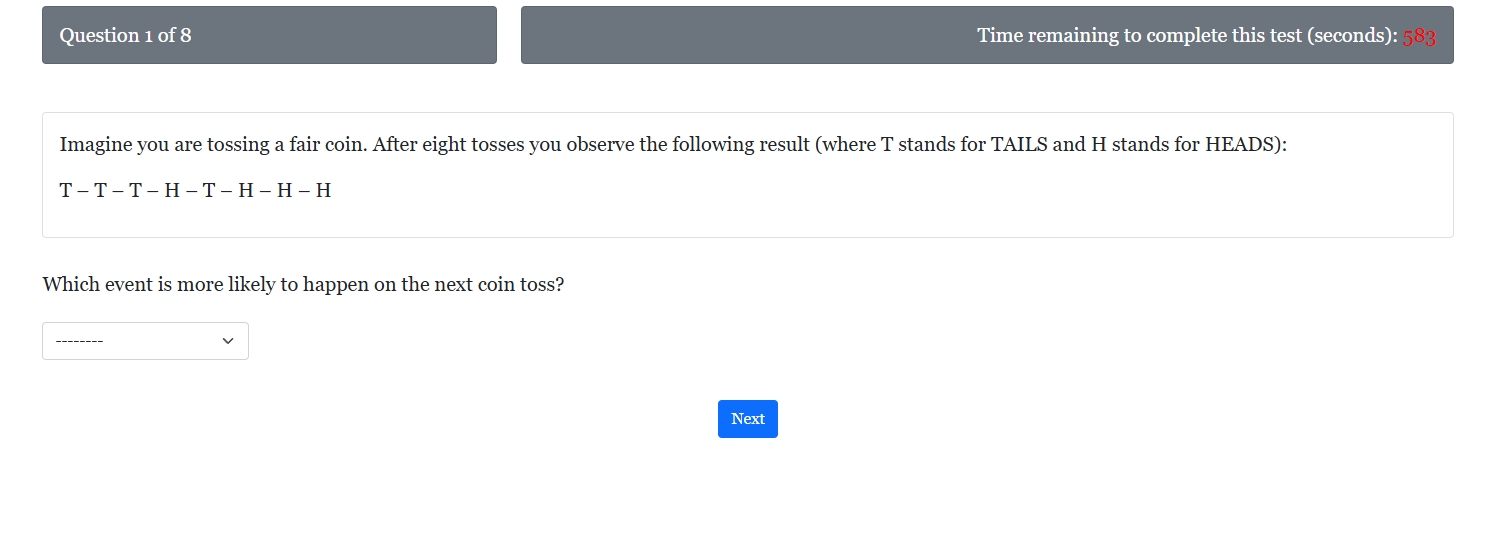}
\vspace{0em}
\captionof{figure}{Cognitive Test 1} 
\label{fig:instruction_page_16}
\end{center}

\newpage

\begin{center}
    \includegraphics[width=1.35\textwidth]{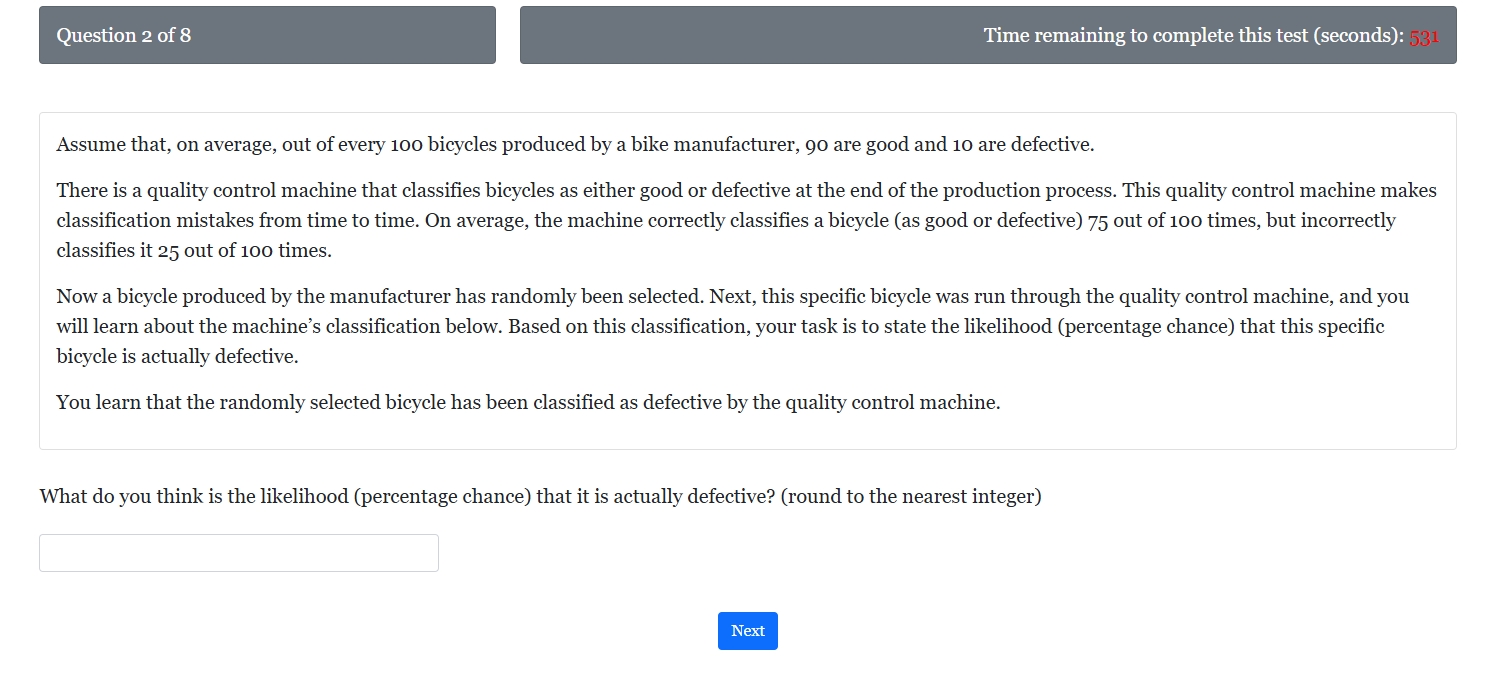}
\vspace{0em}
\captionof{figure}{Cognitive Test 2} 
\label{fig:instruction_page_17}
\end{center}

\newpage

\begin{center}
    \includegraphics[width=1.35\textwidth]{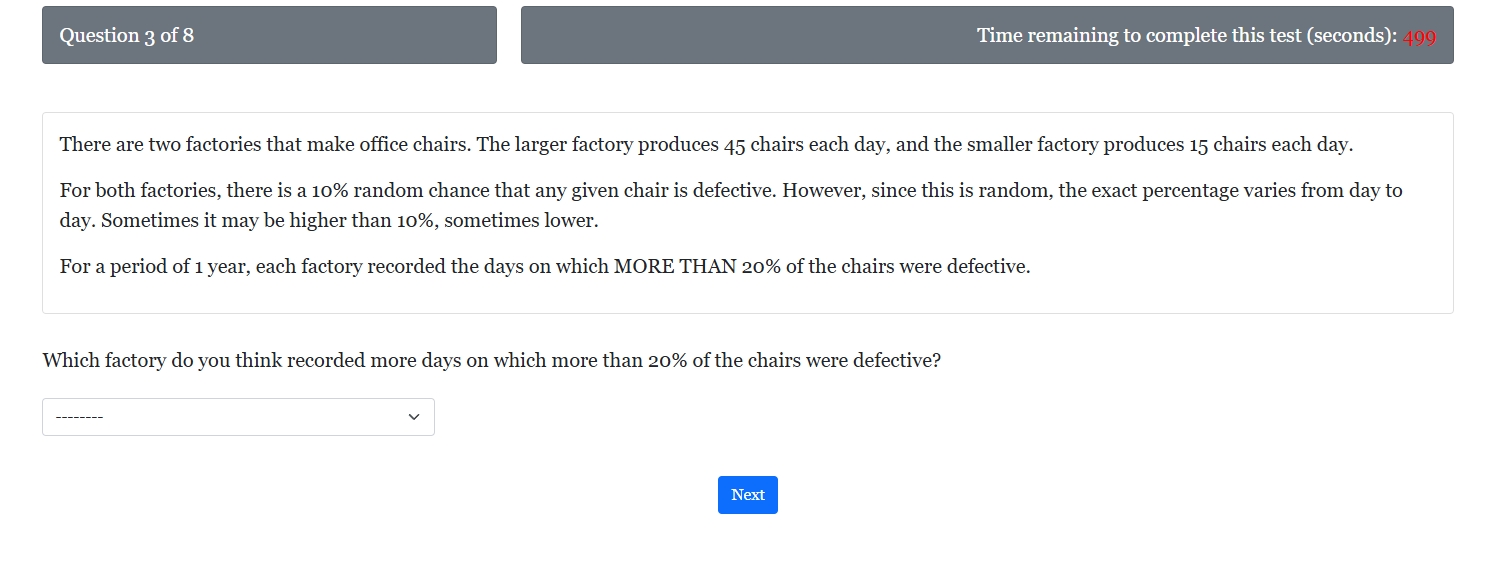}
\vspace{0em}
\captionof{figure}{Cognitive Test 3} 
\label{fig:instruction_page_18}
\end{center}

\newpage

\begin{center}
    \includegraphics[width=1.1\textwidth]{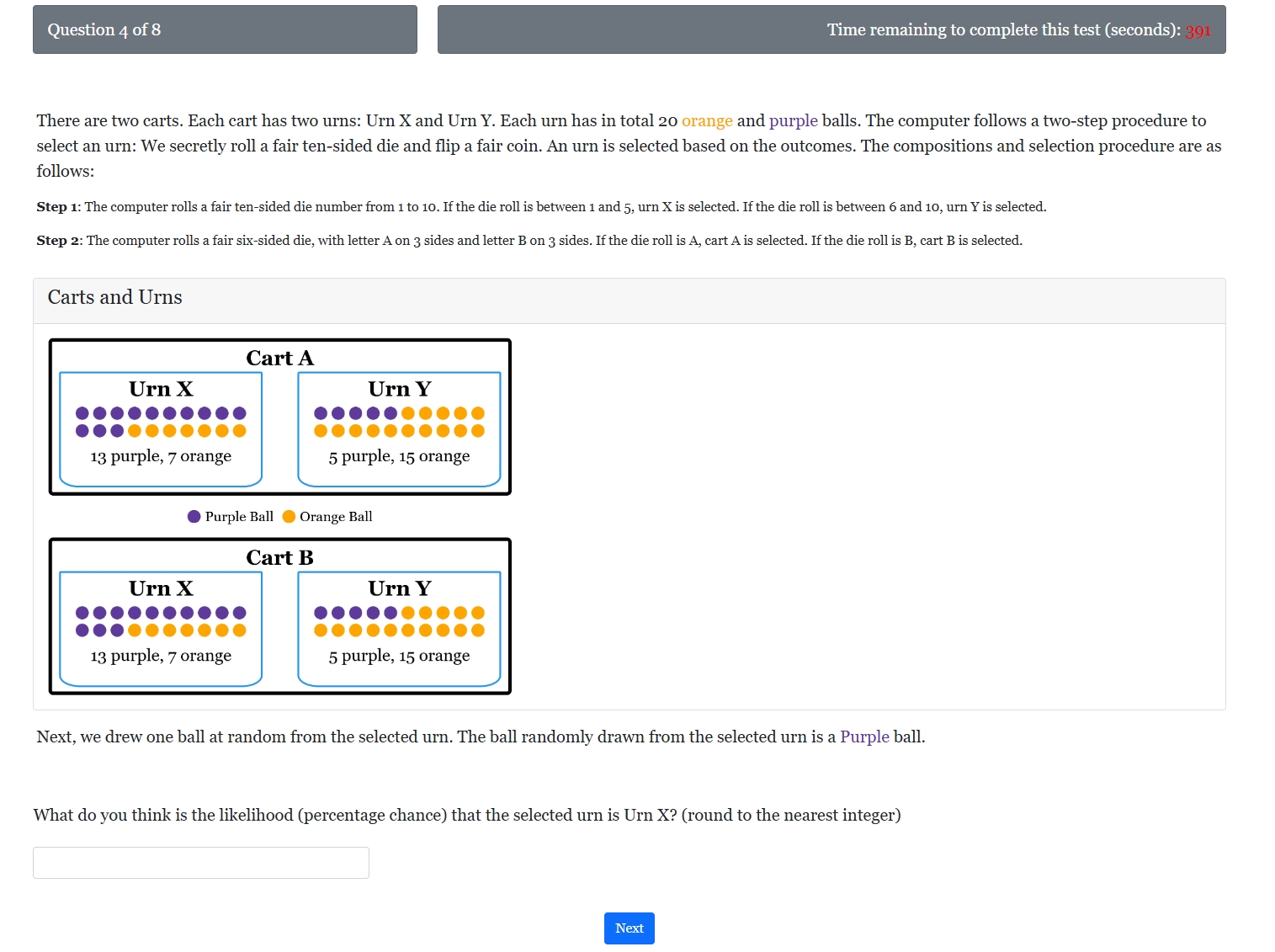}
\vspace{0em}
\captionof{figure}{Cognitive Test 4} 
\label{fig:instruction_page_19}
\end{center}

\newpage

\begin{center}
    \includegraphics[width=1.35\textwidth]{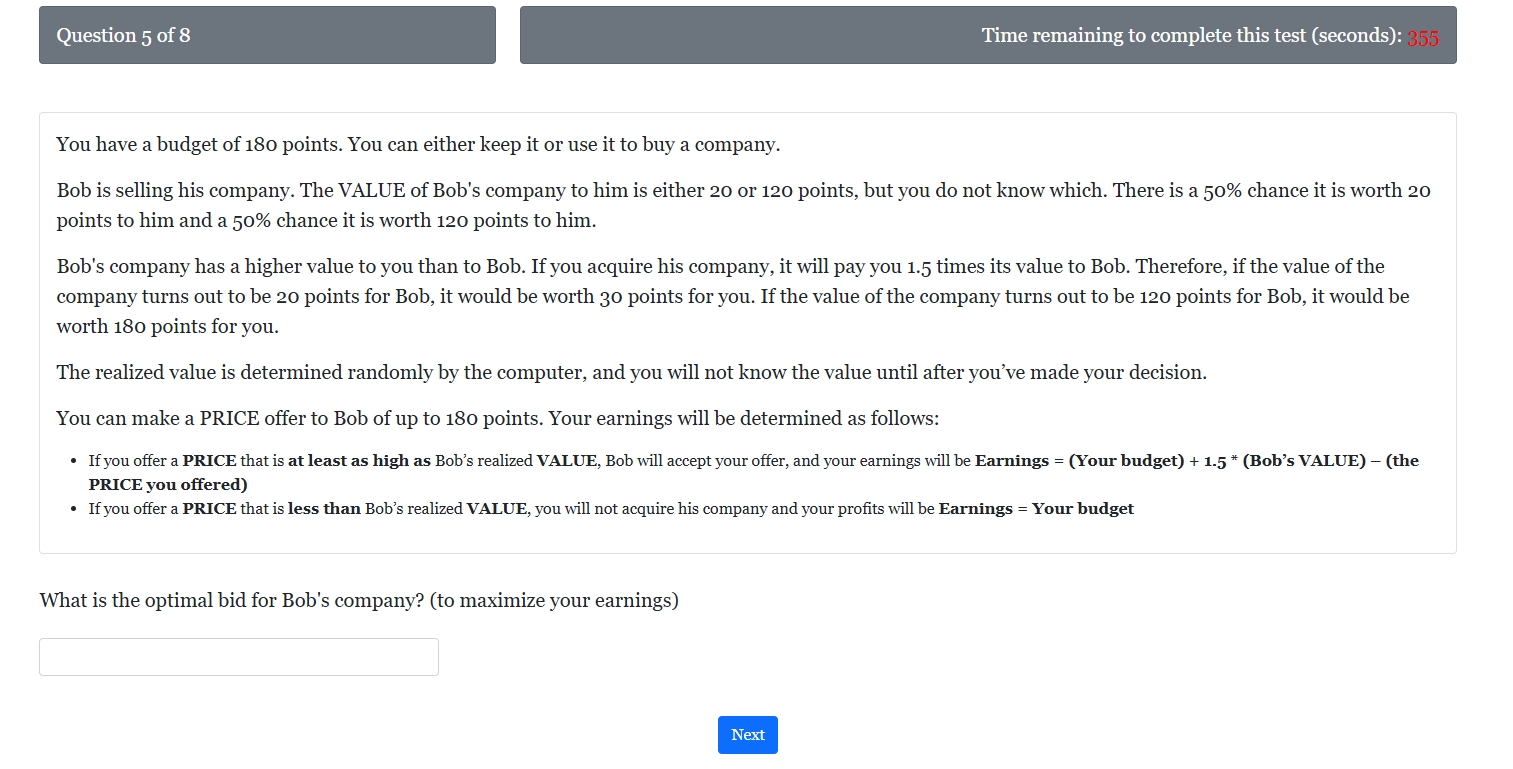}
\vspace{0em}
\captionof{figure}{Cognitive Test 5} 
\label{fig:instruction_page_20}
\end{center}

\newpage

\begin{center}
    \includegraphics[width=1.35\textwidth]{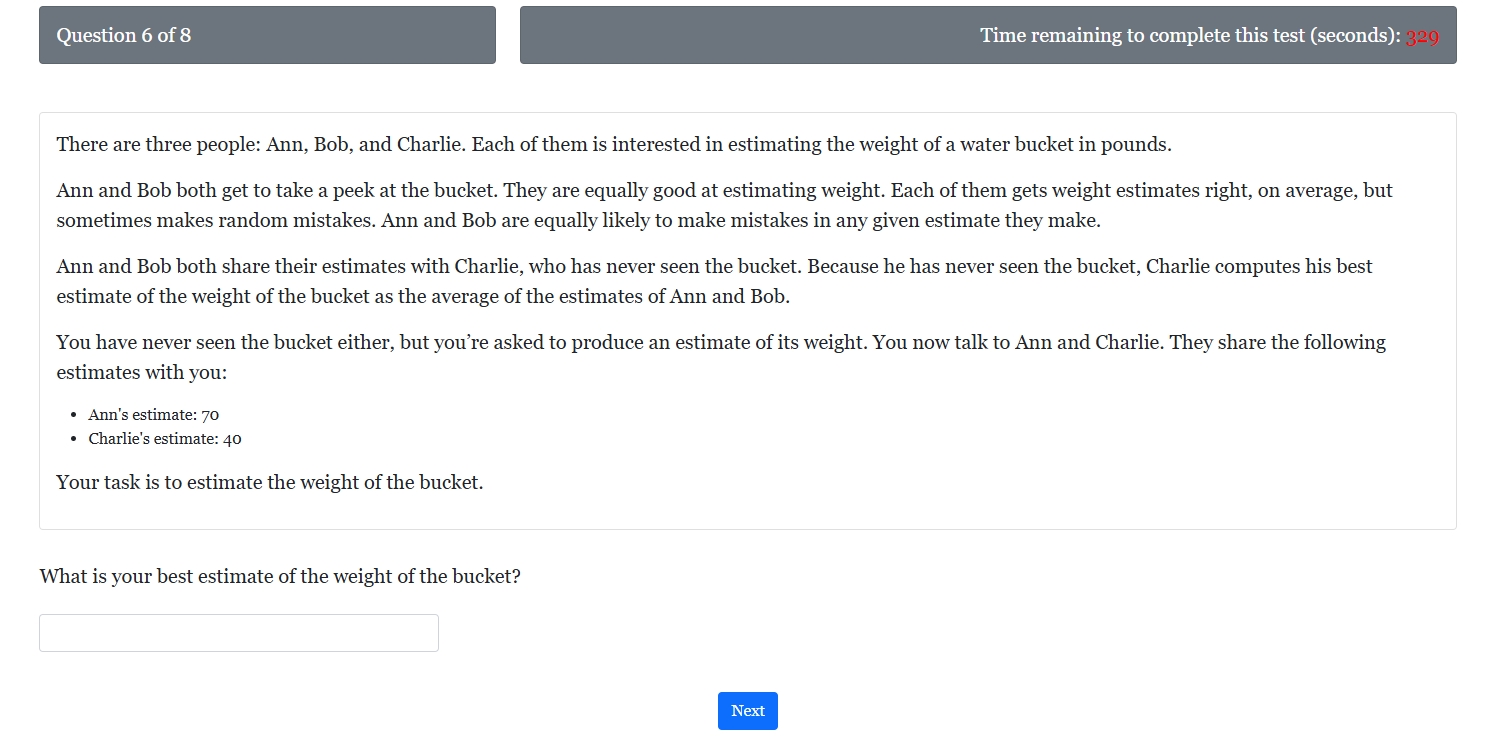}
\vspace{0em}
\captionof{figure}{Cognitive Test 6} 
\label{fig:instruction_page_21}
\end{center}

\newpage

\begin{center}
    \includegraphics[width=1.35\textwidth]{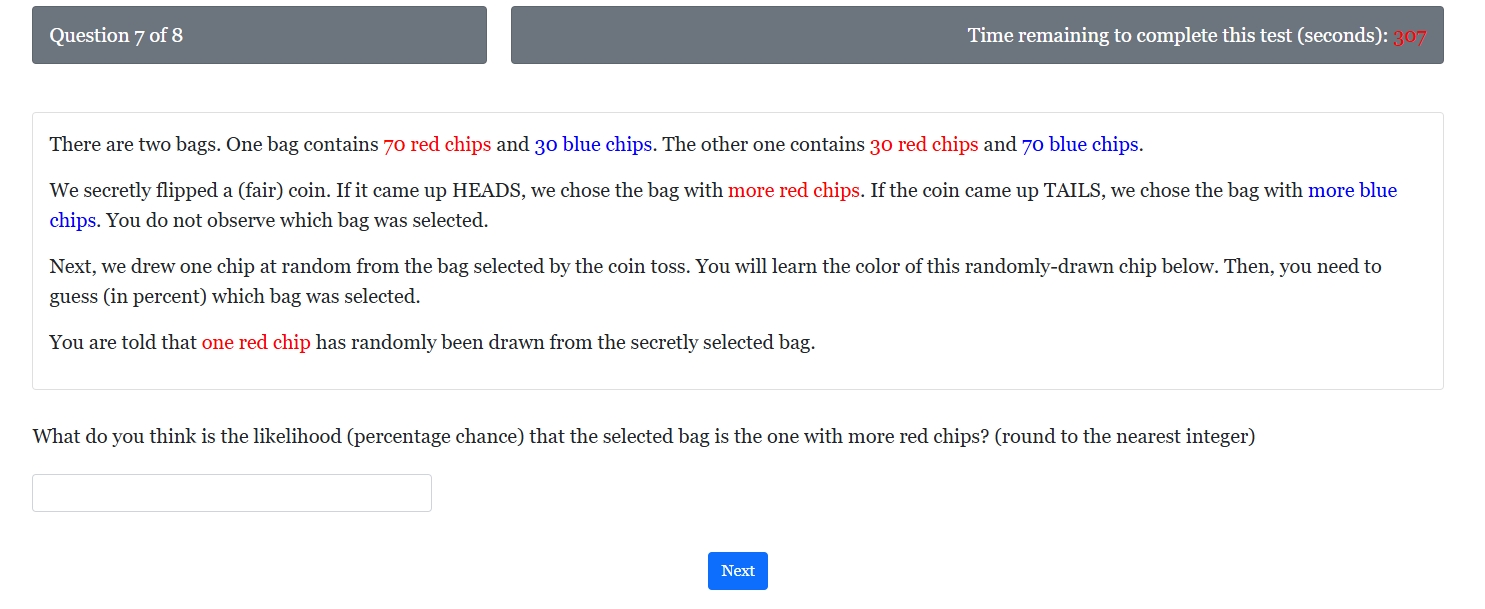}
\vspace{0em}
\captionof{figure}{Cognitive Test 7} 
\label{fig:instruction_page_22}
\end{center}

\newpage

\begin{center}
    \includegraphics[width=1.35\textwidth]{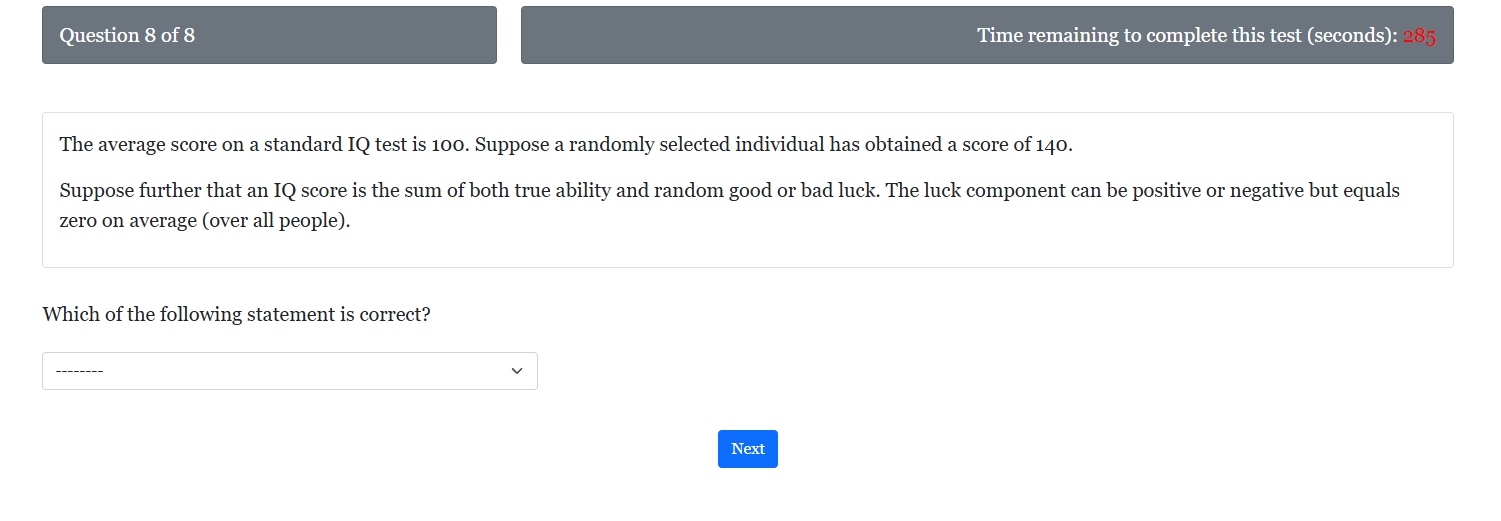}
\vspace{0em}
\captionof{figure}{Cognitive Test 8} 
\label{fig:instruction_page_23}
\end{center}

\end{landscape}

\end{document}